\newtheorem{theorem}{Theorem}
\newtheorem{lemma}[theorem]{Lemma}
\newtheorem{corollary}[theorem]{Corollary}
\newtheorem{proposition}[theorem]{Proposition}
\newtheorem{definition}{Definition}
\newtheorem{example}{Example}
\newtheorem{assumption}{Assumption}
\newcolumntype{C}{>{\centering\arraybackslash}m{0.12\columnwidth}}
\newcommand{\cmark}{\ding{51}}%
\newcommand{\xmark}{\ding{55}}%
\newcommand{\Proj}{\mathcal{P}}
\newcommand{\C}{\mathcal{C}}
\newcommand{\Y}{\mathbf{Y}}
\def\bbE{\mathbb{E}}
\def\bbP{\mathbb{P}}
\def\bbR{\mathbb{R}}
\def\cE{\mathcal{E}}
\def\cL{\mathcal{L}}
\def\cN{\mathcal{N}}
\def\cP{\mathcal{P}}
\def\cI{\mathcal{I}}
\def\cS{\mathcal{S}}
\def\var{\mathrm{Var}}
\def\cov{\mathrm{Cov}}
\def\tr{\mathrm{tr}}
\def\pto{\overset{p}{\to}}
\DeclareMathOperator*{\argmax}{argmax} 
\DeclareMathOperator*{\argmin}{argmin} 
\title{Causality-oriented robustness: \\exploiting general noise interventions} 
\author[$\star$]{Xinwei Shen}
\author[$\star$]{Peter B\"uhlmann}
\author[$\dag$]{Armeen Taeb}
\affil[$\star$]{Seminar for Statistics, ETH Z\"urich}
\affil[$\dag$]{Department of Statistics, University of Washington}
\date{}
\begin{document}

\maketitle

\begin{abstract}
Since distribution shifts are common in real-world applications, there is a pressing need to develop prediction models that are robust against such shifts. Existing frameworks, such as empirical risk minimization or distributionally robust optimization, either lack generalizability for unseen distributions or rely on postulated distance measures. Alternatively, causality offers a data-driven and structural perspective to robust predictions. However, the assumptions necessary for causal inference can be overly stringent, and the robustness offered by such causal models often lacks flexibility. In this paper, we focus on causality-oriented robustness and propose Distributional Robustness via Invariant Gradients (DRIG), a method that exploits general noise interventions in training data for robust predictions against unseen interventions, and naturally interpolates between in-distribution prediction and causality. In a linear setting, we prove that DRIG yields predictions that are robust among a data-dependent class of distribution shifts. Furthermore, we show that our framework includes anchor regression as a special case, and that it yields prediction models that protect against more diverse perturbations. We establish finite-sample results and extend our approach to semi-supervised domain adaptation to further improve prediction performance. Finally, we empirically validate our methods on synthetic simulations and on single-cell and intensive health care datasets.

\medbreak
\noindent Keywords: distribution shifts, robust prediction, interventional data, structural causal models, invariance
\end{abstract}

\section{Introduction}
Statistical and machine learning models are often deployed on test data distributed differently from the training data. Such scenarios pose a major challenge for traditional learning methods that typically assume the test distribution is sufficiently close to the training distribution. For example, while empirical risk minimization (ERM) achieves minimal prediction error when the test and training data are identically distributed, the performance of this widely used prediction paradigm deteriorates significantly when the test distribution differs substantially from the training distribution \citep{Geirhos2020ShortcutLI,Sagawa2021ExtendingTW}. 

Distributional robustness \citep{ben1998robust,ben2006analysis,sinha2017certifiable,meinshausen2018causality} is an appealing framework for assessing how prediction models perform under distributional shifts. As the precise manner in which the test and training distributions differ is typically unknown, distributional robustness aims to identify a predictive model that performs favorably over a class of plausible test distributions. Formally, suppose $X$ is a set of covariates or predictors and $Y$ is a response or target variable of interest. Let $\theta$ be the parameter of a prediction model from $X$ to $Y$. Then, distributional robustness is formulated as the following minimax optimization problem
\begin{equation}\label{eq:dro}
	\min_\theta\sup_{P\in\cP}\bbE_{P}[\ell(X,Y;\theta)].
\end{equation}
Here, $\ell$ is a given loss function and $\cP$ is a class containing plausible test distributions. 

The choice of the set of distributions $\cP$ is central to the distributional robustness framework \eqref{eq:dro}. A common perspective taken by the literature in distributionally robust optimization (DRO) is to define $\cP$ based on a pre-specified distance measure, e.g., $\cP= \{P:D(P,P_0)\le\rho\}$, where $P_0$ is the training distribution, $D(\cdot,\cdot)$ is, e.g., the $f$-divergence, and $\rho$ is the parameter that controls the strength of potential distribution shifts in the test data relative to the training data \citep{sinha2017certifiable,duchi2021learning}. DRO thus learns a prediction model that is robust against distributional shifts in a pre-specified ``ball" of radius $\rho$ around the training distribution. However, protecting against all distributions in a ball ignores structural information about the distributional shifts and can yield overly conservative predictions, especially in high dimensions. As an illustration, consider Figure~\ref{fig:directional_shift}(left), where the shifts from the training to the test distributions are in a certain ``direction". To achieve robustness with respect to test environment $3$, DRO would require a large radius $\rho$ (as environment $0$ and $3$ are far apart) and thus protects against many more distributions than necessary. As we elaborate throughout the paper, a causal perspective provides an approach to attain robustness against a distribution class $\cP$ driven from the heterogeneity in the observed data and exploits structural relations among the training and test distributions. 

\begin{figure}
	\centering
	\begin{tabular}{cc}
\includegraphics[width=0.24\textwidth]{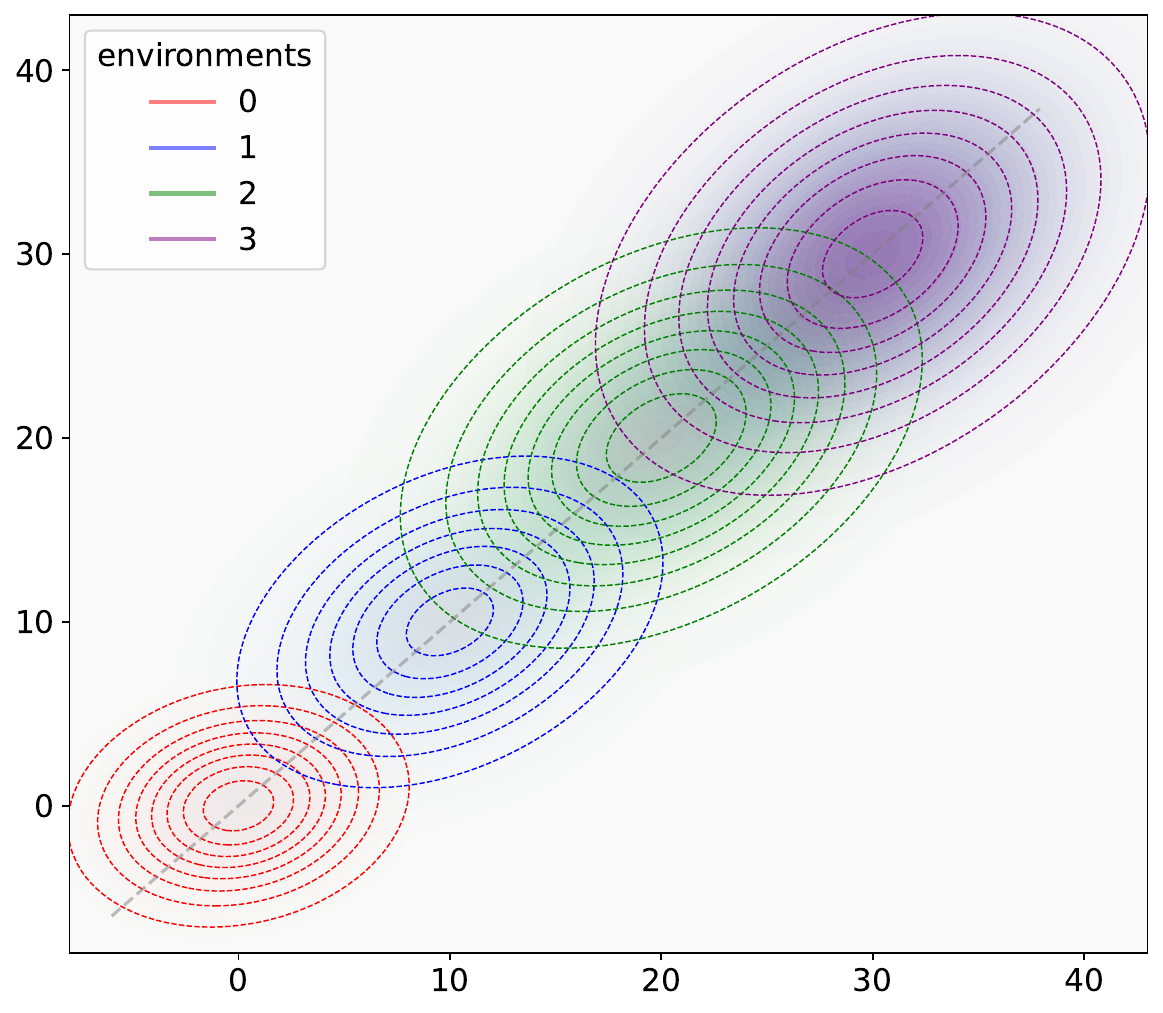} & 
\includegraphics[page=1, clip, trim=14cm 14cm 14cm 14cm, width=0.7\textwidth]{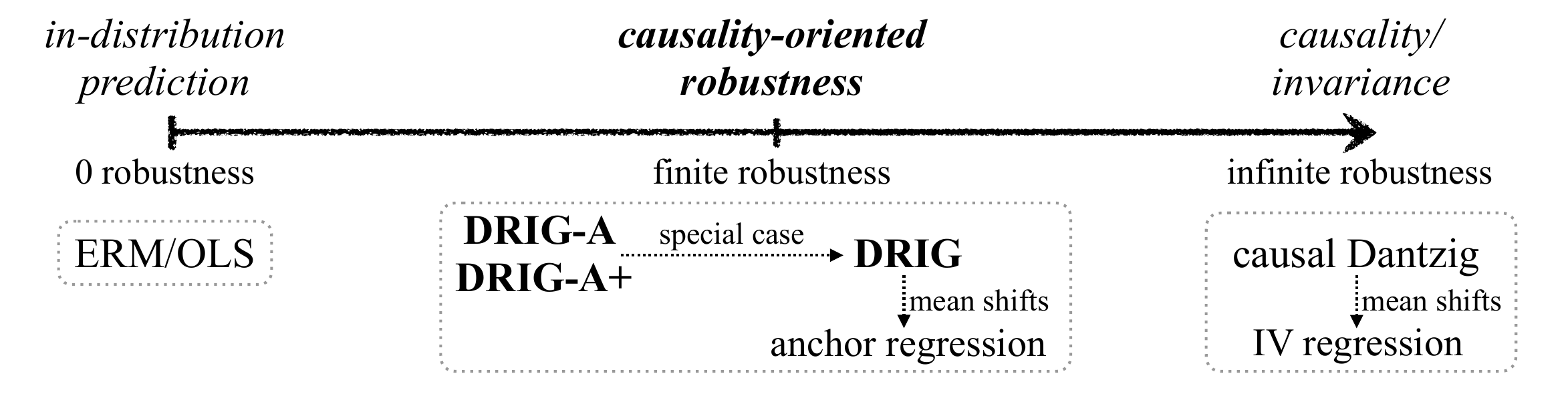}
\end{tabular}
	\caption{{\small{\textbf{{(left)}}: An example of structural shifts: environment 0 represents training environment and environments 1-3 represent possible test environments, where the shift between the training and test distributions is in a particular ``direction" (here, the support of each distribution is the same); \textbf{(right)}: Causality-oriented robustness: a trade-off between in-distribution prediction and causality using our method DRIG that exploits general additive interventions in the data. DRIG encompasses anchor regression as a special case with mean shifts only. Our extended proposals DRIG-A and DRIG-A+ provides a more flexible robustness framework.}}}\label{fig:directional_shift}
\end{figure}

In many real-world data, the distribution of variables $(X,Y)$ can be effectively described by a causal mechanism \citep{spirtes2000causation,pearl2009causality}. The virtue of causal modeling is that distributional shifts (and consequently the distribution class $\cP$) could be naturally formalized as interventions or perturbations to the observed or latent variables. This perspective, known as \emph{causality-oriented robustness}~\citep{buhlmann2020invariance,meinshausen2018causality,rothenhausler2021anchor}, enables us to model distribution shifts in a more structured and data-dependent manner than those considered in DRO. In such a framework, a natural prediction model to consider 
is one involving merely the causal parents of $Y$, known as a causal prediction model. Indeed, the causal prediction model performs equally well under any interventions on the covariates~\citep{haavelmo1943statistical,buhlmann2020invariance}, thus providing certain robustness guarantees even when the interventions or shifts are arbitrarily strong. 

Nevertheless, identifying the causal parents and estimating the causal effects are often ambitious tasks that rely on relatively strong assumptions about the data distribution. 
For example, instrumental variable (IV) regression~\citep{bowden1990instrumental,angrist1996identification,imbens2015causal} is a popular approach to estimate causal effects in the presence of latent confounding. IV regression relies on the assumption that the instrumental variables are independent of the latent confounders and do not directly affect the response variable, known as the valid IV condition. When the instrumental variables are categorical, for example when they encode the different interventional environments, the valid IV condition requires that the interventions happen only on the covariates and the number of environments must exceed the number of covariates. 
However, in a wide range of real-world prediction scenarios, such identifiability conditions are rarely fulfilled. This inspires the pursuit of an alternative solution that relies on weaker assumptions and yet remains effective for producing robust predictions, which is the essence of causality-oriented robustness. In particular, causality-oriented robustness does not require the full knowledge of the underlying causal mechanism, but directly aims for robust prediction by leveraging insights from causality. 

Even when the underlying causal structure can be identified from data, the resulting prediction model may not be desirable in terms of robust prediction. In particular, the causal prediction model protects against arbitrarily strong interventions, and is thus 
a conservative approach with subpar predictive performance on moderately perturbed data. 

Our goal is to use a causal framework to learn distributionally robust prediction models against a \emph{finite and learned uncertainty set} without knowledge of the underlying causal structure. We leverage heterogeneous training data from multiple environments with \emph{general noise interventions} to learn sets that are much more adaptive than standard DRO methods, being larger in some directions and smaller in other directions.





\subsection{Our contributions}
\label{sec:our_contributions}
We propose in Section~\ref{sec:method} our method \emph{distributional robustness via invariant gradients (DRIG)}, a regularized ERM formulation, where the regularization term is inspired by a gradient invariance condition across the environments. We show that DRIG is convex under certain natural settings, and that anchor regression \citep{rothenhausler2021anchor} is a special case of DRIG. Finite sample guarantees are also established. In Section~\ref{sec:robust}, we present robustness guarantees of DRIG under a linear structural causal model. We show that DRIG's prediction models achieve finite robustness against interventions whose strength is controlled via a regularization parameter and whose directions depend on the heterogeneity in the training data. Furthermore, we prove that as long as there are some shifts in the variances (i.e., the interventions given each environment are random variables), DRIG leads to robustness against perturbations in strictly (and often much) more directions than those protected by anchor regression; in fact, the DRIG robustness holds for general noise interventions, whereas anchor regression assumes additive noise interventions. When there are only mean shifts (i.e., the interventions are deterministic given each environment), DRIG is identical to anchor regression. We also discuss how DRIG with regularization parameter tending to infinity, which attains robustness against infinitely strong perturbations, leads to causality under more restrictive assumptions, highlighting the essence of causality-oriented robustness. 

In Section \ref{sec:adaptive}, we explore extensions of DRIG to semi-supervised settings. In particular, when we have access to samples from a test distribution of interest, we develop the extension DRIG-A that selects hyperparameters to adapt to the test distribution. In settings where we have access to a large set of unlabeled samples and a small set of labeled data from the test distribution, we present DRIG-A+. This method extends the DRIG formulation to have a matrix of hyperparameters, where the hyperparameters allow for much more flexible robustness; these hyperparameters are again chosen from the semi-supervised data. {We theoretically demonstrate that DRIG-A+ yields smaller test error (in population) as compared to the ordinary least squares (OLS) estimator obtained from the semi-supervised samples}. 

Finally, we conduct real-data analysis on single-cell and intensive health care data in \ref{sec:real_data_experiments}. A visual summary of our methodological contributions is presented in Figure~\ref{fig:directional_shift}(right), highlighting how DRIG (and its extensions) interpolate between in-distribution prediction and causality by exploiting heterogeneity in the training data.

\subsection{Related work}
There is a growing literature in exploiting heterogeneous data for causal inference \citep{peters2016causal,Ghassami,Rothenhausler2017CausalDF,huang2020causal,long2022generalized}, stablized variable selection~\citep{Pfister2019StabilizingVS,fan2023environment}, as well as robust predictions \citep{meinshausen15maximin,Magliacane2017DomainAB,sagawa2019distributionally,rothenhausler2021anchor,christiansen2021causal,RojasCarulla2015InvariantMF}. 
In a similar spirit, another line of work aims for out-of-distribution prediction from multi-environment data based on invariance notions \citep{arjovsky2019invariant,koyama2020out,krueger2021out,shi2021gradient,Ram2021FishrIG}; we discuss the connections to them in Appendix~\ref{app:invariance}.  
Most of these methods do not provide guarantees for finite robustness which is often more relevant to applications. Anchor regression~\citep{rothenhausler2021anchor} is a prominent method that can provably achieve finite robustness. In anchor regression, interventions are assumed to be additive and only affect the conditional means of the variables. Thus, the method is designed to exploit heterogeneity in this form, leading to robustness against additive mean shifts in the test data. In contrast, we consider a more general setting with general noise interventions. This flexibility allows us to exploit richer heterogeneity within the training data, which results in robustness against potentially much more perturbations and causal identification with data collected from fewer environments. 

In a concurrent work to our manuscript, \citet{kennerberg2023convergence} extend the framework of \citet{kania2022causal} to achieve finite robustness from multi-environment data. While the method proposed in \citet{kennerberg2023convergence} is similar to DRIG, our work differs in substantive ways. First, in our modeling framework, we allow for and exploit interventions on the response variable and on any potential latent confounders, which is more realistic and results in robustness against more general interventions. By contrast, in \citet{kennerberg2023convergence}, the environments arising from interventions appear in a much more restrictive way, excluding the above interventions. Second, we present precise connections with anchor regression -- in particular, we show how anchor regression is a special case of our method where only additive mean shifts are exploited and how we are able to obtain strictly more robust predictions; we also present an extension where we incorporate continuous anchor variables in our estimator. 
Third, we study (approximate) causal identifiability results in general settings, whereas \citet{kennerberg2023convergence} only consider the restrictive setting where there are no interventions on the response variable or on the latent variables. Finally, we propose adaptive extensions for more flexible robustness, often yielding substantially better prediction than other methods, as validated by both theoretical and numerical results.

\section{Our method DRIG}\label{sec:method}

{\subsection{Setup: Linear structural causal models}}
\begin{figure}
\centering
\begin{tabular}{cc}
\begin{tikzpicture}[every node/.style={scale=0.8},node distance={15mm},main/.style = {draw, circle, minimum size=0.5cm}]
	\node[main,inner sep=1.5pt,dotted](h) {$H$};
	\node[main,inner sep=1.5pt](x)[below left of=h]  {$X$};
	\node[main,inner sep=1.5pt](y)[below right of=h] {$Y$};
	\node[main,inner sep=1.5pt,red](e)[above left of=x] {$E$};
	\draw[<->] (x) -- (y);
	\draw[<->] (h) -- (x);
	\draw[<->] (h) -- (y);
	\draw[->,red] (e) -- (x);
	\draw[->,red] (e) -- (y);
	\draw[->,red] (e) -- (h);
\end{tikzpicture}
&~~~~~~~~~
\begin{tikzpicture}[every node/.style={scale=0.8},node distance={15mm},main/.style = {draw, circle, minimum size=0.5cm}]
	\node[main,inner sep=1.5pt,dotted](h) {$H$};
	\node[main,inner sep=1.5pt](x)[below left of=h]  {$X$};
	\node[main,inner sep=1.5pt](y)[below right of=h] {$Y$};
	\node[main,inner sep=1.5pt,red](e)[above left = 5.4mm and 3mm of x] {$E$};
	\node[main,inner sep=1.5pt,blue](a)[above right = 5.4mm and 3mm of y] {$A$};
	\draw[<->] (x) -- (y);
	\draw[<->] (h) -- (x);
	\draw[<->] (h) -- (y);
	\draw[->,red] (e) -- (x);
	\draw[->,red] (e) -- (y);
	\draw[->,red] (e) -- (h);
	\draw[->,blue] (a) -- (x);
	\draw[->,blue] (a) -- (y);
	\draw[->,blue] (a) -- (h);
    \draw[<->] (e) to [out=45,in=135,looseness=0.5] (a);
\end{tikzpicture}\vspace{-0.3in}\\\vspace{-0.05in}
\end{tabular}
\caption{\small Graphical models among covariates $X$, response variable $Y$, and latent variables $H$ ($X$ and $H$ may be multivariate): \textbf{(left)}: interventions $E$ on all components, \textbf{(right)}: discrete interventions $E$ and continuous interventions $A$ on all components. All these structures are allowed for DRIG.}
\label{fig:graph}
\end{figure}
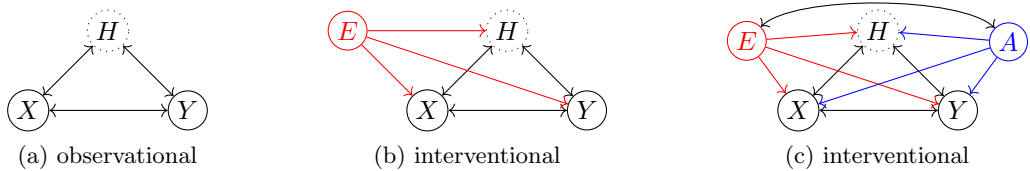
We suppose we have access to observations of variables under different environments, such as experimental conditions in which some of the variables may have been manipulated, that is, received interventions. To represent this setting, we consider covariates $X \in \mathbb{R}^p$ and a response variable $Y \in\mathbb{R}$. The interventions on these variables are generated randomly from a discrete random variable $E$ taking on values in the set $\mathcal{E}$; each $e \in \mathcal{E}$ represents a different environment that generates the random vectors $(X^e,Y^e)$. We posit that for every $e\in\mathcal{E}$, the random variables $(X^e,Y^e)$ satisfy the following linear structural causal model (SCM)
\begin{equation}\label{eq:scm_train}
    \begin{pmatrix}
    X^e \\ Y^e
    \end{pmatrix}
    =
    B^\star
    \begin{pmatrix}
    X^e \\ Y^e
    \end{pmatrix} 
    + \varepsilon^e.
\end{equation}

Here, $B^\star\in\bbR^{(p+1)\times (p+1)}$ is the adjacency matrix encoding the causal relations, namely $B^\star_{ij} \neq 0$ if $Z^e_j$ is a parent of $Z^e_i$ in the graph among observed variables $Z^e = (X^e,Y^e)$. The SCM \eqref{eq:scm_train} thus assumes that the causal structure among the observed variables does not change across  $e\in\mathcal{E}$. The row vector $B^\star_{p+1,1:p}$ encodes the (observable) causal parents of the response variable and the magnitude of their effects. Throughout, we will use 
$$b^\star := B^\star_{p+1,1:p}$$
to denote this vector and call it the \emph{causal parameter}. Further, $\varepsilon^e$ is a random vector with a bounded second moment, with potentially dependent components to account for latent confounding and dependencies in the interventions generated by $E$. We assume that the matrix $I-B^\star$ is invertible, which is guaranteed if the subgraph consisting of only the observed variables is acyclic. For any $j\in [p+1]$, the distribution of $\varepsilon_j^e$ is allowed to vary across $e \in \mathcal{E}$; this variation may result from a direct intervention on the variable $Z_j$  or an intervention on the latent variables, which are manifested through $\varepsilon_j^e$. An equal distribution of $\varepsilon_j^e$ for all $e \in \mathcal{E}$ indicates that $Z_j$ does not receive a direct intervention or an indirect intervention through a latent variable that affects $Z_j$, although its marginal distribution could still be changed due to interventions on its ancestors. Figure~\ref{fig:graph}(left) presents the graphical perspective of model \eqref{eq:scm_train}; $E$ is exogenous and cannot be descendants of $(X,Y)$ and any latent variables. Throughout, we assume the following on the noise variables $\varepsilon^e$.
\begin{assumption} $\exists$ an environment $0\in\mathcal{E}$ where $\mathbb{E}[\varepsilon^{0}{\varepsilon^{0}}^\top] \preceq \mathbb{E}[\varepsilon^{e}{\varepsilon^{e}}^\top]$ for every $e\in\mathcal{E}$.
\label{ass:strength}
\end{assumption}
Here, for two positive semidefinite matrices $A$ and $B$, we write $A \preceq B$ if and only if $B-A$ is positive semidefinite. Assumption~\ref{ass:strength} ensures that there exists an `observational' environment $0 \in \mathcal{E}$ with `smaller' interventions (as measured by the second moments) than the other environments. Letting $Z^e = (X^e,Y^e)$, this assumption can be expressed in terms of observed Gram matrices, namely: $\mathbb{E}[Z^{0}{Z^{0}}^\top] \preceq \mathbb{E}[Z^{e}{Z^{e}}^\top]$ for all $e\in\mathcal{E}$. An observational assumption is a common condition in the causal inference literature. Nevertheless, in Appendix~\ref{Sec:drig_without_observational}, we relax this condition while still guaranteeing that our estimator produces distributionally robust prediction models that interpolate between the OLS solution and the causal parameter. In short, our relaxed assumption ensures that the set of environments $\mathcal{E}$ can be divided into two: $\mathcal{E}_\mathrm{small}$ and $\mathcal{E}\setminus\mathcal{E}_\mathrm{small}$ where the interventions in $\mathcal{E}\setminus\mathcal{E}_\mathrm{small}$ are sufficiently stronger than those in $\mathcal{E}_\mathrm{small}$; see Appendix~\ref{Sec:drig_without_observational} for more details.


Our training data consists of $(X^e,Y^e)$ across all environments $e \in \mathcal{E}$. We consider out-of-distribution prediction on a test distribution generated according to the following SCM:
\begin{equation}\label{eq:scm_test}
    \begin{pmatrix}
    X^v \\ Y^v
    \end{pmatrix}
    =
    B^\star
    \begin{pmatrix}
    X^v \\ Y^v
    \end{pmatrix} 
    + v,
\end{equation}
Notably, the distribution of $v$ in the test data may follow a different distribution than $\{\varepsilon^e\}_{e \in \mathcal{E}}$ in the training data. Our objective is to develop a procedure that uses only the training data to learn a prediction model that performs well on test data generated according to \eqref{eq:scm_test}.

{\subsection{Our formulation}}
\label{sec:our_formulation}
We introduce our method DRIG at the population level; the empirical analog is described shortly. Specifically, suppose the random variables $(X^e,Y^e)$ are generated according to the SCM \eqref{eq:scm_train} for environments $e \in \mathcal{E}$. Given a scalar $\gamma \geq 0$, population DRIG minimizes
\begin{equation}
	b^\mathrm{opt}_\gamma = \argmin_b \mathcal{L}_\gamma(b),\quad \text{ where }
   \label{eqn:DRIG} 
\end{equation}
\begin{equation}\label{eq:obj}
	\cL_\gamma(b):= \min_{e\in\mathcal{E}} \mathbb{E}[\ell(X^{e},Y^{e}; b)] + \gamma\sum_{e \in \mathcal{E}}\omega^e \left(\mathbb{E}[\ell(X^{e},Y^e;b)] - \min_{e\in\mathcal{E}}\mathbb{E}[\ell(X^{e},Y^{e};b)]\right),
\end{equation}
and $\ell(x,y; b) := (y - b^\top x)^2$ is the squared loss. Here, $\omega^e \geq 0$ are weights that weigh the impact of each environment on the DRIG objective with $\sum_{e\in\mathcal{E}}\omega^e = 1$. Without any prior information on the test distribution or access to some labeled data from the test set, we suggest choosing the weight to be uniform across the environments, i.e. $\omega^e = 1/|{\mathcal{E}}|$ for each $e$, or in the finite sample version of DRIG (discussed shortly),  set them based on available data size in each environment; see Sections~\ref{sec:robust_gua} and~\ref{sec:adaptive} for additional discussions on $\omega^e$.

The risk $\cL_\gamma(b)$ is the squared loss in the environment with the smallest loss summed with the weighted average difference in the squared losses between every environment $e \in \mathcal{E}$ and the environment with the smallest loss; the regularization parameter $\gamma$ controls how much the latter component is penalized. By definition, the regularization term is non-negative. For $\gamma = 0$, DRIG is OLS on the environment with the smallest loss, named the observational OLS, as it is the observational setting under Assumption~\ref{ass:strength}; for $\gamma = 1$, DRIG coincides with the OLS solution on the pooled data, called the pooled OLS; for $\gamma\to\infty$, when $|\cE|=2$, we show in Appendix~\ref{app:causal_dantzig} that DRIG converges to the causal Dantzig estimator \citep{Rothenhausler2017CausalDF} which recovers the causal parameter under some conditions. To understand the intuition behind DRIG, we introduce the notion of gradient invariance. 
\begin{definition}[Gradient invariance] \label{def:gi}
A regression parameter $b$ is said to satisfy the gradient invariance condition if $\sum_{e \in \mathcal{E}}\omega^e\nabla_{b}\mathbb{E}[\ell(X^e,Y^e;b)] =\nabla_{b}\min_e\mathbb{E}[\ell(X^{e},Y^{e};b)]$ \footnote{Here, $\min_e\mathbb{E}[\ell(X^{e},Y^{e};b)]$ is almost everywhere differentiable. It is non-differentiable for $b$ where $\argmin_e\mathbb{E}[\ell(X^{e},Y^{e};b)]$ is not unique; then, one can use sub-differential of $\min_e\mathbb{E}[\ell(X^{e},Y^{e};b)]$ instead.}, that is the weighted average gradient of the loss function across the environments is the same as the gradient in the environment with the smallest loss.
\end{definition}
In the limit of $\gamma \to \infty$ and under some mild conditions, we show in Theorem~\ref{thm:solution_infty}  
that the DRIG solution $b^\mathrm{opt}_\gamma$ satisfies Definition~\ref{def:gi}.
We provide a thorough discussion on invariance in Appendix~\ref{app:invariance}, including the gradient invariance and other existing notions such as invariance of the conditional distribution, the conditional mean, or the risk. We highlight that Definition~\ref{def:gi} can be fulfilled by the causal parameter under more general cases, especially with the presence of latent confounders and interventions on $Y$ or on the latent variables.

In summary, by encouraging invariant gradients across the environments (to the extent controlled by the parameter $\gamma$), DRIG naturally interpolates between the ordinary least squares solution and the causal parameter. As we will discuss in Section~\ref{sec:robust}, the main benefit of the proposed DRIG estimator is robust prediction on test environments that are potentially far from the training environments, where the degree to which the test and training environments can differ is controlled by the parameter $\gamma$. 

\textbf{Finite-sample DRIG:} For each environment $e\in\mathcal{E}$, let $(X^e_1,Y^e_1),\dots,(X^e_{n_e},Y^e_{n_e})$ be i.i.d.\ samples of the random pair $(X^e,Y^e)$ distributed according to model \eqref{eq:scm_train}. Then, the finite-sample analog of the DRIG is given by $\hat{b}_\gamma \in \argmin_b \hat{\mathcal{L}}_\gamma(b)$,
where
\begin{equation}
\hat{\mathcal{L}}_\gamma(b) = \min_{e\in\mathcal{E}}\hat{\mathbb{E}}[\ell(X^e,Y^e;b)]^2 + \gamma\sum_{e \in\mathcal{E}}\omega^e \left(\hat{\mathbb{E}}[\ell(X^e,Y^e;b)]^2 - \min_{e\in\mathcal{E}}\hat{\mathbb{E}}[\ell(X^e,Y^e;b)]^2\right).
\label{eqn:finite_drig_objective}
\end{equation}
Here, $\hat{\mathbb{E}}$ denotes the empirical expectation computed over samples of $(X^e,Y^e)$ for every environment $e$, i.e., $\hat{\mathbb{E}}[\ell(X^e,Y^e;b)]^2 = \frac{1}{n_e}\sum_{i=1}^{n_e}(\ell(X^e_i,Y^e_i;b))^2$. We provide finite-sample consistency guarantees of the estimator \eqref{eqn:finite_drig_objective} in Appendix~\ref{sec:finite_sample}.

\textbf{Nonlinear DRIG:} In Section~\ref{sec:nonlinear}, we explore the extension of DRIG to nonlinear settings where we allow $\ell$ to be nonlinear in the objective \eqref{eq:obj}.

\subsection{Connections to anchor regression} 
\citet{rothenhausler2021anchor} posit the following linear SCM:
\begin{equation}
Z = \tilde{B}^\star{Z} + \varepsilon + MA,
\label{eqn:model_anchor}
\end{equation}
Here, $Z = (X,Y,H)$; $A$ are observed \emph{anchor} variables that are independent of the noise $\varepsilon$; and $H$ are latent variables. From a graphical perspective, $A$ are exogenous and cannot be descendant of any of the variables $(X,Y,H)$. Under this model, anchor regression minimizes
\begin{equation}
\cL_{\rm anchor,\gamma}(b) := \bbE[((I-P_A)(Y-b^\top X))^2] + \gamma\bbE[(P_A(Y-b^\top X))^2],
\label{eqn:method_anchor}
\end{equation}
with $P_A$ denoting the $L_2$-projection on the linear span from the components of $A$. 

When the anchors $A$ are discrete, our framework is a generalization of anchor regression; we further discuss in Appendix~\ref{sec:cont_anchor} how DRIG can be modified to accommodate continuous anchors as well (corresponding to Figure~\ref{fig:graph}(right)) and continue to be a generalization of anchor regression. Specifically, let $A$ take values in the set $\{a^e \in \mathbb{R}^{\text{dim}(A)}: e \in \mathcal{E}\}$. Then, setting $\varepsilon^e = \varepsilon+M{a}^e$, we conclude that the model \eqref{eqn:model_anchor} proposed in \citet{rothenhausler2021anchor} is a special case of our model \eqref{eq:scm_train} with substantial restrictions. First, in the anchor regression model, the dependence on the anchor variable and the latent confounders are restricted to be linear. Second, for different $e\in\mathcal{E}$, the noise variables $\varepsilon^e$ are restricted to be mean shifts of one another, which means that the interventions only affect the conditional mean of $(X,Y)$ given $A=a^e$. {Finally, the anchor regression model restricts the noise interventions to be additive, whereas our model {is more general}; for example, in our model, we allow for the interventions to affect the noise in a multiplicative manner, e.g., $\varepsilon^e = \varepsilon\cdot(M{a}^e)$.} 

Under model \eqref{eqn:model_anchor}, the anchor regression estimator \eqref{eqn:method_anchor} matches with the DRIG estimator \eqref{eqn:DRIG}, as formalized in the following proposition with the proof in Appendix~\ref{proof:prop_anchor_regression_ours}. 
\begin{proposition}
 Suppose the data is generated according to \eqref{eqn:model_anchor}. Let $A$ be discrete anchors taking values in the set $\{a^e \in \mathbb{R}^{\mathrm{dim}(A)}: e \in \mathcal{E}\}$. Suppose a reference environment $0 \in \mathcal{E}$ exists where $a^e=0$. Assuming that $\mathbb{P}(A = a^e) = \omega^e$, the anchor regression loss is the same as DRIG loss \eqref{eq:obj}, that is $\cL_{\rm anchor,\gamma}(b) = \cL_\gamma(b)$ for every regression parameter $b$.
 \label{prop:anchor_regression_ours}
 \end{proposition}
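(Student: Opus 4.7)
The plan is to reduce both objectives to a single explicit closed form and compare. The key lever is the linearity of the SCM \eqref{eqn:model_anchor} together with the identification $\delta^e = M a^e$. Solving \eqref{eqn:model_anchor} for $(X,Y,H)$ yields $(X,Y,H) = (I-\tilde B^\star)^{-1}(\varepsilon + MA)$, so the residual $R := Y - b^\top X$ admits a decomposition $R = c^\top \varepsilon + c^\top M A$ for some vector $c = c(b,\tilde B^\star)$, where the two summands are independent because $\varepsilon \perp A$. Writing $\mu_e := \mathbb{E}[R \mid A = a^e]$, the centering convention $\mathbb{E}[(X^0,Y^0)] = 0$ immediately gives $\mu_0 = 0$; combined with $a^0 = 0$, a short computation shows $\mu_e = c^\top M a^e$ and, since the noise contribution $c^\top \varepsilon$ has variance independent of $e$, that $\mathbb{E}[(R^e)^2] = \mathbb{E}[(R^0)^2] + \mu_e^2$ for every $e \in \mathcal{E}$.

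For the anchor-regression loss, I would first apply orthogonality of the $L_2$-projection to write $\cL_{\mathrm{anchor},\gamma}(b) = \mathbb{E}[R^2] + (\gamma-1)\mathbb{E}[(P_A R)^2]$. Because $c^\top \varepsilon$ has mean zero and is independent of every linear function of $A$, the $L_2$-projection $P_A R$ equals the linear part $c^\top M A$, and therefore $\mathbb{E}[(P_A R)^2] = \sum_{e\in\mathcal{E}} \omega^e (c^\top M a^e)^2 = \sum_e \omega^e \mu_e^2$. The mixture identity $\mathbb{E}[R^2] = \sum_e \omega^e \mathbb{E}[(R^e)^2]$ combined with $\mathbb{E}[(R^e)^2] = \mathbb{E}[(R^0)^2] + \mu_e^2$ from the previous paragraph gives $\mathbb{E}[R^2] = \mathbb{E}[(R^0)^2] + \sum_e \omega^e \mu_e^2$. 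Substituting, the anchor loss collapses to $\mathbb{E}[(R^0)^2] + \gamma \sum_e \omega^e \mu_e^2$. The DRIG loss undergoes the same collapse trivially: $\mathbb{E}[(R^e)^2] - \mathbb{E}[(R^0)^2] = \mu_e^2$ forces $\cL_\gamma(b) = \mathbb{E}[(R^0)^2] + \gamma \sum_e \omega^e \mu_e^2$ as well, matching the anchor loss identically in $b$ and $\gamma$.

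The step I expect to be the main obstacle is verifying $P_A R = c^\top M A$, since $P_A$ is the projection onto the linear span of the components of $A$ (which need not contain the constant function) rather than the full conditional expectation $\mathbb{E}[R \mid A]$. The clean argument is that $R - c^\top M A = c^\top \varepsilon$ has mean zero and is independent of $A$, hence orthogonal in $L_2$ to every element of $\mathrm{span}(A_1,\ldots,A_{\dim(A)})$; since $c^\top M A$ already lies in that span, uniqueness of the orthogonal projection identifies it with $P_A R$. A minor bookkeeping point is the passage from the full SCM \eqref{eqn:model_anchor} over $(X,Y,H)$ to the DRIG SCM \eqref{eq:scm_train} over $(X,Y)$ with latent dependence absorbed into the noise, but this is only a change of variables that leaves $R$ and the $\mu_e$ unchanged.
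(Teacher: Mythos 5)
Your proposal is correct and follows essentially the same route as the paper's proof: both decompose the residual via the linear SCM as a sum of an environment-independent noise term and the deterministic shift $c^\top M a^e = w^\top \mu^e$, observe that the conditional mean squared and the excess second moment over the reference environment both equal $\mu_e^2$, and conclude the two losses coincide. The only difference is cosmetic — you collapse both objectives to the common closed form $\mathbb{E}[(R^0)^2] + \gamma\sum_e \omega^e \mu_e^2$ rather than matching the second-order terms after a rewrite, and you are somewhat more careful than the paper in justifying $P_A R = c^\top M A$ when the span of the components of $A$ need not contain constants.
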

 This result states that under the restrictive model in \citet{rothenhausler2021anchor}, which only allows additive mean shifts, the DRIG estimator matches the one from anchor regression. However, the two estimators are different under more general interventions. DRIG is designed for the more general modeling framework \eqref{eq:scm_train} that allows for arbitrary noise interventions. Section~\ref{sec:robust} discusses how the additional flexibility of DRIG leads to more robust predictions. 

{ \subsection{Optimizing the DRIG Objective}}
 We use gradient descent to minimize the DRIG objective \eqref{eq:obj}; see Appendix~\ref{proof_convex} for details including a discussion on optimizing the finite-sample DRIG. As formalized next, the objective \eqref{eq:obj} is strictly convex, so gradient descent is guaranteed to find the optimal solution. The proof is in Appendix \ref{proof_convex}, where we also provide a finite-sample analysis.
\begin{proposition}
For $\gamma \geq 1$, the DRIG objective $\mathcal{L}_\gamma(b)$ is strictly convex with respect to $b$.
\label{prop:convex}
\end{proposition}
 Note that the convexity of DRIG holds as along as $\mathbb{E}[Z^0{Z^0}^\top] \preceq \mathbb{E}[Z^e{Z^e}^\top]$ (where $Z^e = (X^e,Y^e)$) without assuming the SCM \eqref{eq:scm_train}. In Appendix~\ref{sec:drig_convexity_without_observational}, we prove that the DRIG objective can be convex under a strictly weaker assumption than the condition $\mathbb{E}[Z^0{Z^0}^\top] \preceq \mathbb{E}[Z^e{Z^e}^\top]$. Moreover, in Appendix~\ref{sec:optimization}, we provide numerical experiments that demonstrate the robustness of gradient descent for minimizing the DRIG objective \eqref{eq:obj}.

\section{Distributional robustness}
\label{sec:robust}
\subsection{Robustness guarantees}
\label{sec:robust_gua}
We investigate how well the population DRIG \eqref{eqn:DRIG} prediction model generalizes to test environments generated by unseen interventions as in 
\eqref{eq:scm_test}, and compare its performance with other methods. In particular, each of these methods will be shown to minimize the worst-case risk over test noise distributions $v$ in a certain set $\mathcal{C} \subseteq \mathbb{R}^{p+1}$ of random variables, i.e.,
\begin{equation}
  \argmin_{b\in\mathbb{R}^p}\sup_{v \in \mathcal{C}} \mathbb{E}[\ell(X^v,Y^v;b)].
\label{eqn:worst-case}
\end{equation}
 Throughout, we suppose that the training data is generated according to the SCM \eqref{eq:scm_train}. Further, we suppose that the `observational' condition in Assumption~\ref{ass:strength} holds, although in Appendix~\ref{sec:drig_robustness_without_observational}, we show that our robustness guarantees hold with strictly weaker conditions. We define $\mu^e:=\mathbb{E}[\varepsilon^e]$ and $S^e := \mathbb{E}[\varepsilon^e{\varepsilon^e}^\top]$ as the first and second moment, respectively, of the noise variable for every training environment $e \in \mathcal{E}$. We further suppose that the test data is generated according to the SCM \eqref{eq:scm_test}. 
 The following theorem assesses the robustness of the DRIG prediction model with the proof in Appendix~\ref{proof_thm_main}.  
\begin{theorem}
The population DRIG $b_\gamma^\mathrm{opt}$ \eqref{eqn:DRIG} is the solution to the worst-case risk minimization \eqref{eqn:worst-case} with $\mathcal{C} = \mathcal{C}_\mathrm{DRIG}^\gamma$, where $\mathcal{C}_\mathrm{DRIG}^\gamma:=  \left\{v \in \mathbb{R}^{p+1}: \bbE[vv^\top ] \preceq S^{0} + \gamma\sum_{e \in \mathcal{E}} \omega^e\left(S^e - S^{0}\right)\right\}$.
\label{thm:main}
\end{theorem} 
 This result states that DRIG is robust against noise distributions $v$ that are in the set $\mathcal{C}_\mathrm{DRIG}^\gamma$. Furthermore, if the noise variable $v$ in the test data satisfies $\mathbb{E}[vv^\top] = S^{0} + \gamma\sum_{e\in\mathcal{E}}\omega^e(S^e-S^{0})$, then, in population, DRIG provides the best linear prediction model for the test data. The scalar $\gamma \geq 0$, which is a tuning parameter for our method DRIG, controls the strength of the noise interventions that our prediction model is robust against. The larger this parameter, the larger the set $\mathcal{C}^\gamma_\mathrm{DRIG}$, and the stronger the noise intervention $v$ can be. Furthermore, the column space of the matrix $S^0+\gamma\sum_{e \in\mathcal{E}}\omega^e(S^e-S^0)$ represents the ``directions" of the interventions that DRIG protects against with a controllable strength; the larger the dimension of this subspace, the more directions the DRIG is robust against. We provide further illustrations of the intervention class in Appendix~\ref{app:illus_perturb}. 

{The weights $\omega^e$ affect the robustness set $\mathcal{C}_\mathrm{DRIG}^\gamma$. Without any knowledge of the test distribution, we recommend choosing the weights as described in Section~\ref{sec:method}. We may have some domain knowledge, for example, that the test data is close to some environment(s). More commonly, we may have access to unlabeled and possibly some labeled samples from the test distribution. In such semi-supervised settings, the weights $\omega^e$ as well as the tuning parameter $\gamma$ may be chosen to calibrate to the test environment; see Section~\ref{sec:adaptive} for more discussion.

Additionally, if no test data is available, then the user must choose the parameter $\gamma$ (which also impacts the robustness set $\mathcal{C}_\mathrm{DRIG}^\gamma$) based on domain expertise; this situation is similar to most DRO methods where the radius of the robustness set must be pre-specified.
}


\textbf{Comparison to other methods:} We contrast the robustness guarantees provided by DRIG with the ones obtained by OLS estimates, the anchor regression estimate, group DRO~\citep{sagawa2019distributionally}, and the causal parameter $b^\star$. 
Recall that the OLS estimate on the reference environment and the pooled OLS estimate are the DRIG estimates with $\gamma = 0$ and $\gamma = 1$, respectively. Thus, appealing to Theorem~\ref{thm:main}, these estimates are minimizers of the worst case risk \eqref{eqn:worst-case} with $\mathcal{C}_\text{rOLS} := \mathcal{C}_\mathrm{DRIG}^{\gamma = 0}$ and $\mathcal{C}_\text{pOLS}:= \mathcal{C}_\mathrm{DRIG}^{\gamma = 1}$ with $\mathcal{C}_\text{rOLS} \subseteq \mathcal{C}_\text{pOLS} \subseteq \mathcal{C}_\mathrm{DRIG}^{\gamma}$ for any $\gamma \geq 1$. Thus, OLS 
on the reference environment does not protect against any perturbations {that exceed the perturbations in the reference environment alone}, and the pooled OLS protects against perturbations within the training heterogeneity; both approaches are inferior to DRIG in providing robust predictions under unseen (larger) test perturbations. 

When the noise interventions are additive, i.e. $\varepsilon^e = \varepsilon + \delta^e$ for every $e\in\mathcal{E}$ with $\varepsilon,\delta^e$ being independent, anchor regression improves the OLS by protecting against potentially stronger perturbations in
$\mathcal{C}^\gamma_{\rm anchor}$, where
	$\C_\mathrm{anchor}^\gamma = \left\{v \in \mathbb{R}^{p+1}: \bbE[vv^\top ] \preceq \sum_{e \in \mathcal{E}}\omega^e\left(S^e+(\gamma-1)\mu^e{\mu^e}^\top\right)\right\}$ as proved in Appendix~\ref{proof_robust_anchor}. Note that the perturbation strength $\gamma$ is only acting on the means $\mu^e$ {and thus anchor regression only protects against perturbations in the means}. 
Although anchor regression provides more robust predictions than OLS (formally $\mathcal{C}_\text{rOLS} 
 \subseteq \C_\mathrm{anchor}^\gamma$), it protects against a smaller set of perturbations than DRIG as $\C_{\text{anchor}}^\gamma \subseteq \C_{\mathrm{DRIG}}^\gamma$. In particular, since DRIG exploits both mean and variance shifts, it is robust against perturbations in strictly (and often much) more directions than anchor regression. For instance, when $|\mathcal{E}|=2$, anchor regression can only protect against perturbations $v$ that lie in a $2$-dimensional subspace (regardless of the number and strength of perturbations observed in the training data), while DRIG can protect against $v$ in arbitrary directions if all variables are intervened on {(formally if $S^1-S^0 \succ 0$)}. We will illustrate this comparison in Section~\ref{sec:exam}.

As described in the introduction, standard DRO methods, which minimize the worst-case prediction loss with respect to a divergence ball around the training distribution, lead to overly pessimistic models \citep{Duchi2020DistributionallyRL,sagawa2019distributionally}. To construct a realistic set of possible test distributions without being overly conservative, in settings where we have access to multiple environments, a class of DRO methods, known as group DRO, minimize the prediction loss over the worst-case group. Formally, in the context of linear models, group DRO is defined as $\argmin_b\max_{e\in\cE}\ \bbE[(Y^e-b^\top X^e)^2]$; this is equivalent to minimizing the  loss over the worst-case mixture of the distributions in the training environment.  
Suppose there exists an environment $m\in\cE$ such that $S^e\preceq S^m$ for all $e\in\cE$. Then, we show in Appendix~\ref{proof_robust_gdro} that group DRO is robust against the perturbation class $\mathcal{C}_{\mathrm{gDRO}}=\left\{v \in \mathbb{R}^{p+1}: \bbE[vv^\top ] \preceq   S^m\right\}$.  
Without assuming the existence of a dominating environment $m$, the perturbation class that group DRO protects against is not clear. Moreover, unlike DRIG (and anchor regression), group DRO does not have a tuning parameter that actively controls the size of the perturbation class; it is rather a passive interpolation between in-sample prediction and causality, merely relying on the training environments. 
Thus, group DRO cannot protect against test perturbations larger than training perturbations.


Finally, the causal parameter $b^\star$ is the solution to the worst-case risk minimization \eqref{eqn:worst-case} with $\mathcal{C} = \C_{\mathrm{causal}}$, where $\C_{\mathrm{causal}} = \left\{v \in \mathbb{R}^{p+1}: |\mathbb{E}[v_{p+1}v_j]| < \infty\allowbreak \text{ for} \allowbreak j \in [p+1]\right\}$ as proved in Appendix~\ref{proof_robust_causal}. To better understand the vectors inside $\C_{\mathrm{causal}}$, consider $\tilde{\mathcal{C}} = \{\varepsilon^0+\tilde{v}~|~ \tilde{v} \in \mathbb{R}^{p+1}, \tilde{v} \text{ independent of }\allowbreak\varepsilon^0, \tilde{v}_{p+1}\equiv 0\}$ where $\xi^0$ is the noise variable in the `observational' environment. The set $\tilde{\mathcal{C}}$ thus consists of independent additive interventions with no interventions on the latent variables and on $Y$, but allows for arbitrary intervention on the covariates $X$. 

DRIG may be preferred over the causal parameter for multiple reasons. First, as $\tilde{\mathcal{C}} \subseteq \C_{\mathrm{causal}}$, the causal parameter protects against arbitrary interventions on the covariates $X$, thus yielding overly conservative prediction models that come with a price of subpar predictive performance on moderately perturbed data. Second, the causal parameter is often not identifiable, especially when the interventions do not happen on all the variables.

In summary, DRIG is an attractive alternative for robust prediction over standard OLS estimators as well as anchor regression, group DRO, and the causal prediction model. 

\subsection{Illustrative examples}\label{sec:exam}
We give two simple examples to illustrate how our method performs under general additive interventions compared to existing methods. In particular, we compare our method DRIG (with $\gamma=5$), the observational OLS (DRIG with $\gamma=0$), the pooled OLS (DRIG with $\gamma=1$), causal parameter (DRIG with $\gamma=\infty$ if identifiable), and anchor regression (DRIG with mean shifts only and $\gamma=5$), all of which are special cases of DRIG. {We also consider group DRO \citep{sagawa2019distributionally}}. For simplicity, we consider a univariate $X\in\bbR$ and two training environments $e=0,1$. We compute the population versions of all estimators and also evaluate their test performance in population. We provide additional experiments for finite-sample estimators with multivariate covariates, multiple environments, as well as the oracle choice of $\gamma$ in Appendix~\ref{sec:simulations}.

\begin{example}[Covariate-intervened]\label{ex:robust1}
Data are distributed according to the SCMs: $\bbP^0: X^0 = \varepsilon_x; Y^0 = 2X^0 + \varepsilon_y$, $\bbP^1: X^1 =\varepsilon_x + \delta_x^1; Y^1 = 2X^1 + \varepsilon_y$, and $\bbP^v: X^v =\varepsilon_x + v_x; Y^v = 2X^v + \varepsilon_y$ 
Here, $(\varepsilon_x,\varepsilon_y)$ follows a bivariate Gaussian with means 0, variances 1, and covariance 0.5, intervention $\delta_x^1 \sim \cN(0.5, 1)$ only affects $X$, 
and $v_x \sim\cN(\mu_v,\sigma^2_v)$ represents a different intervention where $\mu_v^2+\sigma^2_v=1.25\alpha$ with a factor $\alpha$ controlling the test perturbation strength. 
\end{example}

Figure~\ref{fig:illus}(a) shows the mean squared errors (MSEs) of various methods in the perturbed test distribution $\bbP_v$ for varying perturbation strengths $\alpha$. The causal parameter is invariant (i.e., a constant MSE) for any perturbations on $X$, but is suboptimal when the perturbations are small or moderate. The observational OLS performs the best only when the test distribution is almost identical to the observational distribution and performs poorly when the perturbation grows.
DRIG achieves a trade-off between the causal parameter and observational OLS, leading to favorable robustness. In particular, under small or moderate perturbations, DRIG attains a lower test MSE than the causal parameter; when the perturbations become relatively strong, DRIG is superior to the OLS estimators. 
In this setting, DRIG with a finite $\gamma$ protects against the perturbation class $\{(v_x,v_y):\bbE[v_x^2]\le\tfrac{\gamma}{2}1.25,\ v_y=0\}$. Thus, the optimal $\gamma$ in DRIG should be $\gamma=2\alpha$, whereas we keep $\gamma$ fixed in our simulations. This highlights the robustness of DRIG to the choice of $\gamma$.

Anchor regression exploits heterogeneity in the means, thus generally outperforming OLS. However, since it can only exploit mean shifts, it tends to be inferior to DRIG. In Appendix~\ref{app:emp_results}, we show a case with weaker mean shifts. Here, anchor regression performs almost identically to pooled OLS, whereas DRIG exploits extra heterogeneity and outperforms both. Similarly, group DRO outperforms OLS when the test perturbation strength is large, although DRIG yields better predictions. 

\begin{example}[All-intervened]\label{ex:robust2}
Data are distributed according to the SCMs: $\bbP^0: X^0 = \varepsilon_x;Y^0 = 2X^0 + \varepsilon_y$, $\bbP^1: X^1 = \varepsilon_x + \delta_x^1;Y^1 = 2X^1 + \varepsilon_y + \delta_y^1$; and $\bbP^v: X^v = \varepsilon_x + v_x;Y^v = 2X^v + \varepsilon_y + v_y$. Here, $(\varepsilon_x,\varepsilon_y)$ is distributed similar to Example~\ref{ex:robust1}, $(\delta^1_x,\delta^1_y) \sim \mathcal{N}\left(\begin{pmatrix} 0.5\\0.1\end{pmatrix},\begin{pmatrix} 1 & 0.1\\0.1 &0.05\end{pmatrix}\right)$, and $\bbE[vv^\top] = \frac{\alpha}{2}
	\begin{pmatrix}
		1.25 & 0.15\\
		0.15 & 0.06
	\end{pmatrix}
	$
where $\alpha$ controls the test perturbation strength. 
\end{example}
As shown in Figure~\ref{fig:illus}(b), due to interventions on all variables, the causal parameter is no longer invariant and its prediction performance degrades as the test perturbation strength increases. In contrast, DRIG exhibits a significant advantage compared to all other methods.

\begin{figure}[!]
\centering
\begin{tabular}{cc}
	\includegraphics[width=0.34\textwidth]{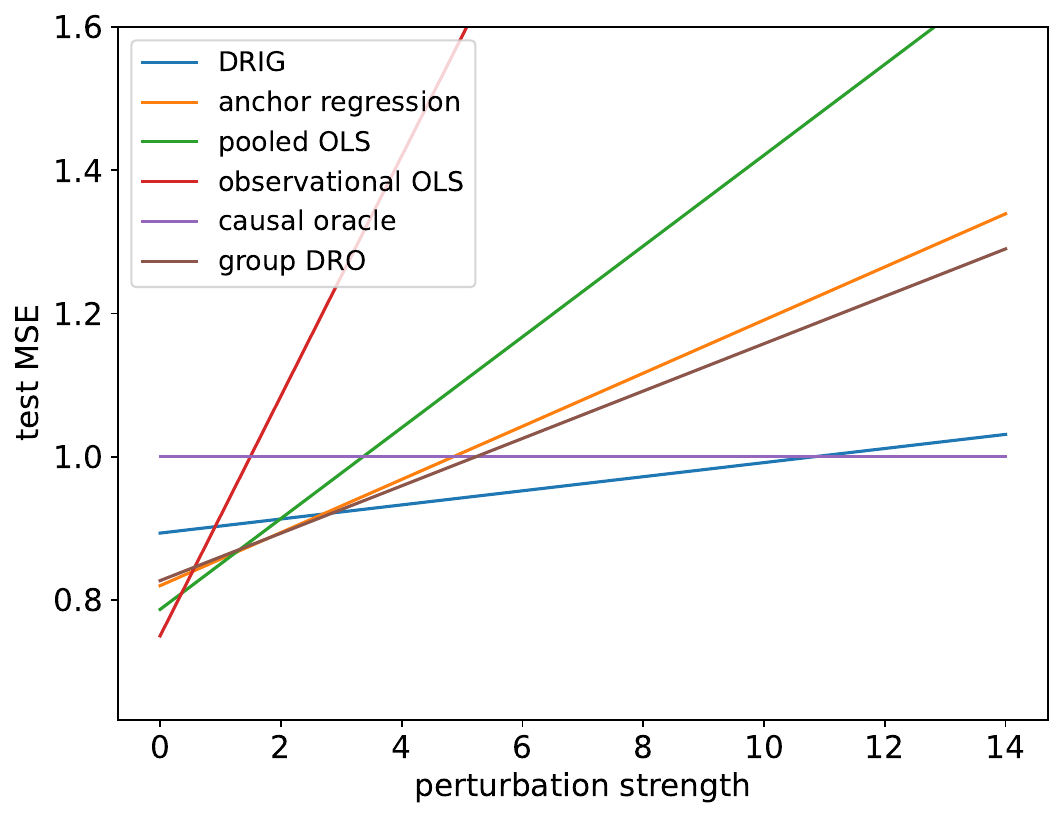}\hspace{0.3in} & \includegraphics[width=0.34\textwidth]{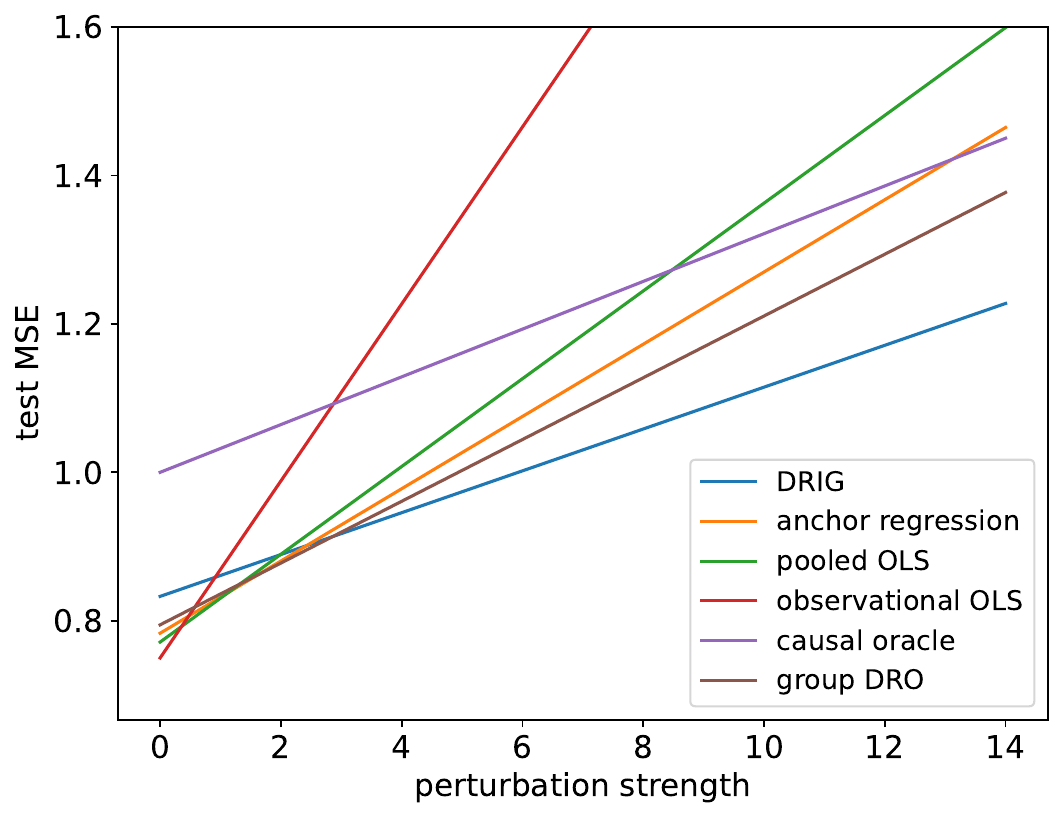}\vspace{-0.2in}\\
\end{tabular}
\caption{\small Test MSEs for varying perturbation strengths $\alpha$. \textbf{(left)}: perturbations on covariates only; \textbf{(right)}: perturbations on the covariate, response, and latent variables.}\label{fig:illus}
\end{figure}

\subsection{Infinite robustness and causality}
We analyze DRIG when $\gamma \to \infty$, and highlight how infinite robustness (as guaranteed by Theorem~\ref{thm:main}) connects to causality and invariance. Define $L^{\star} := \sum_{e \in \mathcal{E}} \omega^e (S^e-S^{0})$ and $C^\star = (I-B^\star)^{-1}$ with block forms, 
$ L^{\star} = \begin{pmatrix}L^\star_x & L^\star_{xy} \\
		{L^{\star}_{xy}}^\top & L^{\star}_{y}
	\end{pmatrix}$ and $C^\star = \begin{pmatrix}C^\star_{x} & C^\star_{xy} \\ C^{\star\top}_{yx} & C^\star_{y} \end{pmatrix},$ where $L^{\star}_y,C^\star_y \in \mathbb{R}$. We suppose that the data is generated according to the SCM \eqref{eq:scm_train}, and that the `observational' condition in Assumption~\ref{ass:strength} holds; we show in Appendix~\ref{sec:causal_weaker_assumptions} causal identifiability results of DRIG hold under a strictly weaker condition than Assumption~\ref{ass:strength}. Denote $b_\infty^{\mathrm{opt}}:=\lim_{\gamma\to\infty}b_\gamma^{\mathrm{opt}}$.
\begin{theorem} We have 
\begin{equation}\label{eq:b_insufficient}
        b_\infty^{\mathrm{opt}} = \argmin_{b\in\mathcal{I}}\min_e\bbE[(Y^e-b^\top X^e)^2],
\end{equation}
where $\mathcal{I} := \{b \in \mathbb{R}^{p}: b \text{ satisfies the gradient invariance condition in Definition~\ref{def:gi}}\}$ is a non-empty set. If additionally $\mathrm{rank}([C^\star{L^{\star}}{C^\star}^\top]_{1:p,1:p}) = p$, then $\mathcal{I}$ is a singleton, and 
\begin{equation}\label{eq:bias}
		b_\infty^{\mathrm{opt}} = b^\star + \left([C^\star{L^{\star}}{C^\star}^\top]_{1:p,1:p}\right)^{-1}(C^\star_xL^{\star}_{xy}+L^{\star}_{y}C_{xy}^\star).
	\end{equation}
\label{thm:solution_infty}
\end{theorem}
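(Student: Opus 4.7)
The plan is to recast the problem as a classical quadratic-penalty limit. Write $\cL_\gamma(b) = f_0(b) + \gamma g(b)$ with $f_0(b) := \bbE[(Y^0 - b^\top X^0)^2]$ and $g(b) := \sum_{e\in\cE}\omega^e\, \bbE[\ell(X^e,Y^e;b)] - \bbE[\ell(X^0,Y^0;b)]$. Under the standard no-collinearity assumption $\bbE[X^0 X^{0\top}]\succ 0$, the Hessian of $\cL_\gamma$ is strictly positive definite, so $b_\gamma^{\mathrm{opt}}$ is unique for every $\gamma\geq 0$. Denote $M := C^\star L^\star C^{\star\top}$ and write $M_x := [M]_{1:p,1:p}$, $M_{xy} := [M]_{1:p,p+1}$. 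Using the reduced form $(X^e,Y^e) = C^\star(\varepsilon + \delta^e)$ from \eqref{eq:scm_train}, a direct second-moment computation gives $\sum_e \omega^e\bbE[Z^e Z^{e\top}] - \bbE[Z^0 Z^{0\top}] = M$, so $g(b)$ is the quadratic form in the augmented vector $(-b,1)$ induced by $M$. Since $L^\star \succeq 0$ by the reference-environment assumption, $M \succeq 0$ and $g(b)\geq 0$; its gradient $\nabla g(b) = 2(M_x b - M_{xy})$ vanishes exactly on the gradient-invariance set $\mathcal{I} = \{b : M_x b = M_{xy}\}$ of Definition~\ref{def:gi}, and, whenever non-empty, $\mathcal{I} = \argmin_b g(b)$.

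The second step shows $\mathcal{I}$ is non-empty under the column-space hypothesis, by relating $M_{xy}$ to the causal parameter $b^\star$. Matching coefficients of $\varepsilon_x$ and $\varepsilon_y$ in the two expressions $Y = b^{\star\top}X + \varepsilon_y$ (from the last row of \eqref{eqn:observational}) and $Y = C^{\star\top}_{yx}\varepsilon_x + C^\star_y \varepsilon_y$ (from the reduced form) yields the SCM identities $C^\star_{yx} = C^{\star\top}_x b^\star$ and $C^\star_y = 1 + C^{\star\top}_{xy} b^\star$. Substituting these into the blockwise expansion of $M = C^\star L^\star C^{\star\top}$ produces the key algebraic identity $M_{xy} - M_x b^\star = C^\star_x L^\star_{xy} + L^\star_y C^\star_{xy}$. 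Hence $\mathcal{I} = \{b : M_x(b - b^\star) = C^\star_x L^\star_{xy} + L^\star_y C^\star_{xy}\}$, and the column-space hypothesis is precisely the Fredholm solvability criterion for this linear system; thus $\mathcal{I}$ is a non-empty affine subspace.

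The third step takes $\gamma \to \infty$ via a standard penalty-method argument. Pick any $b^\dagger \in \mathcal{I}$ and let $g^\star := g(b^\dagger) = \min_b g(b)$. Optimality of $b_\gamma^{\mathrm{opt}}$ gives $f_0(b_\gamma^{\mathrm{opt}}) + \gamma(g(b_\gamma^{\mathrm{opt}}) - g^\star) \leq f_0(b^\dagger)$, which bounds $f_0(b_\gamma^{\mathrm{opt}})$ uniformly in $\gamma$ and forces $g(b_\gamma^{\mathrm{opt}}) \to g^\star$. Together with $\bbE[X^0 X^{0\top}] \succ 0$, this keeps $b_\gamma^{\mathrm{opt}}$ in a bounded set; any subsequential limit $b_\infty^{\mathrm{opt}}$ achieves $g(b_\infty^{\mathrm{opt}}) = g^\star$, hence lies in $\mathcal{I}$, and satisfies $f_0(b_\infty^{\mathrm{opt}}) \leq f_0(b^\dagger)$ for every $b^\dagger \in \mathcal{I}$. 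Strict convexity of $f_0$ on the affine set $\mathcal{I}$ makes $\argmin_{b\in\mathcal{I}} f_0(b)$ a singleton, giving the first claim and uniqueness of the limit. When $\mathrm{rank}(M_x) = p$, the linear system determining $\mathcal{I}$ has the unique solution $b^\star + M_x^{-1}(C^\star_x L^\star_{xy} + L^\star_y C^\star_{xy})$, yielding \eqref{eq:bias}.

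The main technical obstacle is the key identity in the second step: expanding $M = C^\star L^\star C^{\star\top}$ blockwise produces four summands each in $M_x$ and $M_{xy}$, so $M_{xy} - M_x b^\star$ is a priori an eight-term expression that must collapse to $C^\star_x L^\star_{xy} + L^\star_y C^\star_{xy}$ after substituting the two SCM identities relating $C^\star$ and $b^\star$. The penalty-method convergence in the third step is standard, but strict convexity of $f_0$ is essential when $\mathrm{rank}(M_x) < p$ so that $\mathcal{I}$ is a genuine affine subspace rather than a single point.
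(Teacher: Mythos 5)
Your proposal is correct and follows essentially the same route as the paper: both reduce the penalty term to the quadratic form $(-b,1)^\top C^\star L^\star C^{\star\top}(-b,1)$, identify its stationary points with the gradient-invariance set via the identity $M_{xy}-M_xb^\star=C^\star_xL^\star_{xy}+L^\star_yC^\star_{xy}$ (which the paper obtains through the same SCM relations $C^\star_{yx}=C^{\star\top}_xb^\star$, $C^\star_y=1+C^{\star\top}_{xy}b^\star$), and read off \eqref{eq:bias} in the full-rank case. Your write-up is in fact somewhat more complete than the paper's, which only records the stationarity equation and leaves the $\gamma\to\infty$ penalty-limit argument and the rank-deficient case (the $\argmin_{b\in\mathcal{I}}$ characterization) implicit.
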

We prove Theorem~\ref{thm:solution_infty}
 in Appendix \ref{proof_prop_solution_infty}. The first part of the theorem states that DRIG with $\gamma \to \infty$ identifies -- among models in $\mathcal{I}$ that have invariant gradient -- the most predictive model in the reference environment. The second part states that if the aforementioned subspace is full dimensional, the set of gradient invariant models $\mathcal{I}$ is a singleton; appealing to \eqref{eq:b_insufficient}, the unique element in $\mathcal{I}$ is the solution of DRIG when $\gamma \to \infty$, and is characterized explicitly in \eqref{eq:bias}. We provide a thorough discussion on Theorem~\ref{thm:solution_infty} in Appendix~\ref{sec:causal-identify}. In particular, 
we investigate how $b_\infty^{\mathrm{opt}}$ is related to the causal parameter $b^\star$ under various scenarios of interventions and causal structures. To summarize, $b_\infty^{\mathrm{opt}}$ recovers $b^\star$ when assuming sufficient interventions on the covariates and no interventions on the response and latent variables (i.e., $\mathrm{rank}(L^\star_x)=p$ and $L^\star_{xy}=L^\star_y = 0$). In addition, we study the bias of $b_\infty^{\mathrm{opt}}$ in estimating the causal parameter, when allowing for interventions on $Y$ or the latent variables, or when encountering insufficient interventions on $X$. 

In general, causal identification requires stronger assumptions about the underlying data distribution than those needed for robust prediction; the robustness guarantee in Theorem~\ref{thm:main} remains valid regardless of the fulfillment of the identifiability conditions. This further highlights the merit of causality-oriented robustness for wider and more realistic applications. 
\section{Calibrating DRIG via semi-supervised data}\label{sec:adaptive}


We consider a semi-supervised domain adaptation setting, where we have a set of unlabeled test or target examples and possibly a small set of labeled test examples. Data from the target distribution provides some information on the strength of interventions we may encounter and thus making use of such information could allow us to calibrate our prediction model. 

As an example of a semi-supervised setting, consider the application in Section~\ref{sec:real_data_experiments}, where our training data consists of patient information and their heart rates 48 hours after entering the intensive care unit (ICU) across multiple hospitals. Suppose our goal is to perform real-time predictions of ICU patients' heart rates after 48 hours in a new hospital. From this new hospital, we may have covariate data on patients entering the ICU, and since there is a 48-hour delay, only a small amount of heart rate measurements. 

Throughout, we assume the training data is generated according to \eqref{eq:scm_train} and that Assumption~\ref{ass:strength} holds. Suppose the test distribution ${P}_\mathrm{test}$ is generated according to the SCM \eqref{eq:scm_test} with an unknown intervention variable $v$. We let ${P}_\mathrm{test}^x$ be the marginal distribution of the covariates $X$. We 
assume that we are given a collection of i.i.d.\ labeled test samples $\{(X^v_i,Y^v_i)\sim {P}_\mathrm{test},i=1\dots,n_l\}$ with a small (or possibly zero) $n_l$ and a collection of i.i.d.\ unlabeled test samples $\{X^v_i\sim {P}_\mathrm{test}^x,i=1\dots,n_u\}$ with $n_u$ fairly large. Let $G^e = \mathbb{E}[Z^e{Z^e}^\top]$ for every $e\in\mathcal{E}$ with $Z^e = (X^e,Y^e)$, $G^v_x:=\bbE[ X^v{X^v}^\top]$, $G_{xy}^v:=\bbE[X^vY^v]$, and their estimates based on the test samples $\hat{G}^v_x:=\frac{1}{n_u}\sum_{i=1}^{n_u}X^v_i{X^v_i}^\top$, $\hat{G}_{xy}^v:=\frac{1}{n_l}\sum_{i=1}^{n_l}X^v_iY^v_i$, and $\hat{G}^v_{y}:=\frac{1}{n_l}\sum_{i=1}^{n_l}(Y^v_i)^2$ with $\hat{G}^v := \begin{pmatrix}\hat{G}_x^v & \hat{G}_{xy}^v \\\hat{G}_{yx} & \hat{G}_y\end{pmatrix}$. 

A naive prediction model is based on OLS under the test distribution. The population version of the test OLS is given by $\argmin_b (-b,1)\mathbb{E}[vv^\top](-b,1)^\top$ and the associated estimator based on the labeled and unlabeled test samples is given by $\argmin_b (-b,1)\hat{G}^v(-b,1)^\top$ where $\hat{G}^v$ is the plug-in estimator for $G^v$. Naturally, if the number of labeled and unlabeled test samples $n_l,n_u$ tend to infinity, the finite sample OLS minimizes the test MSE with high probability. However, in our setting of a small number of labeled test samples, finite-sample OLS can have a high variance and perform poorly.
Our objective is to calibrate DRIG to achieve a small test MSE under ${P}_\mathrm{test}$ by exploiting both the heterogeneity within the training data and the limited test samples.

{\subsection{DRIG-A: selecting weights $\omega^e$ and $\gamma$}}
\label{sec:weights_gamma}
DRIG \eqref{eqn:DRIG} can be equivalently reformulated as $\argmin_{b}(-b,1)[G^{0}+\sum_{e\in\mathcal{E}}\tilde{\omega}^e(G^e-G^0)](-b,1)^\top$ where $\tilde{\omega}^e = \gamma\omega^e$ for each $e \in \mathcal{E}$. Let $J:=G^{0}+\sum_{e\in\mathcal{E}}\tilde{\omega}^e(G^e-G^0)$; we hide the dependency of $J$ on the weights $\tilde{\omega}^e$. Naturally, based on semi-supervised data, we can choose $\tilde{\omega}^e$ to align the DRIG estimate to the OLS estimate via the following convex optimization problem:  
\begin{eqnarray*}
\begin{aligned}
\tilde{\omega}^{e}_\mathrm{opt} &:= \argmin_{\{\tilde{\omega}^e\}_{e\in\mathcal{E}} \geq 0} n_u\|J_{1:p,1:p} - \hat{G}^v_{1:p,1:p}\|_F^2 + n_l(2\|J_{1:p,p+1} - \hat{G}^v_{1:p,p+1}\|_F^2+\|J_{p+1,p+1} - \hat{G}^v_{p+1,p+1}\|_F^2).
\end{aligned}
\end{eqnarray*}
Further, set $\gamma = \sum_{e\in\mathcal{E}}\tilde{\omega}^e_\mathrm{opt}$ and $\omega^e = \tilde{\omega}^e_\mathrm{opt}/\gamma$. We then supply this choice of hyperparameters $\gamma,\omega^e$ to \eqref{eqn:DRIG}. The resulting estimator again satisfies a similar robustness guarantees as Theorem~\ref{thm:main}; see Appendix~\ref{sec:robustness_drig-A} for more results including discussion on finite-sample consistency guarantees. Notice that DRIG-A may also be applied even without any labeled samples (when $n^l = 0$). Its numerical results are shown in the single-cell application below. 


\subsection{DRIG-A+: More hyperparameters, more flexible robustness}
\label{sec:adaptive_hyperparameters}
In Section~\ref{sec:weights_gamma}, we described how $\gamma,\omega^e$ may be chosen to adapt DRIG to a test environment of interest. In essence, this approach aims to choose $\gamma,\omega^e$ to adjust the shape and size of the set $\mathcal{C}_\mathrm{DRIG}^\gamma$ such that the second moment $\mathbb{E}[vv^\top]$ of the intervention $v$ in the test environment lies close to its boundary -- if $\mathbb{E}[vv^\top]$ lies exactly on the boundary, then DRIG yields the best linear prediction model in population. However, when the number of environments is much smaller than the number of observed variables, we may not have enough degrees of freedom to make $\mathbb{E}[v{v}^\top]$ be close to the boundary of $\mathcal{C}_\mathrm{DRIG}^\gamma$.

To remedy this potential issue -- particularly when the number of unlabeled test samples $n^u$ is large and the number of labeled test samples $n^l$ is not too small -- we propose an extension of the original formulation that allows for more flexible control over the shape and size of the perturbation class.  We consider a matrix of hyperparameters $\Gamma$ in the form $\Gamma=\mathrm{diag}(\Gamma_x,\gamma_y)$
with $\Gamma_x\in\bbR^{p\times p}$ and $\gamma_y\in\bbR$. Given a positive semidefinite matrix $\Gamma$, we define the population version of the modified DRIG estimator, dubbed DRIG-A+, as
\begin{equation}
b^\mathrm{opt}_\Gamma  := \argmin_{b} \Big\{\min_{e\in\mathcal{E}}\bbE[(Y^e-b^\top X^e)^2] + \sum_{e \in \mathcal{E}}\omega^e\big(\bbE[\gamma_y Y^e - b^\top\Gamma_x X^e]^2 - \min_{e\in\mathcal{E}}\bbE[\gamma_y Y^e - b^\top\Gamma_x X^e]^2\big)\Big\}.
\label{eqn:drig_A}
\end{equation}

Note that when $\Gamma=\gamma I_{p+1}$ with a scalar $\gamma\ge0$, DRIG-A+ estimator $b^{\mathrm{opt}}_\Gamma$ reduces to the original DRIG estimator $b^{\mathrm{opt}}_\gamma$ in \eqref{eqn:DRIG} . Thus, the DRIG-A+ method is a generalization of DRIG with potentially more hyperparameters. As we show in the following theorem, the additional parameters provide flexibility in controlling both the size and shape of the perturbation class. We define $\tilde\Gamma:=(I-B^\star)\Gamma(I-B^\star)^{-1}$ for notational clarity.

\begin{theorem}\label{thm:robust_adap} 
The DRIG-A+ estimator
$b^\mathrm{opt}_\Gamma$ is the solution to the worst-case risk minimization \eqref{eqn:worst-case} with $\mathcal{C} = \mathcal{C}_\mathrm{DRIG\text{-}A+}^\Gamma := \left\{v \in \mathbb{R}^{p+1}:\bbE[vv^\top]\preceq S^0+\tilde\Gamma\sum_{e\in\mathcal{E}}\omega^e(S^e-S^0)\tilde\Gamma^\top\right\}$.
\end{theorem}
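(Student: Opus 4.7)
The plan is to mirror the proof of Theorem~\ref{thm:main}, which handles the scalar-$\gamma$ case, and reduce Theorem~\ref{thm:robust_adap} to an algebraic identity enabled by the very definition $\tilde\Gamma=(I-B^\star)\Gamma(I-B^\star)^{-1}$. The key observation is that both the DRIG-A objective and the worst-case risk are quadratic forms in the augmented vector $\tilde b := (-b^\top,\, 1)^\top \in \bbR^{p+1}$, and the matrix $\Gamma$ enters in a way that can be pushed through $C^\star = (I-B^\star)^{-1}$ at the price of replacing $\Gamma$ by $\tilde\Gamma$.

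First, I would reparameterize the losses. With $Z^e := (X^{e\top},\, Y^e)^\top$, one has $Y^e - b^\top X^e = \tilde b^\top Z^e$ and, for $\Gamma = \mathrm{diag}(\Gamma_x, \gamma_y)$, $\gamma_y Y^e - b^\top \Gamma_x X^e = (\Gamma^\top \tilde b)^\top Z^e$. Hence the DRIG-A objective becomes
\begin{equation*}
\mathcal{L}_\Gamma(b) \;=\; \tilde b^\top\, \bbE[Z^0 Z^{0\top}]\,\tilde b \;+\; \tilde b^\top \Gamma \Big(\textstyle\sum_{e\in\cE}\omega^e\big(\bbE[Z^e Z^{e\top}] - \bbE[Z^0 Z^{0\top}]\big)\Big) \Gamma^\top \tilde b.
\end{equation*}
From the SCM \eqref{eq:scm_train}, $Z^e = C^\star(\varepsilon + \delta^e)$, and independence of $\varepsilon, \delta^e$ together with $\mu^0=0$ gives $\bbE[Z^eZ^{e\top}] - \bbE[Z^0Z^{0\top}] = C^\star(S^e - S^0 + \mu^e\mu^{e\top})C^{\star\top}$, while $\bbE[Z^0Z^{0\top}] = C^\star(\bbE[\varepsilon\varepsilon^\top] + S^0)C^{\star\top}$.

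Second, I would expand the worst-case risk. Since $Z^v = C^\star(\varepsilon + v)$ with $\varepsilon \perp v$,
\begin{equation*}
\bbE[(Y^v - b^\top X^v)^2] \;=\; \tilde b^\top C^\star \bbE[\varepsilon\varepsilon^\top] C^{\star\top} \tilde b \;+\; \tilde b^\top C^\star\, \bbE[vv^\top]\, C^{\star\top} \tilde b.
\end{equation*}
Because $\tilde b^\top C^\star M C^{\star\top}\tilde b$ is monotone in $M$ in the Loewner order, the supremum over $\{v : \bbE[vv^\top] \preceq M_\Gamma\}$, with $M_\Gamma := S^0 + \tilde\Gamma \sum_e \omega^e(S^e - S^0 + \mu^e\mu^{e\top})\tilde\Gamma^\top$, is attained at the boundary (e.g.\ by a mean-zero Gaussian with that covariance), yielding
\begin{equation*}
\sup_{v\in\cC_\mathrm{DRIG\text{-}A}^\Gamma} \bbE[(Y^v - b^\top X^v)^2] \;=\; \tilde b^\top C^\star\bbE[\varepsilon\varepsilon^\top]C^{\star\top}\tilde b + \tilde b^\top C^\star S^0 C^{\star\top}\tilde b + \tilde b^\top C^\star \tilde\Gamma \Big(\textstyle\sum_e\omega^e(\cdots)\Big) \tilde\Gamma^\top C^{\star\top}\tilde b.
\end{equation*}

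Third, the two expressions must be compared. The $\varepsilon$ and $S^0$ contributions already match. The remaining cross-interaction term matches for every $\tilde b$ if and only if $\Gamma C^\star = C^\star \tilde\Gamma$, which is precisely the definition $\tilde\Gamma = (I-B^\star)\Gamma(I-B^\star)^{-1}$. Substituting this identity makes $\mathcal{L}_\Gamma(b)$ equal to $\sup_{v\in\cC_\mathrm{DRIG\text{-}A}^\Gamma}\bbE[\ell(X^v,Y^v;b)]$ up to an additive constant in $b$, so the two minimizers coincide, proving the theorem.

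The only subtle step is the worst-case step: one must argue that it suffices to track $\bbE[vv^\top]$ (the risk depends on $v$ only through its second moment) and that the sup over the PSD-ordered set is indeed attained — a standard argument since $M \mapsto \tilde b^\top C^\star M C^{\star\top}\tilde b$ is linear and monotone on the PSD cone. Everything else is the same bookkeeping as Theorem~\ref{thm:main}, with scalars replaced by matrices and the similarity transform $C^{\star -1} \Gamma C^\star = \tilde\Gamma$ absorbing the change of variables.
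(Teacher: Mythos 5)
Your proposal is correct and takes essentially the same route as the paper's proof: both rewrite the DRIG-A objective and the worst-case risk as quadratic forms in transformed residual vectors (the paper's $w$ and $\tilde w$ are exactly your $C^{\star\top}\tilde b$ and $C^{\star\top}\Gamma^\top\tilde b$), reduce second moments of $(X^e,Y^e)$ to those of $\varepsilon+\delta^e$ via the SCM, and match the perturbation terms through the similarity identity $\Gamma C^\star = C^\star\tilde\Gamma$. The only cosmetic difference is that you make the attainment of the supremum (linearity and monotonicity of $M\mapsto \tilde b^\top C^\star M C^{\star\top}\tilde b$ on the PSD cone) slightly more explicit than the paper does.
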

We prove Theorem~\ref{thm:robust_adap} in Appendix~\ref{proof_thm_robust_adap}. The result states that the DRIG-A prediction model is robust against test perturbations that are in the set $\mathcal{C}^\Gamma_\mathrm{DRIG\text{-}A+}$; both the size and shape of the perturbation class $\mathcal{C}^\Gamma_\mathrm{DRIG\text{-}A+}$ can be adjusted by an appropriate choice of $\Gamma$.  It is worth noting that while DRIG-A+ can often provide more robustness compared to the original DRIG formulation, it in general moves further away from causality. In particular, we show in Appendix~\ref{app:causal_driga} that when $\Gamma_x/\gamma_y \neq I$, \eqref{eqn:drig_A} does not recover the causal parameter even when it is identifiable (e.g., the setting in Corollary~\ref{cor:identifiable}). This phenomenon highlights the trade-off between prediction and causality: DRIG-A+, compared to DRIG, is designed more towards the goal of prediction (see Figure~\ref{fig:directional_shift},right).

In principle, one can select the matrix $\Gamma$ based on some prior or expert knowledge on the relation between the test and training data. More generally, we can use semi-supervised data from the test distribution to choose $\Gamma$. A naive prediction model is based on the OLS estimator $\hat{b}_{\mathrm{tOLS}} := ({\hat{G}^v_x})^{-1}\hat{G}_{xy}^v$ under the test distribution. In Appendix \ref{eq:selecting_Gamma}, we describe how to specify $\Gamma$ in \eqref{eqn:drig_A} so that the test MSE achieved by DRIG-A+ is smaller than the one achieved by $\hat{b}_{\mathrm{tOLS}}$. {In our scheme, we let $\Gamma = \text{diag}(\Gamma_x,\gamma_y)$, where we choose the matrix $\Gamma_x \in \mathbb{R}^{p \times p}$ using the large amount of unlabled samples, and the scalar $\gamma_y$ using the labeled samples. Since only one hyperparameter is chosen from labeled samples and the rest are chosen from unlabeled samples, DRIG-A+ can be useful in many semi-supervised settings.}

For coefficient $b \in \mathbb{R}^p$, denote the population test MSE by $\mathcal{L}_\mathrm{test}(b)=\bbE[(Y^v - b^\top{X}^v)^2]$. The following theorem, with proof in Appendix~\ref{proof_thm_adap_mse}, highlights the advantage of using the DRIG-A+ estimator with $\hat{\Gamma}=\mathrm{diag}(\hat\Gamma_x,\hat\gamma_y)$
over $\hat{b}_{\mathrm{tOLS}}$. While the result considers access to the training distributions, similar results can be established for finite training samples.

\begin{theorem}\label{thm:adap_mse}
Assume $p>1$ and $\var(X^vY^v)\succ \big(\bbE[X^vY^v]-\bbE[X^0Y^0]\big)\big(\bbE[X^vY^v]-\bbE[X^0Y^0]\big)^\top$. {Assume further that $X^v,Y^v$ have bounded second moments and $\|b\|\le B$ for some $B>0$.} Then there exist positive integers $N_l$ and $N_u$ such that when $n_u\ge N_u$ and $n_l\leq N_l$, we have $\bbE [\mathcal{L}_\mathrm{test}({b}_{\hat{\Gamma}}^\mathrm{opt})]<\bbE [\mathcal{L}_\mathrm{test}(\hat{b}_{\mathrm{tOLS}})]$, where the expectation is taken over all test samples.
\end{theorem}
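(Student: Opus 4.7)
The plan is to compare the two estimators via a bias--variance decomposition of their expected test risks, exploiting that under DRIG-A the labeled test samples are used only to estimate the \emph{scalar} $\hat\gamma_y$, whereas under test OLS they are used to estimate the full $p$-dimensional cross-moment $\hat\Sigma^v_{xy}$. Writing $b^v := (\Sigma^v_x)^{-1}\Sigma^v_{xy}$ for the population test OLS and using the identity $\mathcal{L}_{\mathrm{test}}(b) - \mathcal{L}_{\mathrm{test}}(b^v) = (b - b^v)^\top \Sigma^v_x (b - b^v)$, the theorem reduces to comparing $\bbE\,\|\hat{b} - b^v\|_{\Sigma^v_x}^2$ for the two estimators. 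I would first work in the \emph{oracle} regime $n_u \to \infty$, in which, by the strong law, $\hat\Sigma^v_x \to \Sigma^v_x$ almost surely and hence $\hat\Gamma_x \to \Gamma^\star_x$, leaving only the randomness from the $n_l$ labeled samples; a perturbation argument is then used to lift the conclusion to finite but large $n_u$.

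In the oracle regime, test OLS becomes $\tilde{b}_{\mathrm{tOLS}} = (\Sigma^v_x)^{-1}\hat\Sigma^v_{xy}$, which is unbiased for $b^v$ with covariance $(\Sigma^v_x)^{-1}\var(X^vY^v)(\Sigma^v_x)^{-1}/n_l$, so
\begin{equation*}
\bbE\bigl[\mathcal{L}_{\mathrm{test}}(\tilde{b}_{\mathrm{tOLS}})\bigr] - \mathcal{L}_{\mathrm{test}}(b^v) \;=\; \frac{1}{n_l}\,\tr(M), \qquad M := (\Sigma^v_x)^{-1/2}\var(X^vY^v)(\Sigma^v_x)^{-1/2}.
\end{equation*}
For oracle DRIG-A, the estimator is $\tilde{b}^{\mathrm{opt}}_{\hat\Gamma} = (\Sigma^v_x)^{-1}\bigl(\bbE[X^0Y^0] + \hat\gamma_y\,\Gamma^\star_x \Delta_{xy}\bigr)$, and $\hat\gamma_y$ is a linear functional of $\hat\Sigma^v_{xy}$ with $\bbE[\hat\gamma_y] = \gamma^\star_y$ by construction. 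Setting $u := \Gamma^\star_x \Delta_{xy}$, $w := (\Sigma^v_x)^{-1/2} u$, and decomposing $\tilde{b}^{\mathrm{opt}}_{\hat\Gamma} - b^v = (b^{\mathrm{opt}}_{\Gamma^\star} - b^v) + (\hat\gamma_y - \gamma^\star_y)(\Sigma^v_x)^{-1} u$, the cross term vanishes in expectation by unbiasedness of $\hat\gamma_y$, and a short computation of $\var(\hat\gamma_y)$ yields
\begin{equation*}
\bbE\bigl[\mathcal{L}_{\mathrm{test}}(\tilde{b}^{\mathrm{opt}}_{\hat\Gamma})\bigr] - \mathcal{L}_{\mathrm{test}}(b^v) \;=\; R(b^{\mathrm{opt}}_{\Gamma^\star}) \;+\; \frac{1}{n_l}\cdot\frac{w^\top M w}{\|w\|^2},
\end{equation*}
where $R(b^{\mathrm{opt}}_{\Gamma^\star}) := \mathcal{L}_{\mathrm{test}}(b^{\mathrm{opt}}_{\Gamma^\star}) - \mathcal{L}_{\mathrm{test}}(b^v)$ is the irreducible bias of the one-parameter family.

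The decisive step is the Rayleigh bound $w^\top M w/\|w\|^2 \le \lambda_{\max}(M) < \tr(M)$, where strictness uses $p > 1$ together with the hypothesis $\var(X^vY^v) \succ (\bbE[X^vY^v] - \bbE[X^0Y^0])(\bbE[X^vY^v] - \bbE[X^0Y^0])^\top \succeq 0$, which forces $\var(X^vY^v) \succ 0$ and hence $M$ has $p \ge 2$ strictly positive eigenvalues. Combining the two expectations yields
\begin{equation*}
\bbE\bigl[\mathcal{L}_{\mathrm{test}}(\tilde{b}_{\mathrm{tOLS}})\bigr] - \bbE\bigl[\mathcal{L}_{\mathrm{test}}(\tilde{b}^{\mathrm{opt}}_{\hat\Gamma})\bigr] \;=\; \frac{1}{n_l}\Bigl(\tr(M) - \frac{w^\top M w}{\|w\|^2}\Bigr) - R(b^{\mathrm{opt}}_{\Gamma^\star}),
\end{equation*}
which is strictly positive whenever $n_l$ is below the deterministic threshold $(\tr(M) - w^\top M w/\|w\|^2)/R(b^{\mathrm{opt}}_{\Gamma^\star})$ (and for every $n_l$ when $R(b^{\mathrm{opt}}_{\Gamma^\star}) = 0$). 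Taking $N_l$ to be the largest positive integer below this threshold gives the oracle form of the theorem.

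To lift this to finite $n_u$, I would expand $(\hat\Sigma^v_x)^{-1}$ as a Neumann series on the high-probability event $\{\|\hat\Sigma^v_x - \Sigma^v_x\| \le \epsilon\}$ and combine it with a truncation bound on the complementary event to show that both $\bbE[\mathcal{L}_{\mathrm{test}}(\hat{b}_{\mathrm{tOLS}})]$ and $\bbE[\mathcal{L}_{\mathrm{test}}(b^{\mathrm{opt}}_{\hat\Gamma})]$ converge to their oracle values uniformly in $n_l \le N_l$ as $n_u \to \infty$; choosing $N_u$ large enough that the residual slack is smaller than the oracle gap then delivers the claimed strict inequality. The main obstacle is precisely this perturbation step: controlling the expectation of $\hat{b}_{\mathrm{tOLS}}$-dependent remainders without a deterministic lower bound on $\lambda_{\min}(\hat\Sigma^v_x)$ will likely require either a sub-Gaussian tail assumption on $X^v$ or a careful truncation argument that exploits only the second-moment conditions already present in the hypotheses.
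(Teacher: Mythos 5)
Your overall architecture coincides with the paper's: work first in the oracle regime $n_u\to\infty$, decompose the excess test risk of each estimator into a variance term of order $1/n_l$ plus (for DRIG-A) a fixed bias term, observe that DRIG-A's variance is the rank-one compression $\xi^\top M\xi$ of the full trace $\tr(M)$ because only the scalar $\hat\gamma_y$ is estimated from the labeled data, and then lift to finite $n_u$ by a law-of-large-numbers/continuity argument. Your bias and variance formulas agree with the paper's (with $M=\tilde\Sigma$, $w/\|w\|=\xi$, and $R(b^{\mathrm{opt}}_{\Gamma^\star})=\eta^\top(I-\xi\xi^\top)\eta$ for $\eta=(\Sigma^v_x)^{-1/2}(\Sigma^v_{xy}-\bbE[X^0Y^0])$).

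However, there is a genuine gap in your decisive step. You apply the Rayleigh bound to $M$ alone, using the hypothesis only through its weak consequence $\var(X^vY^v)\succ0$, and conclude that the variance reduction $\tr(M)-w^\top Mw/\|w\|^2$ is strictly positive. That alone does not prove the theorem: your candidate threshold $\bigl(\tr(M)-w^\top Mw/\|w\|^2\bigr)/R(b^{\mathrm{opt}}_{\Gamma^\star})$ could be at most $1$ when the bias $R$ is large, in which case there is no positive integer $N_l$ below it and the claimed inequality fails already at $n_l=1$. What must be shown is that the variance reduction strictly exceeds the bias, i.e.\ $\tr(M)-\xi^\top M\xi>\tr(\eta\eta^\top)-\xi^\top\eta\eta^\top\xi$, and this is exactly where the full strength of the hypothesis enters: conjugating $\var(X^vY^v)\succ\bigl(\bbE[X^vY^v]-\bbE[X^0Y^0]\bigr)\bigl(\bbE[X^vY^v]-\bbE[X^0Y^0]\bigr)^\top$ by $(\Sigma^v_x)^{-1/2}$ gives $M-\eta\eta^\top\succ0$, and applying your Rayleigh/trace inequality to the matrix $K=M-\eta\eta^\top$ (rather than to $M$) yields $\tr(K)>\xi^\top K\xi$, which rearranges to the needed dominance and guarantees the threshold exceeds $1$. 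This is precisely the paper's Lemma on $\tr(K)>\tr(\nu\nu^\top K\nu\nu^\top)$ applied with $K=\tilde\Sigma-\eta\eta^\top$. Your Step II plan is otherwise sound, and your concern about uniform integrability when passing expectations to the limit in $n_u$ is, if anything, more careful than the paper's appeal to the weak law and Slutsky.
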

The first condition 
indicates that the variance of the cross term $X^vY^v$ exceeds the expected difference between the cross terms on the test and observational distributions.
Then, Theorem~\ref{thm:adap_mse} implies that our DRIG-A+ estimator ${b}_{\hat{\Gamma}}^\mathrm{opt}$ is favored over $\hat{b}_{\mathrm{tOLS}}$ in terms of the test MSE, when we have sufficiently many unlabeled samples and not sufficiently many labeled samples from the test distribution. It is not yet clear when the gap between the two MSEs is significant. In Appendix \ref{sec:illustration_DRIG_A+}, we use simulations to empirically demonstrate the advantage of our adaptive estimator; see also real data analysis in Section~\ref{sec:real_data_experiments}.

\section{Real data analysis }
\label{sec:real_data_experiments}
\subsection{Single-cell data}
\citet{replogle2022mapping} published a large-scale single-cell RNA sequencing dataset where they performed genome-scale Perturb-seq targeting on all expressed genes with CRISPR perturbations across millions of human cells. We utilize the dataset on the RPE1 cells, as it focuses on putatively important genes and tend to respond more to interventions. After preprocessing the data following \citet{chevalley2022causalbench}, we arrive at 10 genes with the highest expression level as the observed variables. We regard one gene as the response variable and the others as covariates, with the reasoning given in Appendix~\ref{app:singlecell}. Our training data contains 11,485 observational data and 10 interventional environments in each of which one of the 10 genes is intervened on. The sample sizes of the interventional environments range from 100 to 500. 

Moreover, we have hundreds of additional environments, each of which involves the intervention on one hidden gene (i.e., a gene that is not among the 10 observed ones). These environments, potentially different from the training environments, serve as the test distributions to assess the robustness 
of prediction models.

We apply DRIG and anchor regression with different $\gamma$ {as well as group DRO}, and evaluate the estimated models on the test environments. Among the hundreds of test environments, we select 50 environments where the observational OLS performs the worst, indicating the presence of large distributional shifts. Figure~\ref{fig:sc_test} presents the boxplots of the MSEs on the 50 test environments for different methods with varying $\gamma$. DRIG with an increasing $\gamma$ achieves a smaller worst-case test MSE, which is consistent with Theorem~\ref{thm:main}. Similarly, anchor regression also demonstrates similar robustness behavior, although it generally performs worse than DRIG, exhibiting a larger worst-case MSE. This discrepancy {indicates}
that shifts among different environments arise due to random interventions that affect not only the means but also the variances, and DRIG is able to better exploit the rich heterogeneity. {Group DRO is inferior to DRIG or anchor regression with positive regularizations.} Recall that DRIG with $\gamma=0$ yields the observational OLS and DRIG with $\gamma=1$ is the pooled OLS. We observe that all the shown quantiles of MSEs decrease as $\gamma$ increases, especially the worst-case error, indicating that the OLS estimators are inferior to DRIG. Overall, the results here highlight the superiority of DRIG in handling distribution shifts and achieving robust predictions.



\begin{figure}
\centering
\includegraphics[width=0.8\textwidth]{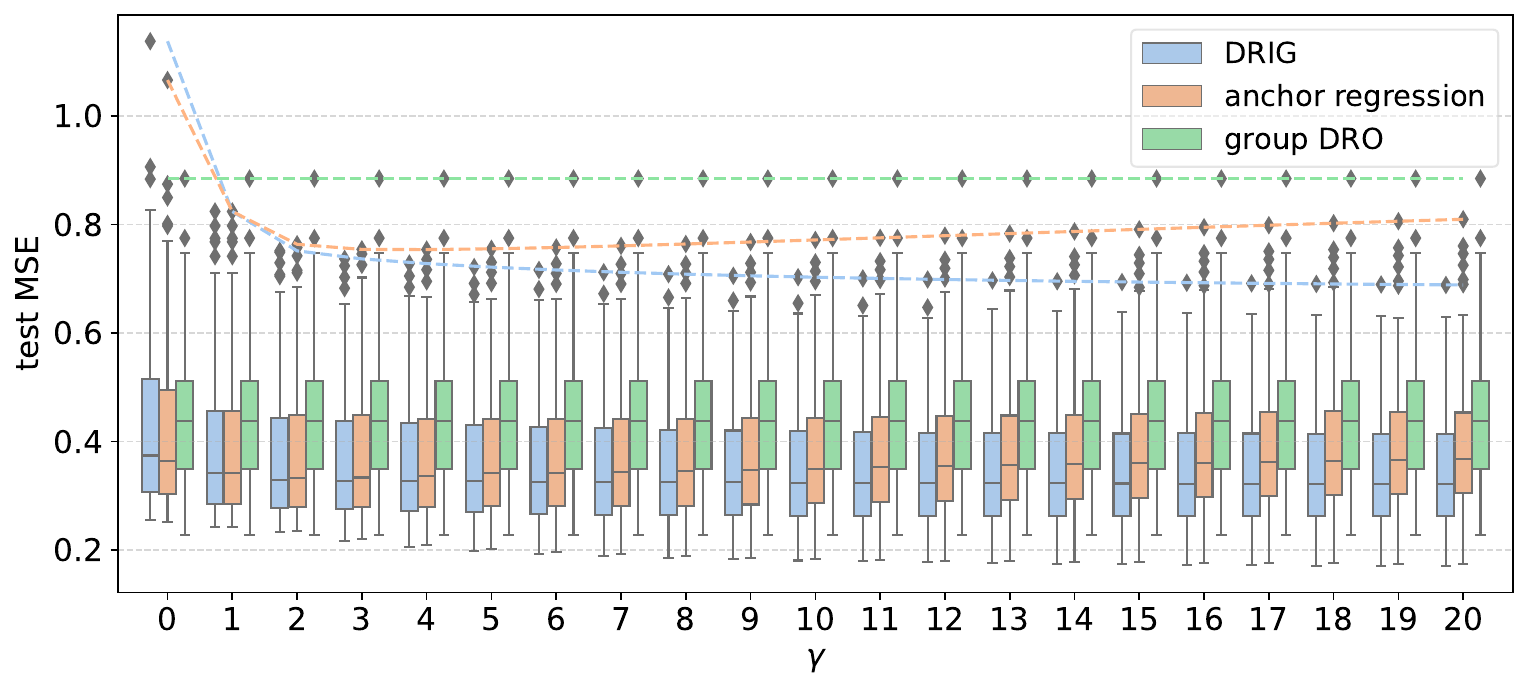} \vspace{-0.2in}
\caption{\small Boxplots of the MSEs on 50 test environments for each method with varying $\gamma$, with the worst-case MSE shown in the dashed lines on top. }\label{fig:sc_test}
\end{figure}

\begin{figure}
\centering
	\begin{tabular}{@{}cc@{}}
		\includegraphics[width=0.44\textwidth]{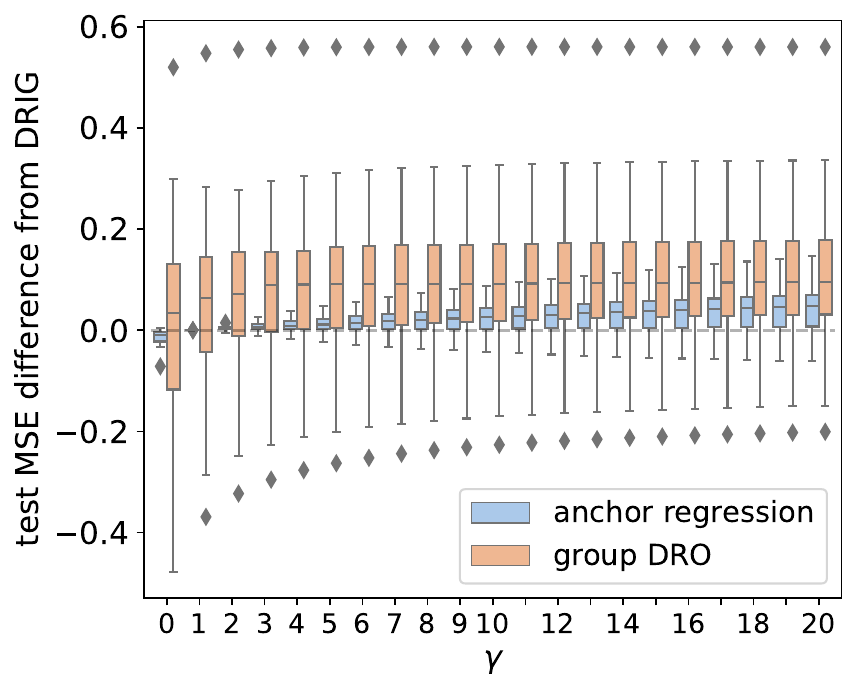} & \includegraphics[width=0.5\textwidth]{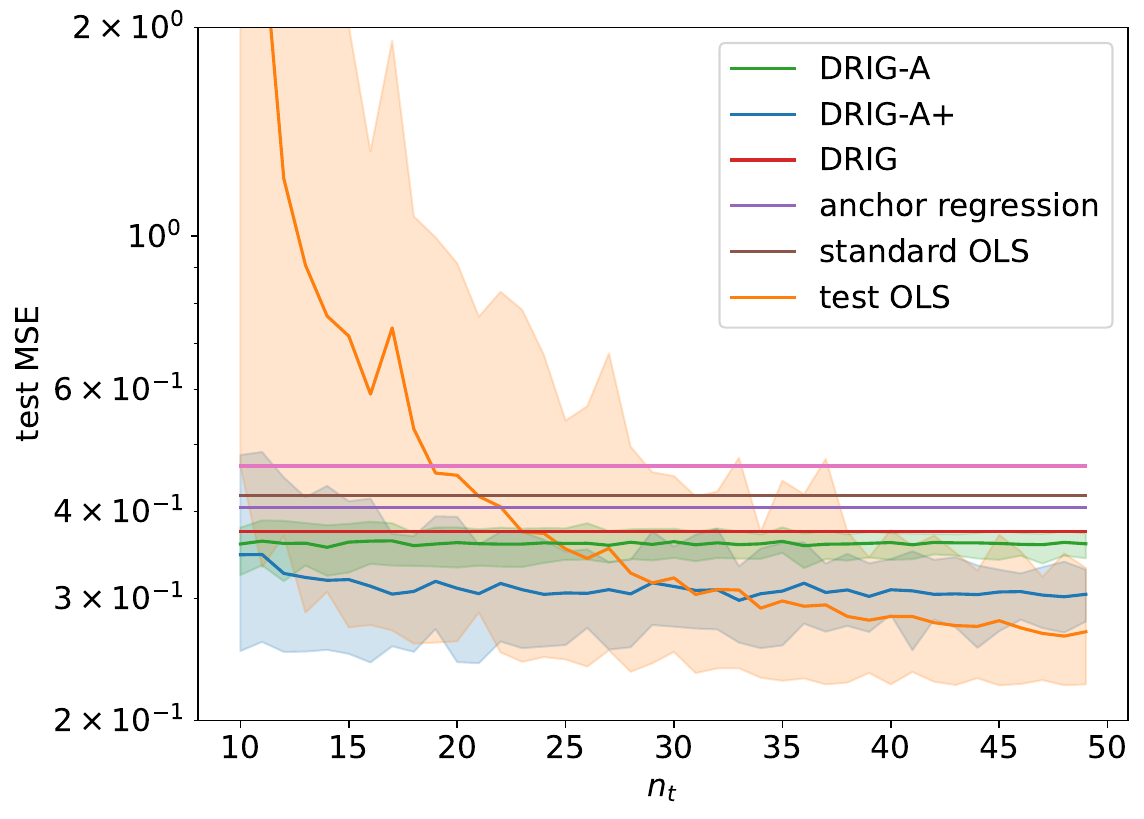} \vspace{-0.2in}
	\end{tabular}
\caption{\small (\textbf{left}) The difference of test-MSE of anchor regression and group DRO with the test MSE of DRIG for all $50$ test environments. (\textbf{right}) Performance of DRIG-A and DRIG-A+ for varying labeled sample sizes, in comparison to test-OLS and other methods that rely only on the training data. DRIG and anchor regression use fixed $\gamma=10$. Lines represent the mean and 2.5\% and 97.5\% quantiles.}\label{fig:sc_adap}
\end{figure}

In addition, it is worth noting that the robustness measured by the worst-case performance tends to stabilize with a moderate value of $\gamma$. For example, in this case, once $\gamma$ exceeds a certain threshold, such as $\gamma>5$, the performance becomes relatively stable. This suggests that there is less concern about meticulously selecting the value of $\gamma$ in order to achieve better robustness than standard approaches like OLS. Nevertheless, in Appendix~\ref{app:singlecell}, we investigate the performance of DRIG on test environments by some specific interventions. This indicates that the choice of $\gamma$ could still have a potentially crucial impact on the performance for particular test distributions and brings up the issue of selection of $\gamma$. 

{We further investigate how the methods compare on the same environment. Figure~\ref{fig:sc_adap}(left) shows the boxplots of the differences between MSEs of a competitive method and that of DRIG for each environment. DRIG leads to better prediction performance on most environments, especially with a larger $\gamma$. }


{When a small labeled sample from the test environment is available, our adaptive methods DRIG-A and DRIG-A+ can enhance prediction performance without manually selecting $\gamma$. For evaluation, to ensure a larger test sample size, we pool the aforementioned 50 test environments together as our new test domain, which is a mixture of various interventions. Given a test sample size $n_l$, we randomly draw a subsample from the test domain and apply DRIG-A, DRIG-A+, and test OLS. As shown in Figure~\ref{fig:sc_adap}, with a relatively small number of labeled test data, DRIG-A and DRIG-A+ outperform all other methods that rely solely on the training data including DRIG. Since DRIG-A+ offers much more flexibility than DRIG-A for adapting to the test environment, we see that DRIG-A+ yields more robust predictions. DRIG-A+ exhibits superior performance and greater stability compared to test OLS. Finally, as the number of labeled test data increases, the advantage of DRIG-A+ over the test OLS diminishes, aligned with our theoretical result in Theorem~\ref{thm:adap_mse}.}

{\subsection{{Intensive care unit data}}}
Our second case study is based on two large electronic health record databases. The first is MIMIC-III~\citep{johnson2016mimic} which contains deidentified data for ICU admissions to the Beth Israel Deaconess Medical Center in Boston. The second is eICU~\citep{pollard2018eicu} collected from a large number of hospitals located within the United States excluding the hospital of MIMIC-III. We consider a regression task with the outcome being the average heart rate of patients between 48-72 hours after ICU admission and covariates including various clinical and laboratory measurements and patient demographics. After preprocessing, we end up with 31 covariates, 784 observations from eICU among four regions in the US (four training environments), and 67 observations from MIMIC-III (test environment). More details about the datasets and preprocessing are given in Appendix~\ref{app:exp_detail}. Our goal is to learn a prediction model from the training environments that performs well in the test environment. Note that here, the observational assumption does not hold.

Figure~\ref{fig:icu_results}(left) shows the test MSEs on MIMIC-III for different methods. With any proper regularization, DRIG exhibits a clear advantage over group DRO and anchor regression; indeed, anchor regression and group DRO do not improve over the pooled OLS (DRIG with $\gamma=1$). To further investigate how the prediction models perform for each single test observation,  Figure~\ref{fig:icu_results}(right) presents the boxplot of test-MSEs across each of the test observations for different values of $\gamma$. Comparing the worst-case or upper quantile test MSEs across the methods, we again see that DRIG outperforms competing methods.

\begin{figure}
	\centering
	\begin{tabular}{@{}c@{}c@{}}
		\includegraphics[height=0.26\textwidth]{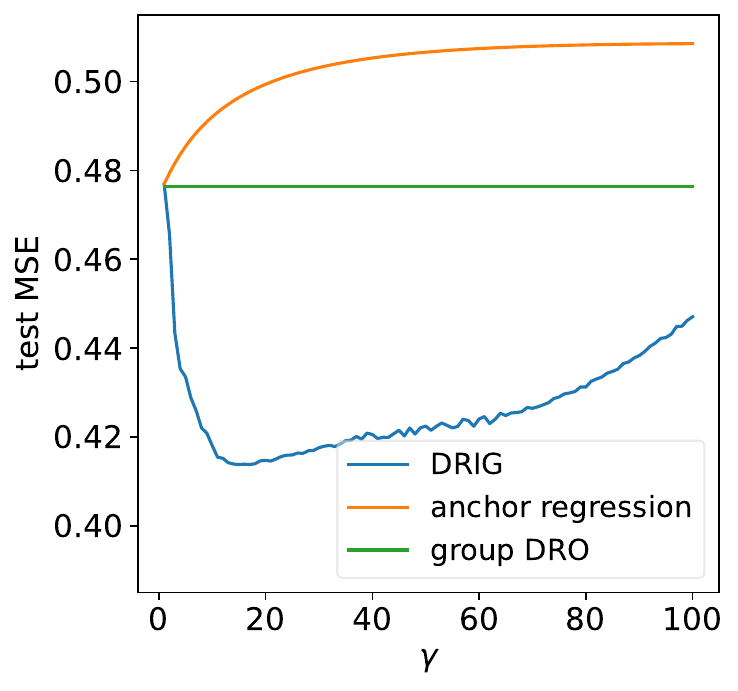} & \includegraphics[height=0.26\textwidth]{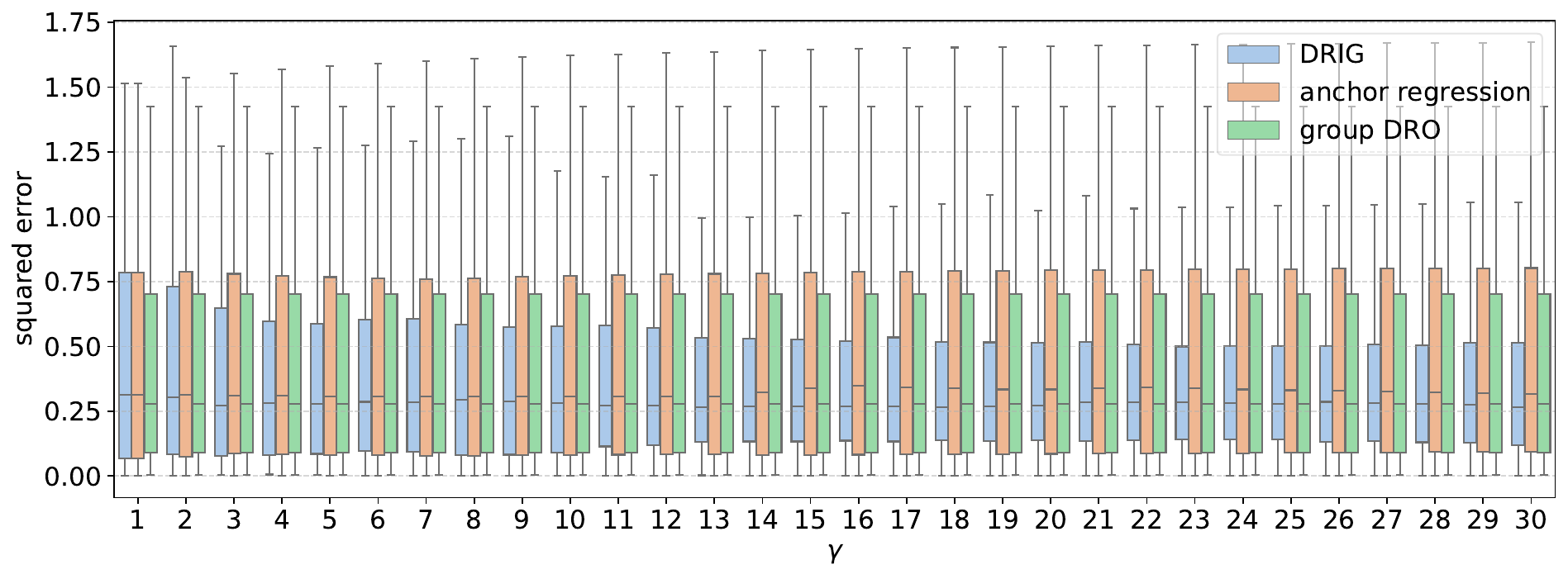} \vspace{-0.2in}\\
	\end{tabular}
	\caption{\small Results for ICU data. (\textbf{left}) MSE on the test environment as a function of the tuning parameter $\gamma$ for each method. (\textbf{right}) the squared prediction error of DRIG, anchor regression, and group DRO across each individual in the test environment for different values of $\gamma$. }\label{fig:icu_results}
\end{figure}

\section{Conclusion and future work}
We proposed DRIG, a procedure that exploits general noise interventions to obtain distributionally robust prediction models. While DRO formalizes robustness based on a postulated distance measure, DRIG is based on causal modeling and focuses on structural and data-dependent distribution shifts. 


A number of interesting future directions arise from our work. First, while we established robustness guarantees for DRIG in linear settings, the gradient invariance principle, as well as the DRIG formulation, are general and also applicable to nonlinear models with some promising numerical results (see Section~\ref{sec:method}). Hence, investigating nonlinear extensions would be of significant interest. 
Second, DRIG can produce predictions in the form of point estimates; incorporating uncertainty with corresponding prediction intervals would further expand the applicability of our methods.

\section*{Acknowledgments}
X.\ Shen's research was supported by the ETH AI Center. 
P.\ B\"uhlmann received funding from the European Research Council (ERC) under the European Union's Horizon 2020 research and innovation programme (grant agreement No. 786461). Armeen Taeb is supported by NSF DMS-2413074 and by the Royalty Research Fund at the University of
Washington. 

\nocite{*} 
\bibliography{ref.bib}
\bibliographystyle{apalike}

\newpage
\renewcommand{\thesection}{\Alph{section}}
\setcounter{section}{0}

{\section{DRIG without the observational assumption}}
\label{Sec:drig_without_observational}
\begin{assumption}
There are environment(s) $\mathcal{E}_\mathrm{small} \subset \mathcal{E}$ such that: $S^{e'} \preceq S^e$ for every $e' \in \mathcal{E}_\mathrm{small}$ and $e \in \mathcal{E}\setminus\mathcal{E}_\mathrm{small}$, and for every $e' \in \mathcal{E}_\mathrm{small}$, $\sum_{e\in\mathcal{E}}\omega^e(S^e-S^{e'}) \succeq 0$.
\label{ass:strength_2}
\end{assumption}
Assumption~\ref{ass:strength_2} ensures that the set of environments $\mathcal{E}$ can be divided into two: $\mathcal{E}_\mathrm{small}$ and $\mathcal{E}\setminus\mathcal{E}_\mathrm{small}$ where the interventions in $\mathcal{E}\setminus\mathcal{E}_\mathrm{small}$ are sufficiently stronger than those in $\mathcal{E}_\mathrm{small}$. A special case of the aforementioned setting is when there exists an `observational' environment $0 \in \mathcal{E}$ with $S^0 \preceq S^e$ for every $e\in\mathcal{E}$, which is a common condition in the causal inference literature, although Assumption~\ref{ass:strength} much less restrictive. Letting $Z^e = (X^e,Y^e)$,  Assumption~\ref{ass:strength} can be expressed in terms of the Gram matrix of the observed data, namely: $\mathbb{E}[Z^{e'}{Z^{e'}}^\top] \preceq \mathbb{E}[Z^{e}{Z^{e}}^\top]$ for all $e'\in\mathcal{E}_\text{small}$ and $e \in \mathcal{E}\setminus\mathcal{E}_\mathrm{small}$, and for every $e' \in \mathcal{E}_\mathrm{small}$, $\sum_{e\in\mathcal{E}}\omega^e(\mathbb{E}[Z^e{Z^e}^\top]-\mathbb{E}[Z^{e'}{Z^{e'}}^\top]) \succeq 0$.

\subsection{Convexity of DRIG}
\label{sec:drig_convexity_without_observational}
\begin{theorem}Suppose Assumption~\ref{ass:strength_2} is satisfied. Then, for any $\gamma \geq 1$, the DRIG objective is convex.
\label{thm:convexity_general}
\end{theorem}
We prove Theorem~\ref{thm:convexity_general} in Appendix~\ref{proof_convex}. Note that the convexity of DRIG holds without assuming the linear structural equation model \eqref{eq:scm_train}, as long as $\mathbb{E}[Z^{e'}{Z^{e'}}^\top] \preceq \mathbb{E}[Z^{e}{Z^{e}}^\top]$ for all $e'\in\mathcal{E}_\text{small}$ and $e \in \mathcal{E}\setminus\mathcal{E}_\mathrm{small}$, and for every $e' \in \mathcal{E}_\mathrm{small}$, $\sum_{e\in\mathcal{E}}\omega^e(\mathbb{E}[Z^e{Z^e}^\top]-\mathbb{E}[Z^{e'}{Z^{e'}}^\top]) \succeq 0$.

\subsection{Robustness guarantees of DRIG}
\label{sec:drig_robustness_without_observational}

 For notational simplicity, for any $\bar{e}\in\mathcal{E}$, we define $\mathcal{L}_\gamma^\mathrm{\bar{e}}(b) := \sum_{e \in \mathcal{E}}\omega^e (\gamma\mathbb{E}[\ell(X^{e},Y^e;b)] +(1-\gamma)\mathbb{E}[\ell(X^{\bar{e}},Y^{\bar{e}};b)])$ and $b^\mathrm{opt,\bar{e}}:= \argmin_{b}\mathcal{L}_\gamma^\mathrm{\bar{e}}(b)$. The following theorem assesses the robustness of the DRIG prediction model. 
{\begin{theorem}Let $\bar{e} \in \argmin_{e\in\mathcal{E}_\mathrm{small}} \mathcal{L}_\gamma^\mathrm{{e}}(b^{\mathrm{opt,{e}}})$. If $\bar{e} \in\argmin_{e\in\mathcal{E}_{\mathrm{small}}}\mathbb{E}[\ell(X^e,Y^e; b^{\mathrm{opt,\bar{e}}})]$, then, the DRIG estimator $b_\gamma^\mathrm{opt}$ is the minimizer of \eqref{eqn:worst-case} with $\mathcal{C} = \mathcal{C}_\mathrm{DRIG}^\gamma$, where:
$$\mathcal{C}_\mathrm{DRIG}^\gamma:=  \left\{v \in \mathbb{R}^{p+1}: \bbE[vv^\top ] \preceq S^{\bar{e}} + \gamma\sum_{e \in \mathcal{E}} \omega^e\left(S^e - S^{\bar{e}}\right)\right\}.$$
\label{thm:main1}
\end{theorem}
 We prove Theorem~\ref{thm:main1}  in Supplementary~\ref{proof_thm_main}. This result states that under some assumptions, DRIG protects against noise interventions $v$ that are in the set $\mathcal{C}_\mathrm{DRIG}^\gamma$. The assumptions of Theorem~\ref{thm:main1} are strictly weaker than the observational assumption; In Appendix \ref{sec:numerical_ass}, we numerically illustrate settings where the Assumptions of Theorem~\ref{thm:main1} are satisfied but an observational condition is not satisfied.}

\subsection{Connections to causal parameter}
\label{sec:causal_weaker_assumptions}
We analyze DRIG when $\gamma \to \infty$, and highlight how infinite robustness (as guaranteed by Theorem~\ref{thm:main}) connects to causality and invariance. For every $\bar{e}\in\mathcal{E}$, define $L^{\star,\bar{e}} := \sum_{e \in \mathcal{E}} \omega^e (S^e-S^{\bar{e}})$ and $C^\star = (I-B^\star)^{-1}$ with block forms, 
$ L^{\star,\bar{e}} = \begin{pmatrix}L^{\star,\bar{e}}_x & L^{\star,\bar{e}}_{xy} \\
		{L^{\star,\bar{e}}_{xy}}^\top & L^{\star,\bar{e}}_{y}
	\end{pmatrix}$ and $C^\star = \begin{pmatrix}C^\star_{x} & C^\star_{xy} \\ C^{\star\top}_{yx} & C^\star_{y} \end{pmatrix},$ where $L^{\bar{e},\star}_y,C^\star_y \in \mathbb{R}$. The following theorem characterizes the solution of DRIG with $\gamma \to \infty$, denoted by $b_\infty^{\mathrm{opt}}:=\lim_{\gamma\to\infty}b_\gamma^{\mathrm{opt}}$.

\begin{theorem}We have 
\begin{equation}\label{eq:b_insufficient_2}
        b_\infty^{\mathrm{opt}} = \argmin_{b\in\mathcal{I}}\min_e\bbE[(Y^e-b^\top X^e)^2],
\end{equation}
where $\mathcal{I} := \{b \in \mathbb{R}^{p}: b \text{ satisfies the gradient invariance condition in Definition~\ref{def:gi}}\}$ is a non-empty set. If additionally $\mathrm{rank}([C^\star{L^{\star,\bar{e}}}{C^\star}^\top]_{1:p,1:p}) = p$ for every $\bar{e}\in\mathcal{E}_\mathrm{small}$ and ${L^{\star,\bar{e}}_{xy}}= 0 = L^{\star,\bar{e}}_y$, then $b_\infty^{\mathrm{opt}} = b^\star$.
\label{thm:solution_infty_2}
\end{theorem}

We prove Theorem~\ref{thm:solution_infty_2} in Appendix~\ref{proof_prop_solution_infty}.

\section{Finite-sample consistency guarantees of DRIG}
\label{sec:finite_sample}
 Note that $\hat{b}_\gamma$ in \eqref{eqn:finite_drig_objective} is as an estimate for the population parameter $b^\mathrm{opt}_\gamma$, and $\hat{\mathcal{L}}_\gamma(\hat{b}_\gamma)$ is an estimate for $\mathcal{L}_\gamma(b^\mathrm{opt}_\gamma)$, which according to Theorem~\ref{thm:main}, is the worst-case risk over a class of noise interventions.

We provide finite-sample consistency guarantees for the finite-sample DRIG estimator. Specifically, we demonstrate convergence of prediction models $\|\hat{b}_\gamma-b^\mathrm{opt}_\gamma\|_2$ as well worst-case loss functions $|{\mathcal{L}}_\gamma(b_{\gamma}^\mathrm{opt})-\hat{\mathcal{L}}_\gamma(\hat{b}_\gamma)|$. For simplicity, we assume that the random variable $\varepsilon^e$ in \eqref{eq:scm_train} is Gaussian, although the analysis can readily be extended to sub-Gaussian distributions. We let $\psi_e$ be the spectral norm of the joint Gram matrix of $(X^e,Y^e)$. Let $\psi_{\max}=\max_{e\in\cE}\psi_e$ and $n_{\min}=\min_{e\in\cE}n_e$. Furthermore, let $\tau_{\mathrm{min}}$ be the minimum eigenvalue of the matrix $\sum_{e\in\mathcal{E}}\omega^e[G^{e} - \frac{(\gamma-1)}{\gamma}G^{0}]$ where $G^{e}$ is the second moment of the vector $(X^e,Y^e)$. 
\begin{theorem} Suppose $n_e \geq p\max\{1,64\psi_e^2,\frac{64}{\min\{\tau_{\mathrm{min}},1\}^2}(\|b^\mathrm{opt}_\gamma\|_2+1)(\max_e\psi_e+1)^2\}$ for all $e \in \mathcal{E}$. Then with probability exceeding $1-|\mathcal{E}|\exp(-p/2)$, for any $\gamma \geq 1$, we have $\|\hat{b}_\gamma - b^\mathrm{opt}_\gamma\|_2 \leq \frac{32(\|b^\mathrm{opt}_\gamma\|_2+1)}{\min\{\tau_{\mathrm{min}},1\}}(1+\psi_{\max})\sqrt{\frac{p}{n_{\min}}}$ and $|\hat{\mathcal{L}}_\gamma(\hat{b}_\gamma) - {\mathcal{L}}_\gamma(b_{\gamma}^{opt})| \leq \frac{480(\|b^\mathrm{opt}_\gamma\|_2+1)^3}{\min\{\tau_{\mathrm{min}},1\}}(1+\psi_{\max})^2\gamma\sqrt{\frac{p}{n_{\min}}}$.
\label{thm:theorem_finite}
\end{theorem}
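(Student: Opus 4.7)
The plan is to exploit that both $\mathcal{L}_\gamma$ and $\hat{\mathcal{L}}_\gamma$ are quadratic in $b$ with unique closed-form minimizers, and then reduce the argument to sample-moment concentration per environment. Expanding the squared losses gives $\mathcal{L}_\gamma(b) = b^\top M_\gamma b - 2 b^\top q_\gamma + C_\gamma$, where $M_\gamma := (1-\gamma)\bbE[X^0(X^0)^\top] + \gamma\sum_{e\in\mathcal{E}}\omega^e\bbE[X^e(X^e)^\top] = \gamma\sum_{e\in\mathcal{E}}\omega^e\bigl[\bbE[X^e(X^e)^\top] - \tfrac{\gamma-1}{\gamma}\bbE[X^0(X^0)^\top]\bigr]$, $q_\gamma := (1-\gamma)\bbE[X^0 Y^0] + \gamma\sum_{e\in\mathcal{E}}\omega^e\bbE[X^e Y^e]$, and $C_\gamma$ is $b$-independent. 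By the definition of $\tau_{\min}$ we have $M_\gamma \succeq \gamma\tau_{\min} I$, hence $b_\gamma^{\mathrm{opt}} = M_\gamma^{-1} q_\gamma$, and analogously $\hat{b}_\gamma = \hat{M}_\gamma^{-1}\hat{q}_\gamma$ for the sample analogues. Subtracting the two normal equations yields the identity
$$\hat{b}_\gamma - b_\gamma^{\mathrm{opt}} = \hat{M}_\gamma^{-1}\bigl[(\hat{q}_\gamma - q_\gamma) - (\hat{M}_\gamma - M_\gamma)\, b_\gamma^{\mathrm{opt}}\bigr],$$
reducing the first claim to bounding $\|\hat{M}_\gamma - M_\gamma\|_{\mathrm{op}}$, $\|\hat{q}_\gamma - q_\gamma\|_2$, and $\|\hat{M}_\gamma^{-1}\|_{\mathrm{op}}$.

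For each environment $e$, I would apply a standard Davidson--Szarek-type Gaussian concentration bound to conclude that, with probability at least $1 - e^{-p/2}$, each of $\|\hat{\bbE}[X^e(X^e)^\top] - \bbE[X^e(X^e)^\top]\|_{\mathrm{op}}$, $\|\hat{\bbE}[X^e Y^e] - \bbE[X^e Y^e]\|_2$, and $|\hat{\bbE}[(Y^e)^2] - \bbE[(Y^e)^2]|$ is at most a constant multiple of $\psi_e\sqrt{p/n_e}$ provided $n_e \geq 64\psi_e^2\, p$; a union bound over the $|\mathcal{E}|$ environments produces the stated failure probability. Combining via triangle inequality with $\sum_e\omega^e = 1$ and $|1-\gamma|\leq\gamma$ (valid for $\gamma\geq 1$) yields $\|\hat{M}_\gamma - M_\gamma\|_{\mathrm{op}}$ and $\|\hat{q}_\gamma - q_\gamma\|_2$ both bounded by an absolute constant times $\gamma(1+\psi_{\max})\sqrt{p/n_{\min}}$. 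The full sample-size condition in the theorem is calibrated so that this deviation is at most $\gamma\tau_{\min}/2$, whence Weyl's inequality delivers $\hat{M}_\gamma \succeq (\gamma\tau_{\min}/2) I$ and $\|\hat{M}_\gamma^{-1}\|_{\mathrm{op}} \leq 2/(\gamma\tau_{\min})$. Substituting into the displayed identity, the two $\gamma$-factors cancel, giving the first claim.

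For the worst-case loss bound, split $|\hat{\mathcal{L}}_\gamma(\hat{b}_\gamma) - \mathcal{L}_\gamma(b_\gamma^{\mathrm{opt}})| \leq |\hat{\mathcal{L}}_\gamma(\hat{b}_\gamma) - \mathcal{L}_\gamma(\hat{b}_\gamma)| + |\mathcal{L}_\gamma(\hat{b}_\gamma) - \mathcal{L}_\gamma(b_\gamma^{\mathrm{opt}})|$. First-order optimality of $b_\gamma^{\mathrm{opt}}$ reduces the second piece to the quadratic form $(\hat{b}_\gamma - b_\gamma^{\mathrm{opt}})^\top M_\gamma(\hat{b}_\gamma - b_\gamma^{\mathrm{opt}}) \leq \|M_\gamma\|_{\mathrm{op}}\|\hat{b}_\gamma - b_\gamma^{\mathrm{opt}}\|_2^2$, which the part-one bound controls by $O(\gamma(1+\psi_{\max})\cdot(p/n_{\min}))$; since the sample-size condition forces $\sqrt{p/n_{\min}}$ to be at most a constant, the $(p/n_{\min})$ factor is absorbed into a single power of $\sqrt{p/n_{\min}}$. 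The first piece is handled by expanding $\hat{\mathcal{L}}_\gamma(b) - \mathcal{L}_\gamma(b) = b^\top(\hat{M}_\gamma - M_\gamma) b - 2 b^\top(\hat{q}_\gamma - q_\gamma) + (\hat{C}_\gamma - C_\gamma)$ at $b = \hat{b}_\gamma$, bounding $\|\hat{b}_\gamma\|_2 \leq \|b_\gamma^{\mathrm{opt}}\|_2 + \|\hat{b}_\gamma - b_\gamma^{\mathrm{opt}}\|_2 = O(\|b_\gamma^{\mathrm{opt}}\|_2 + 1)$ via part one, and controlling $|\hat{C}_\gamma - C_\gamma|$ via the $\hat{\bbE}[(Y^e)^2]$ concentration from the previous paragraph. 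All contributions scale as $\gamma(\|b_\gamma^{\mathrm{opt}}\|_2 + 1)^2(1+\psi_{\max})^2\sqrt{p/n_{\min}}$; combining them and absorbing the remaining $(\|b_\gamma^{\mathrm{opt}}\|_2 + 1)$ factor from part one yields the second claim with its cubic dependence on $\|b_\gamma^{\mathrm{opt}}\|_2+1$. The main obstacle is bookkeeping: the key non-trivial cancellation is that $\|\hat{M}_\gamma - M_\gamma\|_{\mathrm{op}}$ and the smallest eigenvalue of $M_\gamma$ both scale linearly in $\gamma$, so only their ratio controls the coefficient-estimation rate and keeps it $\gamma$-free; the $\min\{\tau_{\min},1\}$ arises because the sample-size condition must simultaneously ensure Weyl-invertibility of $\hat{M}_\gamma$ and force $\sqrt{p/n_{\min}}\leq 1$ so that higher-order $(p/n_{\min})$ terms are dominated.
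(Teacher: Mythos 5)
Your proposal is correct and follows essentially the same route as the paper's proof: closed-form normal equations for the population and empirical minimizers, the perturbation identity $\hat{b}_\gamma-b^\mathrm{opt}_\gamma=\hat{M}_\gamma^{-1}[(\hat{q}_\gamma-q_\gamma)-(\hat{M}_\gamma-M_\gamma)b^\mathrm{opt}_\gamma]$, per-environment Gaussian covariance concentration with a union bound, a Weyl-type eigenvalue argument for invertibility of the empirical Gram matrix, and a quadratic expansion for the loss difference (your two-term split of the loss versus the paper's direct per-environment bound is only a cosmetic difference). If anything, your explicit tracking of the factor $\gamma$ in the lower bound $\lambda_{\min}(M_\gamma)\geq\gamma\tau_{\min}$, and the resulting cancellation against the $\gamma$ in $\|\hat{M}_\gamma-M_\gamma\|_{\mathrm{op}}$, makes the $\gamma$-independence of the first bound more transparent than in the paper's write-up, which lower-bounds the minimum eigenvalue only by $\tau_{\min}$.
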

The proof of Theorem~\ref{thm:theorem_finite} is presented in Supplementary~\ref{proof_finite_sample}. {Note the scaling with the factor $\sqrt{p/n_e}$ in the second statement. This is due to the fact that $b_{\gamma}^{\mathrm{opt}}$ gives residuals which are not independent nor orthogonal (in population) to the covariates $X$.}

\section{DRIG-A robustness guarantees}
\label{sec:robustness_drig-A}
Consider first the population setting where we have access to the distribution of the training enviornments, although the number of test-samples may be finite. Let $\mathcal{D}_\mathrm{test}$ be the test samples (both labeled and unlabeled). Then, the optimal weights $\tilde{\omega}^e$ that are estimated by DRIG-A can be expressed as:
$$\tilde{\omega}^e = f_e(\{G^e\}_{e\in\mathcal{E}};\mathcal{D}_\mathrm{test}),$$ 
for some function $f_e$. Let:
$$\mathcal{L}_\mathrm{DRIG-A}(b) = \min_{e\in\mathcal{E}}\mathbb{E}[\ell(X^e,Y^e;b)] + \sum_{e\in\mathcal{E}}\tilde{\omega}^e(\mathbb{E}[\ell(X^e,Y^e;b)]-\min_{e\in\mathcal{E}}\mathbb{E}[\ell(X^e,Y^e;b)],$$
be the population DRIG objective after plugging in $\tilde{\omega}^e$, and 
\begin{eqnarray}
\begin{aligned}
{b}_\mathrm{DRIG-A} := \argmin_b \mathcal{L}_\mathrm{DRIG-A}(b).
\end{aligned}
\end{eqnarray}
\begin{theorem}The estimator ${b}_\mathrm{DRIG-A}$ is the solution the worst-case risk \eqref{eqn:worst-case} with $\mathcal{C} = \mathcal{C}_\mathrm{DRIG-A}$ where,
$\mathcal{C}_\mathrm{DRIG-A} = \left\{v:\mathbb{E}[vv^\top] \preceq S^0 +  \sum_{e\in\mathcal{E}}f_e(\{G^e\}_{e\in\mathcal{E}};\mathcal{D}_\mathrm{test})(S^e-S^0)\right\}$.
\label{thm:drig-A-robust}
\end{theorem}
The proof of Theorem~\ref{thm:drig-A-robust} is similar to that of Theorem~\ref{thm:main} and is left out for brevity. Note that $i)$ the result depends on access to full training distributions, $ii)$ the robustness set is random (as it depends on finite test samples). To have a finite-sample result, consider:
\begin{eqnarray*}
\begin{aligned}
\tilde{\omega}^{e,\star}_{\mathrm{opt}} := \argmin_{\{\tilde{\omega}^e\}_{e\in\mathcal{E}} \geq 0} n_u\|J_{1:p,1:p} - G^v_{1:p,1:p}\|_F^2 + n_l(2\|J_{1:p,p+1} - {G}^v_{1:p,p+1}\|_F^2+\|{J}_{p+1,p+1} - {G}^v_{p+1,p+1}\|_F^2),
\end{aligned}
\end{eqnarray*}
where $n^l$ and $n^u$ may be viewed as controlling the mixture proportion of labeled samples vsunlabed samples in population. Here, $G^v = \mathbb{E}[vv^\top]$. Let, 
\begin{eqnarray*}
\begin{aligned}
\mathcal{L}_\mathrm{DRIG-A}^\star(b) &:= \min_{e\in\mathcal{E}}\mathbb{E}[\ell(X^e,Y^e;b)] + \sum_{e\in\mathcal{E}}\tilde{\omega}^{e,\star}_{\mathrm{opt}} (\mathbb{E}[\ell(X^e,Y^e;b)]-\min_{e\in\mathcal{E}}\mathbb{E}[\ell(X^e,Y^e;b)],\\
{b}^\star_\mathrm{DRIG-A} &:= \argmin_b \mathcal{L}_\mathrm{DRIG-A}^\star(b).
\end{aligned}
\end{eqnarray*}
Let $\hat{G}^e$ be the empirical Gram matrix of $(X^e,Y^e)$. Consider the empirical analog of the optimization procedure for estimating $\tilde{w}^e$:
\begin{eqnarray*}
\begin{aligned}
\hat{\tilde{\omega}}^{e}_{\mathrm{opt}} := \argmin_{\{\tilde{\omega}^e\}_{e\in\mathcal{E}} \geq 0} n_u\|\hat{J}_{1:p,1:p} - \hat{G}^v_{1:p,1:p}\|_F^2 + n_l(2\|\hat{J}_{1:p,p+1} - \hat{G}^v_{1:p,p+1}\|_F^2+\|\hat{J}_{p+1,p+1} - \hat{G}^v_{p+1,p+1}\|_F^2).
\end{aligned}
\end{eqnarray*}
Here, $\hat{J} = \hat{G}^0 + \sum_{e\in\mathcal{E}}\tilde{\omega}^e(\hat{G}^e-\hat{G}^0)$, with $\hat{G}^e$ representing the empirical Gram matrix of the data in enviornment $e$. Then, finite-sample DRIG would minimize:
\begin{eqnarray*}
\begin{aligned}
\hat{\mathcal{L}}_\mathrm{DRIG-A}(b) &:= \min_{e\in\mathcal{E}}\hat{\mathbb{E}}\ell(X^e,Y^e;b) + \sum_{e\in\mathcal{E}}\hat{\tilde{\omega}}^{e}_{\mathrm{opt}} (\hat{\mathbb{E}}[\ell(X^e,Y^e;b)]-\min_{e\in\mathcal{E}}\hat{\mathbb{E}}[\ell(X^e,Y^e;b)],\\
\hat{b}_\mathrm{DRIG-A} &:= \argmin_b \hat{\mathcal{L}}_\mathrm{DRIG-A}(b).
\end{aligned}
\end{eqnarray*}
As $n^e \to \infty$ and $n^u,n^l \to \infty$, then, $\hat{\tilde{\omega}}^{e}_{\mathrm{opt}} \to {\tilde{\omega}}^{e,\star}_{\mathrm{opt}}$, and an empirical average converges to the corresponding expected value. As a result, $\hat{b}_\mathrm{DRIG-A} \to {b}^\star_\mathrm{DRIG-A}$ and $\hat{\mathcal{L}}_\mathrm{DRIG-A}(\hat{b}_\mathrm{DRIG-A}) \to \mathcal{L}^\star_\mathrm{DRIG-A}({b}^\star_\mathrm{DRIG-A})$. Standard finite sample analysis yields the following convergence rates:
\begin{eqnarray*}
\begin{aligned}
 \|\hat{b}_{\mathrm{DRIG-A}}- b^\star_\mathrm{opt}\|_2 &\leq \mathcal{O}(p|\mathcal{E}|/\sqrt{n_\text{min}}),\\
 \hat{\mathcal{L}}_\mathrm{DRIG-A}^\mathrm{opt}(\hat{b}_{\mathrm{DRIG-A}})-\mathcal{L}^\star(b^\star_\mathrm{opt}) &\leq \mathcal{O}(p|\mathcal{E}|/\sqrt{n_\text{min}}),
 \end{aligned}
 \end{eqnarray*}
where $n_\text{min} = \min\{\min_e n^e, n^u,n^l\}$. We omit the proof for brevity.

\section{Illustrations of the perturbation class}\label{app:illus_perturb}
We provide some illustrations of the perturbation class that DRIG is robust against, i.e.,
\begin{equation*}
    \mathcal{C}_\mathrm{DRIG}^{\gamma} = \left\{v \in \mathbb{R}^{p+1}: \bbE[vv^\top ] \preceq S^0 + \gamma\sum_{e \in \mathcal{E}} \omega^e\left(S^e - S^0\right)\right\}.
\end{equation*}
The column space of the matrix $U:=\sum_{e \in\mathcal{E}}\omega^e(S^e-S^0)$ represents the ``directions" of the perturbations that DRIG protects against with a controllable strength via $\gamma$. Specifically, denote by $U=Q\Lambda Q^\top$ the spectral decomposition of $U$, where $\Lambda=\mathrm{diag}(\lambda_1,\dots,\lambda_{p+1})$ and $Q=(q_1,\dots,q_{p+1})$ with $(\lambda_i,q_i)$ being an eigenvalue/eigenvector pair. Let $r=\mathrm{rank}(U)$ be the rank of the matrix $U$ so that $\lambda_i>0$ for $i\le r$ and $\lambda_i=0$ for $i\ge r+1$; here, the eigenvectors $q_1,\dots,q_r$ span the column space of $U$. Then for all $v\in\mathcal{C}_\mathrm{DRIG}^{\gamma}$, we have $\bbE[Q^\top v(Q^\top v)^\top]\preceq \Lambda$, implying $q_i^\top v\equiv 0$ for $i\ge r+1$.  That is, the DRIG estimator can only be robust to perturbations that lie in the column space of $U$. As such, the larger the dimension of this column space, the more directions the DRIG estimator is robust against. 

\begin{example}
\label{ex:1}
	We consider two covariates $X_1,X_2$, uniform weights $\omega^e\equiv1/|\cE|$, and interventions on only the covariates. We first assume there is one interventional environment $e=1$ apart from the observational environment $e=0$ with $\delta^0=0$, where both covariates are perturbed. If only the mean is affected (the anchor regression setting), i.e., $\delta^1=\mu^1$ for some deterministic vector $\mu^1\ne0$, we have $U=\gamma\mu^1{\mu^1}^\top/2$ with rank 1. If the variance is affected, i.e., $\delta^1\sim\cN(\mu^1,S^1)$ we have $U=\gamma(\mu^1{\mu^1}^\top+S^1)/2$ which is in general full-rank. The perturbations that we are potentially robust against in the cases of mean shifts and variance shifts are depicted in Figures~\ref{fig:pert_direction}(a) and \ref{fig:pert_direction}(b), respectively.
	
	Next, we assume that only $X_1$ is perturbed in the interventional environment $e=1$. Thus $\delta^1_2=0$, and the matrix  $U=\gamma(\mu^1{\mu^1}^\top+S^1)/2$ has rank equal to one since the second diagonal entry is zero. Now if we have one more interventional environment $e=2$ where $X_2$ receives an intervention, it then holds that the matrix $U=\gamma(\mu^1{\mu^1}^\top+S^1 + \mu^2{\mu^2}^\top+S^2)/3$ is full-rank. The perturbations that DRIG is robust against in these two cases also follow the same pattern as in Figures~\ref{fig:pert_direction}(a) and \ref{fig:pert_direction}(b), respectively.
\end{example}

\begin{figure}[h]
\centering
\begin{tabular}{cc}
	\includegraphics[page=1, clip, trim=28cm 13cm 28cm 18cm, width=0.25\textwidth]{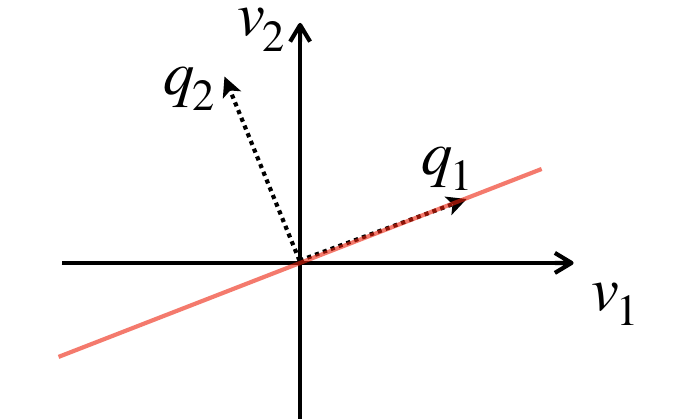} \hspace{0.5in} &
	\includegraphics[page=2, clip, trim=28cm 13cm 28cm 18cm, width=0.25\textwidth]{fig/perturb-direction}\vspace{-0.1in}\\
	\small{(a) mean shifts or fewer environments} \hspace{0.5in} & \small{(b) variance shifts or more environments}
\end{tabular}
\caption{\small Perturbations that DRIG is {controllably} robust against in scenarios in Example~\ref{ex:1}.}
\label{fig:pert_direction}
\end{figure}

\section{Incorporating continuous exogenous variables}
\label{sec:cont_anchor}
Our modeling framework \eqref{eq:scm_train} contains only interventions through a discrete anchor (environment) variable $E$. We can also incorporate interventions due to continuous anchor variables $A$ which are exogenous. Specifically, for every environment $e \in \mathcal{E}$, the data $(X^e,Y^e)$ is generated according to the following modified SCM:
\begin{equation}\label{eq:scm_anc_train}
    \begin{pmatrix}
    X^e \\ Y^e \\H^e 
    \end{pmatrix}
    =
    \tilde{B}^\star
    \begin{pmatrix}
    X^e \\ Y^e\\H^e
    \end{pmatrix} 
    + \varepsilon^e + MA^e,
\end{equation}
with the matrix $I - \tilde{B}^\star$ being invertible. Here, $A^e$ denotes the observed continuous anchor variable in environment $e$ with $A^e$ being a random variable following the conditional distribution of $A$ given $E=e$. For every $e\in\mathcal{E}$, $(\varepsilon^e,A^e)$ are jointly independent. Figure~\ref{fig:graph}(c) presents the graphical perspective of the model \eqref{eq:scm_anc_train}; the variables $A$ and $E$ are exogenous and cannot be descendants of any of the variables $(X,Y,H)$. 

For every environment $e \in \mathcal{E}$, we define $\tilde{Y}^e=Y^e-\bbE[Y^e|A^e]$ and $\tilde{X}^e=X^e-\bbE[X^e|A^e]$. The population version of the modified DRIG estimator (to account for continuous anchors) is
\begin{equation}
	{b}^{\mathrm{opt}}_{\lambda,\gamma} = \argmin_b \mathcal{L}_{\lambda,\gamma}(b).
\label{eqn:mod_estimator}
\end{equation}
Here, $\lambda,\gamma\geq 0$ are regularization parameters and the objective $\cL_{\lambda,\gamma}(b)$ is
\begin{equation*}
	\cL_{\lambda,\gamma}(b):= \tilde{\cL}_\gamma(b) +\lambda\sum_{e \in \mathcal{E}}\omega^e\bbE[\bbE(Y^e-b^\top X^e|A^e)]^2,
 \label{eqn:modified_objective}
\end{equation*}
where $\tilde{\cL}_\gamma(b)$ is the original DRIG objective function in \eqref{eq:obj} applied to the transformed data $(\tilde{X}^e,\tilde{Y}^e)$.

\subsection{Robustness guarantees with discrete and continuous exogenous variables}
Above we introduced a generalization of DRIG \eqref{eqn:mod_estimator} for incorporating both discrete and continuous exogenous variables. We now assess the robustness of this estimator, and establish once again that our estimator has stronger robustness guarantees than anchor regression. Throughout, we suppose that the training data is generated according to the SCM \eqref{eq:scm_anc_train} and the test data is generated according to the SCM \eqref{eq:scm_test}. Let $\tilde{S}^e=\bbE[\varepsilon^e{\varepsilon^e}^\top|A^e]$.  For simplicity, we also assume there is an observational environment $0\in\mathcal{E}$ with $\tilde{S}^0 \preceq \tilde{S}^e$ for every $e\in\mathcal{E}$.
\begin{theorem}
The modified DRIG estimator 
${b}^{\mathrm{opt}}_{\lambda,\gamma}$ in \eqref{eqn:mod_estimator} is the minimizer of the distributional robust objective \eqref{eqn:worst-case} with $\mathcal{C} = \mathcal{C}^{\lambda,\gamma}_\mathrm{DRIG}$, where 
\begin{equation*}
\C^{\lambda,\gamma}_\mathrm{DRIG} = \left\{v\in\mathbb{R}^{p+1}:\bbE[vv^\top]\preceq \tilde{S}^0+\sum_{e\in\cE}\omega^e\Big[\gamma(\tilde{S}^e - \tilde{S}^0) +\lambda\Big(\bbE[\bbE(\varepsilon^e|A^e)\bbE(\varepsilon^e|A^e)^\top] +  M\bbE[A^e{A^e}^\top]M^\top\Big)\Big]\right\}.
\end{equation*}
\label{thm:continous_discrete}
\end{theorem}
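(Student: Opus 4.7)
The plan is to reduce to Theorem~\ref{thm:main} in two steps: first, residualize $(X^e,Y^e)$ against $A^e$ so that the residualized system fits the original DRIG framework (with mean-zero effective interventions); second, show that the additional $\lambda$-penalty in $\cL_{\lambda,\gamma}$ contributes exactly the block $\lambda(\bbE[\bbE(\delta^e|A^e)\bbE(\delta^e|A^e)^\top] + M\bbE[A^e{A^e}^\top]M^\top)$ to the upper bound defining $\C^{\lambda,\gamma}_\mathrm{DRIG}$. The theorem then follows because the population worst-case test risk over any set of the form $\{v:\bbE[vv^\top]\preceq\Sigma\}$ has the closed form $w^\top C^\star(\bbE[\varepsilon\varepsilon^\top]+\Sigma)C^{\star\top}w$, where $w=(-b^\top,1)^\top$ and $C^\star$ is the reduced-form matrix $(I-B^\star)^{-1}$ acting in \eqref{eq:scm_test}; this supremum is attained (e.g., by a deterministic $v$), so matching $\cL_{\lambda,\gamma}(b)$ to this expression with $\Sigma=\Sigma^{\lambda,\gamma}$ identifies $b^{\mathrm{opt}}_{\lambda,\gamma}$ as the argmin of the worst-case risk.

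For the $\tilde{\cL}_\gamma$ piece, I would use $\varepsilon \perp A^e$ to write $\tilde Z^e := Z^e - \bbE[Z^e|A^e] = C^\star(\varepsilon + \tilde\delta^e)$ in reduced form, with $\tilde\delta^e := \delta^e - \bbE[\delta^e|A^e]$, and invoke the law of total variance to identify $\bbE[\tilde\delta^e{\tilde\delta^e}^\top] = \tilde S^e$ together with $\bbE[\tilde\delta^e]=0$. Thus $(\tilde X^e,\tilde Y^e)$ satisfies the original DRIG SCM \eqref{eq:scm_train} driven by $\tilde\delta^e$, and Theorem~\ref{thm:main} applied to this residualized system yields $\tilde{\cL}_\gamma(b) = w^\top C^\star[\bbE[\varepsilon\varepsilon^\top] + \tilde S^0 + \gamma\sum_e\omega^e(\tilde S^e - \tilde S^0)]C^{\star\top}w$, with no mean term because the residualized interventions are centered.

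For the $\lambda$-penalty, the tower rule together with $\varepsilon\perp A^e$ gives $\bbE[Y^e-b^\top X^e\mid A^e] = w^\top C^\star(MA^e + \bbE[\delta^e|A^e])$, so squaring and taking expectations produces $w^\top C^\star\bbE[(MA^e+\bbE(\delta^e|A^e))(MA^e+\bbE(\delta^e|A^e))^\top]C^{\star\top}w$. Expanding the outer product and applying the tower rule once more reduces the cross term to $M\bbE[A^e{\delta^e}^\top]$, which vanishes under the exogeneity of $A^e$ with respect to $\delta^e$ (so that $\bbE[A^e{\delta^e}^\top]=\bbE[A^e]\bbE[\delta^e]^\top=0$ after the centering implicit in the SCM). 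What remains is precisely $w^\top C^\star[M\bbE[A^e{A^e}^\top]M^\top + \bbE[\bbE(\delta^e|A^e)\bbE(\delta^e|A^e)^\top]]C^{\star\top}w$. Summing the two pieces identifies $\cL_{\lambda,\gamma}(b) = w^\top C^\star(\bbE[\varepsilon\varepsilon^\top]+\Sigma^{\lambda,\gamma})C^{\star\top}w$ with $\Sigma^{\lambda,\gamma}$ the matrix in the definition of $\C^{\lambda,\gamma}_\mathrm{DRIG}$, completing the reduction.

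I expect the main obstacle to be two-fold. First, expressing the reduced-form $C^\star$ and the effective column structure of $M$ after marginalizing the latent block $H$ from \eqref{eq:scm_anc_train}, so that all computations collapse to the $(p+1)$-dimensional ambient space where $v$ lives; this requires some care because the original $\tilde B^\star$ and $M$ act on $(X,Y,H)$, and the noise $\varepsilon$ in the $(X,Y)$-reduced form absorbs the contribution of $\varepsilon_H$ through the $H$-block. Second, cleanly handling the $A$–$\delta$ cross terms so that the perturbation upper bound separates into the stated sum without a residual joint cross-covariance — this is the reason one needs appropriate exogeneity/centering of $A^e$ relative to $\delta^e$. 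Once these are settled, the remaining algebra is routine bookkeeping that rides on Theorem~\ref{thm:main}.
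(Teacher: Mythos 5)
Your proposal is correct and follows essentially the same route as the paper's proof: residualize $(X^e,Y^e)$ on $A^e$ so that the $\tilde{\cL}_\gamma$ piece reduces to the Theorem~\ref{thm:main} computation with centered effective interventions $\delta^e-\bbE[\delta^e|A^e]$ of covariance $\tilde S^e$, then show the $\lambda$-penalty contributes $w^\top(\bbE[\bbE(\delta^e|A^e)\bbE(\delta^e|A^e)^\top]+M\bbE[A^e{A^e}^\top]M^\top)w$ via the conditional mean $\bbE(Y^e-b^\top X^e|A^e)$, and match the total to the closed form of the worst-case risk. The one substantive point is your explicit treatment of the $A^e$--$\delta^e$ cross term $M\bbE[A^e{\delta^e}^\top]$, which the paper's own proof drops without comment; your observation that this requires an exogeneity/uncorrelatedness-plus-centering assumption is if anything more careful than the published argument.
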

We prove Theorem~\ref{thm:continous_discrete} in Supplementary~\ref{proof_thm_discrete_continuous}. This result states that the modified DRIG estimator
$b^{\mathrm{opt}}_{\lambda,\gamma}$ protects against perturbations in the class $C^{\lambda,\gamma}_\mathrm{DRIG}$. Notice that if the environment (discrete) variables $E$ are independent of the continuous anchors $A$, then the perturbation class simplifies to $$\C^{\lambda,\gamma}_\mathrm{DRIG} = \{v \in \mathbb{R}^{p+1}: \bbE[vv^\top]\preceq\allowbreak S^0+\sum_{e\in\mathcal{E}}\omega^e\allowbreak\left(\allowbreak\gamma(S^e-S^0)+\lambda\mu^e{\mu^e}^\top + \lambda M\bbE[AA^\top]M^\top\right)\},$$
where $\mu^e = \mathbb{E}[\varepsilon^e|A]$. Furthermore, when there are no continuous anchors, we recover the result of Theorem~\ref{thm:main}. 

The anchor regression estimator \eqref{eqn:method_anchor} proposed in \cite{rothenhausler2021anchor} can be applied to data generated according to the model \eqref{eq:scm_anc_train}. Appealing to Theorem 1 of \cite{rothenhausler2021anchor}, we can conclude that anchor regression with turning parameter $\lambda$ protects against perturbations in the set 
$$\C_{\mathrm{anchor}}^{\lambda} = \left\{v \in \mathbb{R}^{p+1}: \mathbb{E}[vv^\top] \preceq \sum_{e\in\cE}\omega^e\Big[\tilde{S}^e +\lambda\Big(\bbE[\bbE(\delta^e|A^e)\bbE(\delta^e|A^e)^\top] +  M\bbE[A^e{A^e}^\top]M^\top\Big)\Big]\right\}.$$
Thus, analogous to the discrete exogenous setting, our estimator \eqref{eqn:mod_estimator} in the continuous and discrete exogenous setting is robust against strictly more directions than those protected by anchor regression as $\C^{\lambda,\gamma}_\mathrm{DRIG} \supseteq \C_{\mathrm{anchor}}^{\lambda}$.

\section{Connections to other invariance notions}\label{app:invariance}

We devote a comprehensive discussion on existing notions of invariance in the literature, and how they are related to the gradient invariance notion in our work. Throughout, we assume that the data is generated according to the SCM \eqref{eq:scm_train}.
 
The notion of invariance dates back to \citet{haavelmo1943statistical} who realized the invariant property of the causal variables. Formally, a subset $\cS\subseteq\{1,\dots,p\}$ of covariates is said to be \emph{conditionally invariant} if the distribution of the response $Y^e$ given $X^e_\cS$ is the same for all $e\in\cE$. In the SCM \eqref{eq:scm_train}, when there are no interventions on $Y$ or $H$ so that the distribution of $\varepsilon_y^e$ is the same for all $e\in\cE$, the parental set of $Y$, denoted by $\mathrm{pa}(Y)$, satisfies the conditional invariance in that $Y^e|X^e_{\mathrm{pa}(Y)}$ is the same for all $e\in\cE$. This property was explored in the reverse direction by \citet{peters2016causal} for discovering the parental set of $Y$. However, the conditional invariance may sometimes fail to identify the causal parameter; in particular, the conditional invariance property does not hold for the causal parameter when $X$ and $Y$ are confounded by a latent variable \citep{Rothenhausler2017CausalDF}. In recent literature, several alternative notions of invariance have been proposed; these are then used for causal discovery or distributional robustness. Below we list several representatives followed by a discussion. 


The first alternative proposed in \citet{arjovsky2019invariant} looks at the invariance of the conditional mean or the solution of $L_2$ risk minimization within each environment, instead of the conditional distribution. Formally, a subset $\cS^*\subseteq\{1,\dots,p\}$ of covariates is said to be \emph{solution invariant} if there exists $b^*\in\bbR^p$ supported on $\cS$ such that
\begin{equation*}
	b^*\in\argmin_{b}\bbE[\ell(X^{e}_{\cS^*},Y^e; b)],\quad\forall e\in\cE
\end{equation*}
where $X_\cS\in\bbR^p$ denotes the random vector that copies the coordinates of $X$ in $\cS$ and has zero components elsewhere. 
Based on this notion of invariance, \citet{arjovsky2019invariant} then proposed a method called invariant risk minimization (IRM) for out-of-distribution generalization. In the variable selection setting, IRM interpolates between the pooled OLS and solution invariance.
Formally, IRM solves the following problem
\begin{equation*}
	\min_{\cS,b}\left\{\frac{1}{|\cE|}\sum_{e\in\cE}\bbE[\ell(\tilde{X}^{e}_\cS,Y^e; b)] + \frac{\lambda}{|\cE|}\sum_{e\in\cE}\left[\bbE[\ell(\tilde{X}^{e}_\cS,Y^e; b)]-\min_{b'}\bbE[\ell(\tilde{X}^{e}_\cS,Y^e; b')]\right]\right\},
\end{equation*}
 where $\lambda$ is a hyperparamter that controls the regularization strength with $\lambda\to\infty$ enforcing the solution invariance, whenever it is achievable.

Apart from the conditional distribution and the conditional mean, another alternative considers the invariance of the risk of a prediction model from $X$ to $Y$. Specifically, a regression coefficient $b\in\bbR^p$ is said to fulfill \emph{risk invariance} if  
	the risk $\bbE[(Y^e-b^\top X^e)^2]$ is the same for all $e\in\cE$.
\citet{krueger2021out} proposed to regularize the pooled OLS towards risk invariance:
\begin{equation*}
	\min_b \frac{1}{|\cE|}\sum_{e\in\cE}\bbE[\ell(X^e,Y^e; b)] + \lambda\mathrm{Var}(\{\bbE[\ell(X^e,Y^e; b)]:e\in\cE\}),
\end{equation*}
where $\mathrm{Var}$ denotes here the empirical variance over all $e \in \cE$.

The last notion of invariance that we would like to highlight is the most closely related to our gradient invariance. We say a regression coefficient $b\in\bbR^p$ satisfies \emph{full gradient invariance} if 
	$\nabla_b\bbE[\ell(X^e,Y^e;b)]$ is the same for all $e\in\cE$.
This notion was introduced by \citet{Rothenhausler2017CausalDF} with the name inner-product invariance since in linear models, inner-product invariance is equivalent to $\bbE[X^e(Y^e-b^\top X^e)]$ being the same for all $e\in\cE$. 
\citet{Rothenhausler2017CausalDF} then proposed the causal Dantzig to identify the causal parameter by exploiting full gradient invariance in the setting with two environments. As we have seen earlier, DRIG with $\gamma\to\infty$ and $|\mathcal{E}|=2$ recovers the causal Dantzig.
A similar invariance notion was also explored in the context of out-of-distribution generalization by \citet{koyama2020out,shi2021gradient,Ram2021FishrIG}. Specifically, the authors in \citet{koyama2020out,Ram2021FishrIG} propose the following formulation
\begin{equation*}
	\min_b \frac{1}{|\cE|}\sum_{e\in\cE}\bbE[\ell(X^e,Y^e; b)] + \lambda\mathrm{trace}(\mathrm{Var}(\{\nabla_b\bbE[\ell(X^e,Y^e; b)]:e\in\cE\})),
\end{equation*} 
enforcing full gradient invariance when the regularization parameter $\lambda$ tends to infinity. \citet{shi2021gradient} also enforces full gradient invariance via regularization based on the inner products among pairs of gradients. Our gradient invariance in Definition~\ref{def:gi} is a relaxed version of the full gradient invariance. In particular, instead of enforcing the gradients in all environments to be the same, we require only a weighted average of the gradients to be stable in the sense of equaling the gradient in the reference environment. Thus, gradient invariance is strictly weaker than the full gradient invariance except when there are two environments, where the two notions are identical.

Under data generated according to the linear SCM \eqref{eq:scm_train}, among all the preceding invariance notions, our notion of gradient invariance necessitates the weakest conditions to identify the causal parameter (see below for a more detailed discussion). Furthermore, although gradient invariance is not strictly satisfied with a finite regularization parameter $\gamma$, DRIG achieves distributional robustness against moderate interventions. In contrast, all the aforementioned methods do not have finite robustness guarantees.


\subsection{Necessary conditions for invariance conditions to identify the causal parameter}
We discuss the necessary conditions for the above notions of invariance to identify the causal parameter under the linear SCM with multiple environments $e\in\cE$ and additive interventions, which is a special case of \eqref{eq:scm_train}:
\begin{equation*}
    \begin{pmatrix}
    X^e \\ Y^e
    \end{pmatrix}
    =
    B^\star
    \begin{pmatrix}
    X^e \\ Y^e
    \end{pmatrix} 
    + \varepsilon + \delta^e.
\end{equation*}
That is, we investigate when the causal parameter satisfies a certain type of invariance. We summarize the conclusions in Table~\ref{tab:invariance} and the following, which indicates that our gradient invariance requires the weakest conditions among all. The proof is given below.
\begin{itemize}
	\item In the simplest case without latent confounder and intervention on $Y$, all invariance conditions true for the causal parameter. Additionally under some sufficient conditions, e.g. when there are sufficient interventions on $X$ as illustrated in Section~\ref{sec:case1},  and all methods can identify the causal parameter. The existence of latent confounders and interventions on $Y$ bring in complications for causal identification. 
	\item When there are latent confounders, the conditional and solution invariance fail to hold for the causal parameter, while the risk and gradient invariance remain valid if $Y$ is not intervened on.
	\item Interventions on $Y$ causes even more trouble, under which only the full and our gradient invariance can be fulfilled by the causal parameter under some conditions on the interventions and the structural relationship between $Y$ and $X$. Compared to the full gradient invariance that requires the inner-product of interventions to be exactly the same across all environments, our gradient invariance requires a strictly weaker condition in that in some environments, the interventions on $X$ and $Y$ could have different correlations, although their weighted average has to be stable. In addition, the full gradient invariance does not allow $Y$ to have children in $X$, that is, the structural relationship from $X$ to $Y$ can only be causal rather than anti-causal. Intuitively, this protects the gradients from varying due to interventions on $Y$ that does not propagate to some of $X$. Nevertheless, our gradient invariance could relax this assumption if $Y$ is intervened in a stable way across environments.
\end{itemize}

\begin{table}
\centering
\caption{Whether the causal parameter satisfies a certain type of invariance under different cases of interventions and latent effects. Superscripts means some additional conditions are needed. 1: $\bbE[\Delta_x^e\delta_y^e]$ is the same for all $e\in\cE$ and $B^\star_{yx}=0$ (i.e., $Y$ is childless in $X$); 2: $\sum_{e\in\cE}\omega^e\bbE[\Delta_x^e\delta_y^e]=\bbE[\Delta_x^0\delta_y^0]$ and $\sum_{e\in\cE}\omega^e\bbE[{\delta_y^e}^2]=\bbE[{\delta_y^0}^2]$ or $B^\star_{yx}=0$.}\vskip 0.1in
\label{tab:invariance}
\begin{tabular}{c|CC|CC}
\toprule
intervention on $Y$ & \multicolumn{2}{c|}{$\delta_y^e=0$} & \multicolumn{2}{c}{$\exists e,e'\in\cE:\bbE[\delta_y^e]\neq0,\bbE[(\delta_y^{e'})^2]\neq0$} \\
latent confounder & w/o & w/ & w/o & w/ \\
\midrule
conditional invariance & \cmark & \xmark & \xmark & \xmark \\
risk invariance  & \cmark & \cmark & \xmark & \xmark\\
solution invariance & \cmark & \xmark & \xmark & \xmark \\
full gradient invariance & \cmark & \cmark & \cmark$^1$ & \cmark$^1$\\
gradient invariance & \cmark & \cmark & \cmark$^2$ & \cmark$^2$\\
\bottomrule
\end{tabular}
\end{table}

\begin{proof}
	
According to model \eqref{eq:scm_train}, we have $Y^e=b^{\star\top} X + \varepsilon_y^e$.

\paragraph{Case I. distribution of $\varepsilon_y^e=0$ the same for all $e\in\mathcal{E}$, without latent confounder.}
In this case, we have $Y^e=b^{\star\top} X + \varepsilon_y$, where $\varepsilon_y$ is independent of $X^e$ and has the same marginal distribution across all environments. Hence the conditional distribution of $Y^e$ given $X^e_{\mathrm{pa}(Y)}=x_{\mathrm{pa}(Y)}$ which is the distribution of $b^{\star\top}_{\mathrm{pa}(Y)}x_{\mathrm{pa}(Y)} + \varepsilon_y$ remains invariant for all $e$, which suggests the conditional invariance holds for the parental set of $Y$. 

The optimal solution given the parental set is $\bbE[Y^e|X^e_{\mathrm{pa}(Y)}]=b^{\star\top}_{\mathrm{pa}(Y)}X^e_{\mathrm{pa}(Y)}$. Hence the parental set and $b^\star$ satisfy the solution invariance. 

The $L_2$ risk of the causal parameter is given by $\bbE[(Y^e-b^{\star\top} X)^2]=\bbE[(Y^e-b^{\star\top}_{\mathrm{pa}(Y)}X^e_{\mathrm{pa}(Y)})^2]=\bbE[\varepsilon_y^2]$ which is the same for all $e$, so we conclude the risk invariance.

The gradient of the $L_2$ risk for each $e$ evaluated at $b^\star$ is $\bbE[X^e(Y^e-b^{\star\top}X^e)]=\bbE[X^e\varepsilon_y^e]=0$. Hence we conclude the full and our gradient invariance.

\paragraph{Case II. $\delta_y^e=0$, with latent confounder.}
The conditional distribution of $Y^e|X^e_{\mathrm{pa}(Y)}=x_{\mathrm{pa}(Y)}$ is the conditional distribution of $\varepsilon_y|X^e_{\mathrm{pa}(Y)}=x_{\mathrm{pa}(Y)}$, shifted by a constant $b^{\star\top}_{\mathrm{pa}(Y)}x_{\mathrm{pa}(Y)}$, which in general varies for different interventions on $X^e_{\mathrm{pa}(Y)}$. The conditional mean $\bbE[Y^e|X^e_{\mathrm{pa}(Y)}]=b^{\star\top}_{\mathrm{pa}(Y)}X^e_{\mathrm{pa}(Y)}+\bbE[\varepsilon_y|X^e_{\mathrm{pa}(Y)}]$, similarly, depends on $e$ as well. So both the conditional and solution invariance in general fail to hold for the causal parameter.

We have $\bbE[(Y^e-b^{\star\top}_{\mathrm{pa}(Y)}X^e_{\mathrm{pa}(Y)})^2]=\bbE[\varepsilon_y^2]$, suggesting the risk invariance. To see the gradient invariance, recalling the model \eqref{eq:scm_train}, we have
\begin{equation*}
	X^e = C^\star_x(\varepsilon_x+\Delta_x^e) + C^\star_{xy} \varepsilon_y.
\end{equation*}
Thus, the gradient at the causal parameter is given by $\bbE[X^e(Y^e-b^{\star\top}X^e)]=\bbE[(C^\star_x(\varepsilon_x+\Delta_x^e) + C^\star_{xy} \varepsilon_y)(\varepsilon_y)]=C^\star_x\bbE[\varepsilon_x\varepsilon_y]+C^\star_{xy}\bbE[\varepsilon_y^2]$, which is free of $e$. So we conclude the full gradient invariance which also implies our gradient invariance. 

\paragraph{Case III. $\bbE[\delta_y^e]\not\equiv c,\bbE[{\delta_y^{e}}^2]\not\equiv c$, w/ or w/o latent confounders.}
The conditional distribution of $Y^e|X^e_{\mathrm{pa}(Y)}=x_{\mathrm{pa}(Y)}$ is the conditional distribution of $b^{\star\top}_{\mathrm{pa}(Y)}x_{\mathrm{pa}(Y)}+\varepsilon_y+\delta_y^e$ given $X^e_{\mathrm{pa}(Y)}=x_{\mathrm{pa}(Y)}$ which apparently varies for different $e$ regardless of the existence of the latent confounders. The conditional expectation $\bbE[Y^e|X^e_{\mathrm{pa}(Y)}]=b^{\star\top}_{\mathrm{pa}(Y)}X^e_{\mathrm{pa}(Y)}+\bbE[\varepsilon_y|X^e_{\mathrm{pa}(Y)}]+\bbE[\delta_y^e]$ depends on $e$. The risk is now given by $\bbE[\varepsilon_y^2]+\bbE[{\delta_y^e}^2]$ which also depends on $e$. In contrast, the gradient becomes $$C^\star_x(\bbE[\varepsilon_x\varepsilon_y]+\bbE[\Delta_x^e\delta_y^e])+C^\star_{xy}(\bbE[\varepsilon_y^2]+\bbE[{\delta_y^e}^2]).$$
Under the conditions that $\bbE[\Delta_x^e\delta_y^e]\equiv c$ and $C_{xy}^\star=0$, we have the full gradient invariance. Our gradient invariance, in this case, is equivalent to say
\begin{equation*}
	C^\star_x\sum_{e\in\cE}\omega^e\bbE[\Delta_x^e\delta_y^e] + C_{xy}^\star\sum_{e\in\cE}\omega^e\bbE[{\delta_y^e}^2] = C^\star_x\bbE[\Delta_x^0\delta_y^0] + C_{xy}^\star\bbE[{\delta_y^0}^2].
\end{equation*}
So it is adequate to assume $\sum_{e\in\cE}\omega^e\bbE[\Delta_x^e\delta_y^e]=\bbE[\Delta_x^0\delta_y^0]$ and $\sum_{e\in\cE}\omega^e\bbE[{\delta_y^e}^2]=\bbE[{\delta_y^0}^2]$ or $C_{xy}=0$, for the causal parameter to satisfy our gradient invariance. 
\end{proof}

\section{Numerical exploration of the assumptions of Theorem~\ref{thm:main1}}
\label{sec:numerical_ass}
We consider the setup with three environments, two environments with small interventions and an environment with large interventions. Denote $e =1,2$ to be the two environments with small interventions and $e = 3$ to be the environment with large interventions. We set $p = 9$ and generate three Gram matrices $G^e \in \mathbb{R}^{p+1{\times}p+1}$, corresponding to data from each environment as follows:
\begin{eqnarray*}
\begin{aligned}
G^1&= (p+1)\times(p+1)~\text{ matrix with iid normal entries}; G_1 \leftarrow G^1{G^1}^\top\\
\zeta_1&= (p+1)\times(p+1)~\text{ matrix with iid normal entries}; \zeta_1 \leftarrow \zeta_1\zeta_1^\top/20\\
\zeta_2&= (p+1)\times(p+1)~\text{ matrix with iid normal entries}; \zeta_2 \leftarrow \zeta_2\zeta_2^\top/20\\
G^2 &= G^1+\zeta_1-\zeta_2\\
\zeta_3&= (p+1)\times(p+1)~\text{ matrix with iid normal entries}; \zeta_3 \leftarrow \zeta_3\zeta_3^\top\\
G^3&=\zeta_3+ G^1+G^2
\end{aligned}
\end{eqnarray*}
Note that by construction, $G^e \succ 0$ with high probability. Further, for every such matrix, there exists a SCM \eqref{eq:scm_train} such that the Gram matrix of $(X^e,Y^e)$ is $G^e$. Moreoever, $\mathbb{E}[(Y^e-X^eb)^2] = (b,1)G^e(b,1)^\top$. Furthermore,  Assumption~\ref{ass:strength_2} can be stated completely in terms of Gram matrices. 

Let $\mathcal{E} = \{1,2,3\}$, $\omega^e = 1/3$ for each $e\in\mathcal{E}$, and $\gamma = 4$. We generate 10000 instances of $G^e$ according to the scheme described above. All the instances do not satisfy the `observational' assumption (i.e. $\not\exists e' \text{ such that }G^{e'} \preceq G^{e} \text{ for all }e \in\mathcal{E}$). Furthermore, all instances satisfy Assumption~\ref{ass:strength_2} with $\mathcal{E}_\text{small} = \{1,2\}$. Out of the 10000 instances, 3480 satisfy the assumptions of Theorem~\ref{thm:main}. 

This numerical illustration shows that there are many instances where the observational assumption is not satisfied, and Assumption~\ref{ass:strength} and the assumptions of Theorem~\ref{thm:main1} are satisfied, highlighting that these assumptions are much less restrictive than the `observational' assumption.

\section{Approximate robustness guarantees of DRIG}
\label{sec:approximate_robustness}
Consider the sets:
 \begin{eqnarray*}
\begin{aligned}
\mathcal{C}_{1,\gamma}:= & \left\{v \in \mathbb{R}^{p+1}: \bbE[vv^\top ] \preceq \left[K^\star_1 + \gamma\sum_{e \in \mathcal{E}} \omega^e\left(S^e - K^\star_1\right)\right]_+\right\},\\
\mathcal{C}_{2,\gamma}:= & \left\{v \in \mathbb{R}^{p+1}: \bbE[vv^\top ] \preceq K^\star_2 + \gamma\sum_{e \in \mathcal{E}} \omega^e\left(S^e - K^\star_2\right)\right\},
\end{aligned}
\end{eqnarray*}
where, 
\begin{eqnarray*}
\begin{aligned}
K^\star_1&= \argmin_{K \in \mathbb{R}^{p+1 \times p+1}}\|K\|_2\quad\text{subject-to}\quad K = S^e\text{ for some }e \in \mathcal{E} \\
K^\star_2  &= \argmax_{K\in \mathbb{R}^{p+1 \times p+1}} \|K\|_2 \quad\text{subject-to}\quad K \preceq S^e \text{ for all }e \in \mathcal{E}.
\end{aligned}
\end{eqnarray*}
Here, for a symmetric matrix $A$ with eigenvector/eigenvalue pairs $(u_i,\lambda_i)$, $[A]_{+} = \sum_{i}\max\{\lambda_i,0\}u_iu_i^\top$ represents the positive part of the matrix. Furthermore, $\|A\|_2$ represents the spectral norm of $A$. Since $K^\star_2 \preceq K^\star_1$, we have for every $\gamma \geq 1$, $\mathcal{C}_{1,\gamma} \subseteq \mathcal{C}_{2,\gamma}$. Thus, for every regression parameter $b \in \mathbb{R}^{p}$ and $\gamma \geq 1$, we have: $\mathcal{L}_{\mathcal{C}_{1,\gamma}}^\mathrm{robust}(b) \leq \mathcal{L}_{\mathcal{C}_{2,\gamma}}^\mathrm{robust}(b)$. The following theorem assesses how the DRIG loss $\mathcal{L}_\gamma(b)$ is related to objectives $\mathcal{L}_{\mathcal{C}_{1,\gamma}}^\mathrm{robust}(b)$ and $\mathcal{L}_{\mathcal{C}_{2,\gamma}}^\mathrm{robust}(b)$, and characterizes the robustness properties of the DRIG prediction model to perturbations in the test environment. For simplicity, we omit constants and specify them in Appendix~\ref{proof_thm_main}.
\begin{theorem}
For every $\gamma \geq 1$ and regression parameter $b \in \mathbb{R}^{p}$, the DRIG objective \eqref{eq:obj} is between $\mathcal{L}_{\mathcal{C}_{1,\gamma}}^\mathrm{robust}(b)$ and $\mathcal{L}_{\mathcal{C}_{2,\gamma}}^\mathrm{robust}(b)$, i.e.: $\mathcal{L}^\mathrm{robust}_{\mathcal{C}_{1,\gamma}}(b)\leq \mathcal{L}_\gamma(b) \leq \mathcal{L}^\mathrm{robust}_{\mathcal{C}_{2,\gamma}}(b)$. Furthermore, suppose $K^\star_1 + \gamma\sum_{e \in \mathcal{E}} \omega^e\left(S^e - K^\star_1\right) \succeq 0$ and $\frac{(1-\gamma)\|K^\star_2-K^\star_1\|_2}{\sigma_\text{min}(I-B^\star)} < 1$. Then, the distance between the solution $b^\mathrm{opt}_\gamma$ of \eqref{eqn:DRIG} and the minimizer of \eqref{eqn:worst-case} with respect to the set $\mathcal{C}_{1,\gamma}$ and $\mathcal{C}_{2,\gamma}$ is bounded:
\begin{eqnarray*}
\max_{\mathcal{C}\in\{\mathcal{C}_{1,\gamma},\mathcal{C}_{\gamma,2}\}}\|b^\mathrm{opt}_\gamma - \argmin_{b\in\mathbb{R}^p} \mathcal{L}_{\mathcal{C}}^\mathrm{robust}(b)\|_2 \leq c'\sqrt{\gamma\|K^\star_1 - K_2^\star\|_2},
\end{eqnarray*}
with $\max_{\mathcal{C}\in\{\mathcal{C}_{1,\gamma},\mathcal{C}_{\gamma,2}\}}\mathcal{L}_{\mathcal{C}}^\mathrm{robust}(b^\mathrm{opt}_\gamma) - \min_{b\in\mathbb{R}^p} \mathcal{L}_{\mathcal{C}}^\mathrm{robust}(b) \leq c\gamma\|K^\star_1 - K_2^\star\|_2$ for some constants $c,c'$. 
\label{thm:main2}
\end{theorem}

We prove Theorem~\ref{thm:main} in Supplementary~\ref{proof_thm_main}. The first part of the theorem states that the DRIG loss is sandwiched between two distributional robust objectives, one with respect to the set $\mathcal{C}_1$ and the other with respect to the set $\mathcal{C}_2$. A key quantity in the second part of our result is $\|K^\star_1-K^\star_2\|_2$: the smaller this quantity, the closer the DRIG estimate $b^\mathrm{opt}_\gamma$ is to minimize the worst-case risk \eqref{eqn:worst-case} with respect to the set $\mathcal{C}_{1,\gamma}$. As a setting where $\|K^\star_1-K^\star_2\|_2$ is small, suppose there exists a collection of environment $\mathcal{E}_\mathrm{small} \subset \mathcal{E}$ with small interventions, i.e. $S^e \preceq S^f$ for all $e \in \mathcal{E}_\mathrm{small}$ and $f \in \mathcal{E}\setminus\mathcal{E}_\mathrm{small}$, and $\|S^e-S^{e'}\|_2 \leq \epsilon$ for all $e,e' \in \mathcal{E}_\mathrm{small}$ and some small $\epsilon$. Then, it is straightforward to show that $\|K^\star_1-K^\star_2\|_2 \leq \epsilon$.

\section{Causal identification via DRIG}\label{sec:causal-identify}
We investigate causal identifiability with the DRIG estimator \eqref{eqn:DRIG} when $\gamma \to \infty$. 

In Section \ref{sec:case1}, we show that if there are sufficient interventions on the covariates $X$, then $\mathrm{rank}([C^\star{L^\star}{C^\star}^\top]_{1:p,1:p}) = p$ and the set of models $\mathcal{I}$ with invariant gradients is a singleton. In this setting, according to \eqref{eq:bias}, the optimal solution of DRIG when $\gamma \to \infty$ is a biased version of the causal parameter $b^\star$, where the bias is given by $([C^\star{L^\star}{C^\star}^\top]_{1:p,1:p})^{-1}(C^\star_xL^\star_{xy}+L^\star_{y}C_{xy}^\star)$. We analyze in Section \ref{sec:case1} the magnitude of this bias under various structural assumptions. In Section~\ref{sec:insuff_interv_x}, we consider the setting where there are insufficient interventions on the covariates $X$ but impose structural assumptions so that $C^\star_xL^\star_{xy}+L^\star_{y}C_{xy}^\star = 0$; here, the set of models $\mathcal{I}$ with invariant gradients typically consists of multiple elements, and we identify the most predictive model according to \eqref{eq:b_insufficient}.

Throughout, we assume additive interventions, i.e. assume the following model for $\varepsilon^e$:
$$\varepsilon^e = \varepsilon + \delta^e,$$
where $\varepsilon$ is independent of $\delta^e$, and $\delta^e$ represents additive interventions. Note that for a variable $j$, $\delta_j^e$ not being identically zero implies that either variable $j$ has received a direct intervention, or there has been an intervention on the latent variable. 

\subsection{Sufficient interventions on the covariates $X$}
\label{sec:case1}
 Recalling that the matrix $L^\star_x$ encodes interventions on the covariates we impose conditions on $L^\star_x$. In particular, in Section~\ref{sec:no_latent_interv}, we assume no interventions on the response or latent variables, leading to a identifiable case for the causal parameter; in Section~\ref{sec:interventions_latent}, we allow for interventions on the latent variable and the response variable and study the approximate causal identifiability by quantifying the bias with respect to the causal parameter.

\subsubsection{No interventions on the response variable $Y$ or latent variables $H$}
\label{sec:no_latent_interv}

By making structural assumptions on the underlying graphical model, the result of Theorem~\ref{thm:solution_infty} can be specialized to attain full causal identifiability, namely the DRIG estimator recovering the causal parameter.  
\begin{corollary}[causally identifiable and robust]\label{cor:identifiable}Suppose that $\delta^e_{p+1} \equiv 0$ for every $e \in \mathcal{E}$ and $L_x^\star \succ0$. Then, we have that 
$$ b_\infty^{\mathrm{opt}}=b^\star \quad \text{ and }\quad \lim_{\gamma \to \infty}\mathcal{L}_{\gamma}(b_\infty^{\mathrm{opt}}) = E[(\varepsilon_y^e)^2],$$
where $\varepsilon_y^e := \varepsilon^e_{p+1}$ represents the component of the noise $\varepsilon^e$ corresponding to $Y$.
\end{corollary}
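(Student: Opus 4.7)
The plan is to apply Theorem~\ref{thm:solution_infty} directly, so the two things I need to verify are (i) that the rank condition $\mathrm{rank}([C^\star L^\star C^{\star\top}]_{1:p,1:p})=p$ holds under the assumption $L_x^\star\succ 0$, and (ii) that the bias correction term $C_x^\star L_{xy}^\star + L_y^\star C_{xy}^\star$ vanishes under the assumption $\delta^e_{p+1}\equiv 0$ for every $e$.

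For (ii), since $\delta^e_{p+1}\equiv 0$ for all $e\in\mathcal{E}$, the last component of every $\delta^e$ is deterministically zero; hence $\mathbb{E}[\delta^e_x \delta^e_{p+1}]=0$ and $\mathbb{E}[(\delta^e_{p+1})^2]=0$. Summing and subtracting the reference term as in the definition of $L^\star$, I immediately obtain $L^\star_{xy}=0$ and $L^\star_y=0$. Consequently $C_x^\star L_{xy}^\star + L_y^\star C_{xy}^\star=0$.

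For (i), I will compute the relevant block directly. Writing $L^\star$ in the block form $L^\star=\mathrm{diag}(L_x^\star,0)$ (by (ii)) and $C^\star$ in the block form given in the excerpt, a straightforward block multiplication gives $[C^\star L^\star C^{\star\top}]_{1:p,1:p} = C_x^\star L_x^\star (C_x^\star)^\top$. The invertibility of $C_x^\star$ follows from the acyclicity of the observed subgraph (which was assumed to guarantee that $I-B^\star$ is invertible): $B^\star$ can be reduced to strictly triangular form on the observed variables, making $C^\star=(I-B^\star)^{-1}$ triangular with unit diagonal, so its leading $p\times p$ block is also triangular with unit diagonal and hence invertible. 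Combined with $L_x^\star\succ 0$, this yields $C_x^\star L_x^\star (C_x^\star)^\top\succ 0$, so the rank condition holds. Plugging (i) and (ii) into the formula \eqref{eq:bias} of Theorem~\ref{thm:solution_infty} gives $b^{\mathrm{opt}}_\infty = b^\star$.

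For the risk claim, I will substitute $b=b^\star$ into $\mathcal{L}_\gamma$. From the SCM \eqref{eq:scm_train} and $b^\star=B^\star_{p+1,1:p}$, together with $\delta^e_{p+1}=0$, I get $Y^e - b^{\star\top} X^e = \varepsilon_y$ for every $e\in\mathcal{E}\cup\{0\}$. Therefore $\mathbb{E}[\ell(X^e,Y^e;b^\star)] = \mathbb{E}[\varepsilon_y^2]$ is the same constant across all environments. The penalty term in \eqref{eq:obj} then vanishes identically, leaving $\mathcal{L}_\gamma(b^\star)=\mathbb{E}[\varepsilon_y^2]$ for every finite $\gamma$; the limit as $\gamma\to\infty$ is then trivial. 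No step here poses a real obstacle; the main bookkeeping is simply confirming the block computation and the invertibility of $C_x^\star$.
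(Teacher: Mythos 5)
Your proof is correct and follows essentially the same route as the paper: deduce $L^\star_{xy}=0$ and $L^\star_y=0$ from $\delta^e_{p+1}\equiv 0$, apply Theorem~\ref{thm:solution_infty} to get $b^{\mathrm{opt}}_\infty=b^\star$, and evaluate the risk at $b^\star$ using $Y^e-b^{\star\top}X^e=\varepsilon_y$ so that the penalty term vanishes. One small imprecision: $C^\star$ is unit-triangular only after a permutation of the variables, so its leading $p\times p$ block need not itself be triangular; the invertibility of $C^\star_x$ is better justified via the Schur complement of $I-B^\star$ with respect to its $(p+1,p+1)$ entry (as the paper does elsewhere), after which your explicit verification of the rank condition --- a step the paper's proof leaves implicit --- goes through.
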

See Supplementary~\ref{proof_cor_identifiable} for the proof. Corollary~\ref{cor:identifiable} states that under some assumptions, the causal parameter $b^\star$ can be identified by the DRIG estimator with $\gamma \to \infty$. The assumption $\delta^e_{p+1} \equiv 0$ for every $e \in \mathcal{E}$ requires that there are no interventions on the response $Y$ or any latent variables $H$, that is $E$ does not point to $H$ or $Y$ in the graphical model~\ref{fig:graph}(b). The assumption $L^\star_{x} \succ 0$ ensures that there are interventions on all the covariates $X$, that is $E$ points to every covariate in $X$.  Under these conditions, the invertibility assumption in Theorem~\ref{thm:solution_infty} is satisfied, and the matrices $L^\star_{xy}$ and $L^\star_y$ are both equal to zero.  We note that a similar result as Corollary~\ref{cor:identifiable} was also established in \cite{Rothenhausler2017CausalDF} without touching upon the objective that quantifies the robustness, although \cite{Rothenhausler2017CausalDF} only considers the specialized settings discussed above, and does not provide guarantees on approximate identifiability under more general settings (as we do in subsequent sections). 

The assumption that the interventions do not directly affect the response variable or the latent variables is common for identifiability in the causal inference literature. Similarly, the assumption that the covariates all receive an intervention is also prevalent, although the manifestation of this assumption is different in our setting than in instrumental variable regression or in anchor regression. To take a closer look at the latter condition, namely $L^\star_x \succ 0$, note that $L^\star_x = \sum_{e \in \mathcal{E}}\omega^e(S^e-S^0)_{1:p,1:p}$ where as defined in Section~\ref{sec:robust_gua}, $S^e := \mathbb{E}[\varepsilon^e{\varepsilon^e}^\top]$. Thus the condition that $L^\star_x$ is positive definite can be satisfied with data from {two environments} (a reference environment and an additional environment). In particular, as long as $(S^{e})_{1:p,1:p}  \succ S^{0}_{1:p,1:p}$ for the non-reference environment $e$, we have that $L^\star_x \succ 0$, and can guarantee identifiability. In contrast, 
instrumental variable regression or anchor regression on data from SCM \eqref{eq:scm_train} can only guarantee identifiability if $\sum_{e \in \mathcal{E}}\omega^e({\mu^e}{\mu^e}^\top)_{1:p,1:p} \succ 0$. In other words, these methods require {at least $p$ environments} to recover the causal parameter, which is generally far larger than the number of environments required by DRIG. Conceptually, the improvement in identifiability offered by DRIG comes from the fact that it exploits both mean and variance shifts, whereas the other two methods only exploit mean shifts. A similar attribute of DRIG led to substantial improvement in using DRIG for obtaining robust predictions over other methods (see Section~\ref{sec:robust_gua}).

Besides identifying the causal parameter, the optimal objective function, which is the worst-case risk according to Theorem~\ref{thm:main}, is finite and depends on the variance of the exogenous noise associated with $Y$. Recall that the causal parameter is robust against arbitrary interventions on $X$, namely the perturbation class $\mathcal{C}_{\mathrm{causal}}$. Thus, the prediction model $b^\star$ is guaranteed to have a bounded mean squared error under arbitrarily strong interventions on $X$, which is appealing in some applications. 

\paragraph{Independent interventions on the response variable} Previously, we assumed that there are no interventions on $Y$, so that $L^\star_y>0$. We next relax this condition, and allow independent interventions on $Y$. Formally, we assume that $\mathbb{E}[\delta_x^e\delta_y^e] = 0$ for every $e \in \mathcal{E}$; this assumption will be satisfied if there are no interventions on the latent variables $H$, and if the interventions on $X$ and $Y$ are independent. As with Corollary~\ref{cor:identifiable}, we assume that there are interventions on all the covariates $X$ (i.e., $L_x \succ 0)$. Under these assumptions, we have $L^\star_{xy} = 0$, and the result of Theorem~\ref{thm:solution_infty} can be specialized to attain (approximate) causal identifiability even when $Y$ is intervened on.
\begin{corollary}[independent interventions on $Y$] \label{corr:independent_Y}
Suppose that $\mathbb{E}[\delta_x^e\delta_y^e] = 0$ for every $e \in \mathcal{E}$, and that $L^\star_x \succ 0$ and $L^\star_y>0$. Then,
\begin{equation}\label{eq:bias_bound1}
    	\left\|b_\infty^{\mathrm{opt}} - b^\star\right\|_\infty \leq \frac{\|C^\star_{xy}\|_\infty}{\min_{\|u\|_\infty=1}\|(C^\star_xL^\star_x {C^\star_x}^\top/L^\star_{y} + C^\star_{xy}{C^\star_{xy}}^\top)u\|_\infty}.
	\end{equation}
Further, assuming that $Y$ is not an ancestor of any covariate $X$, then we have
\begin{equation*}
	b_\infty^{\mathrm{opt}} = b^\star \quad \text{ and } \quad \lim_{\gamma \to \infty}\mathcal{L}_\gamma(b_\infty^{\mathrm{opt}}) \to \infty.
\end{equation*}
\end{corollary}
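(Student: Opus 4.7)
The strategy is to invoke Theorem~\ref{thm:solution_infty} and exploit the simplifications arising from $\mathbb{E}[\delta_x^e \delta_y^e]=0$ for every $e \in \mathcal{E}$. I would first observe that, by the definition $L^\star = \sum_{e\in\mathcal{E}} \omega^e \mathbb{E}[\delta^e {\delta^e}^\top - \delta^0{\delta^0}^\top]$, this assumption forces the off-diagonal block $L^\star_{xy}$ to vanish. Combined with $L^\star_x \succ 0$ and $L^\star_y > 0$, this yields $L^\star \succ 0$, so $C^\star L^\star {C^\star}^\top \succ 0$ using invertibility of $C^\star$, and its leading $p\times p$ principal submatrix is positive definite. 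Hence the second, singleton conclusion of Theorem~\ref{thm:solution_infty} applies and delivers the explicit closed form \eqref{eq:bias} for $b_\infty^{\mathrm{opt}}$.

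Next, using block multiplication with $L^\star_{xy}=0$, I would compute $[C^\star L^\star {C^\star}^\top]_{1:p,1:p} = C^\star_x L^\star_x {C^\star_x}^\top + L^\star_y C^\star_{xy}{C^\star_{xy}}^\top$ and $C^\star_x L^\star_{xy} + L^\star_y C^\star_{xy} = L^\star_y C^\star_{xy}$. Substituting into \eqref{eq:bias} and pulling the positive scalar $L^\star_y$ through the inverse yields
\begin{equation*}
  b_\infty^{\mathrm{opt}} - b^\star = \left(C^\star_x L^\star_x {C^\star_x}^\top/L^\star_y + C^\star_{xy}{C^\star_{xy}}^\top\right)^{-1} C^\star_{xy}.
\end{equation*}
The bound \eqref{eq:bias_bound1} then follows from the standard identity $\|A^{-1} v\|_\infty \le \|v\|_\infty/\min_{\|u\|_\infty=1}\|Au\|_\infty$ for the induced $\ell_\infty$ operator norm, applied with $v=C^\star_{xy}$.

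For the second claim, when $Y$ is not an ancestor of any covariate the $(X,Y)$-block $B^\star_{xy}$ of $B^\star$ vanishes and $I-B^\star$ is block lower triangular, so direct block inversion (or equivalently, the Neumann-series interpretation of $C^\star=(I-B^\star)^{-1}$ as a sum over directed paths) gives $C^\star_{xy}=0$. Plugging this into the above closed-form expression immediately yields $b_\infty^{\mathrm{opt}}=b^\star$. For the divergence of the risk, I would appeal to Theorem~\ref{thm:main}: using the test SCM \eqref{eq:scm_test} together with $B^\star_{yy}=0$, one obtains $Y^v - b^{\star\top} X^v = \varepsilon_y + v_y$, so $\mathcal{L}_\gamma(b^\star) = \sup_{v \in \mathcal{C}^\gamma_\mathrm{DRIG}} \mathbb{E}[(\varepsilon_y+v_y)^2]$. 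Since the $(p+1,p+1)$ entry of the defining constraint matrix equals $\gamma L^\star_y > 0$, the class $\mathcal{C}^\gamma_\mathrm{DRIG}$ admits perturbations with arbitrarily large $\mathbb{E}[v_y^2]$, and therefore $\mathcal{L}_\gamma(b_\infty^{\mathrm{opt}}) \to \infty$ as $\gamma \to \infty$. The only delicate step is the algebraic rearrangement that pulls $1/L^\star_y$ inside the inverse to recover exactly the form in \eqref{eq:bias_bound1}; all remaining work is routine block bookkeeping and appeals to Theorems~\ref{thm:main} and~\ref{thm:solution_infty}.
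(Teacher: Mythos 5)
Your proposal is correct and follows essentially the same route as the paper's proof: specialize Theorem~\ref{thm:solution_infty} using $L^\star_{xy}=0$ to get the closed form $b_\infty^{\mathrm{opt}}-b^\star=\big(C^\star_xL^\star_x{C^\star_x}^\top/L^\star_y+C^\star_{xy}{C^\star_{xy}}^\top\big)^{-1}C^\star_{xy}$, apply the induced-$\ell_\infty$ inequality, and then set $C^\star_{xy}=0$ for the second claim while observing that the risk at $b^\star$ contains the term $\gamma L^\star_y\to\infty$. If anything, you are more careful than the paper in two places -- verifying that the rank-$p$ (singleton) case of Theorem~\ref{thm:solution_infty} actually applies, and deriving $C^\star_{xy}=0$ from the ancestor condition via block triangularity (the paper's proof instead writes the garbled ``when $b^\star=0$'') -- and your divergence argument via the worst-case-risk identity of Theorem~\ref{thm:main} is equivalent to the paper's direct evaluation $\cL_\gamma(b^\star)=\bbE[\varepsilon_y^2]+\bbE[(\delta^0_y)^2]+\gamma L^\star_y$.
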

See Supplementary~\ref{proof_corr_independent_Y} for the proof. 
Corollary~\ref{corr:independent_Y} states that under the setting where the interventions on $Y$ are independent of those on $X$ and when all covariates are intervened on, the DRIG estimator with $\gamma \to \infty$ approximates the causal parameter at the resolution in \eqref{eq:bias_bound1}. Notice that the approximation becomes tighter the smaller $L^\star_y$ or equivalently the weaker the interventions on $Y$. Corollary~\ref{corr:independent_Y} further states that if the response $Y$ is a descendant of all the covariates, then we have full identifiability, regardless of the intervention strength on the response variable. However, in contrast to Corollary~\ref{cor:identifiable}, now the objective function evaluated at the optimum is approaching infinity as $\gamma\to\infty$. In other words, even though DRIG can identify the causal parameter when there are interventions on $Y$, it does not protect against arbitrarily strong interventions on both $X$ and $Y$. Specifically, all linear prediction models, which includes the causal parameter, would attain an infinite worst-case error.

Nevertheless, the following proposition shows that the causal parameter is robust against another perturbation class which consists of arbitrarily strong interventions on $X$ but bounded interventions on $Y$. This is a slight generalization of the robustness result of the causal parameter discussed in Section~\ref{sec:robust_gua}. See Supplementary~\ref{proof_prop_int_y_robust} for the proof.
 \begin{proposition}\label{prop:int_y_robust}
Suppose that the test data is generated according to the SCM \eqref{eq:scm_test}. Under the assumptions in Corollary~\ref{corr:independent_Y}, for any $c\geq0$, we have
	\begin{equation*}
		b^\star=\argmin_b\sup_{v\in\bbR^{p+1}:\bbE[v_y^2]\leq c}\bbE[(Y-b^\top X)^2],
	\end{equation*}
	where $v_y$ is the component of $v$ corresponding to $Y$. 
\end{proposition}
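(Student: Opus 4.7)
The plan is to exploit the structural assumption that $Y$ is not an ancestor of any covariate, which decouples $X^v$ from $v_y$ and thereby lets us evaluate the worst-case risk at $b=b^\star$ explicitly, while showing the worst-case risk at any other $b$ is infinite because $v_x$ is unconstrained.

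First, since $Y$ has no children in $X$, the block of $B^\star$ mapping $Y$ back to $X$ is zero, so the SCM for the test distribution \eqref{eq:scm_test} gives $X^v = (I-B^\star_{xx})^{-1}(\varepsilon_x + v_x)$, which does not involve $v_y$. Using $Y^v = b^{\star\top}X^v + \varepsilon_y + v_y$, I would write
\begin{equation*}
Y^v - b^\top X^v = (b^\star - b)^\top X^v + \varepsilon_y + v_y,
\end{equation*}
and expand the MSE as
\begin{equation*}
\mathbb{E}[(Y^v-b^\top X^v)^2] = \mathbb{E}[((b^\star-b)^\top X^v)^2] + 2\mathbb{E}[(b^\star-b)^\top X^v\,(\varepsilon_y+v_y)] + \mathbb{E}[(\varepsilon_y+v_y)^2].
\end{equation*}

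Next, at $b=b^\star$ the first two terms vanish and, using $\varepsilon \indpt v$ (so $\mathbb{E}[\varepsilon_y v_y]=0$), the MSE reduces to $\mathbb{E}[\varepsilon_y^2]+\mathbb{E}[v_y^2] \le \mathbb{E}[\varepsilon_y^2]+c$ for every admissible $v$. Hence $\sup_{v:\mathbb{E}[v_y^2]\le c}\mathbb{E}[(Y^v-b^{\star\top} X^v)^2] < \infty$.

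For any $b\neq b^\star$, let $\beta := b^\star - b \neq 0$ and $C := (I-B^\star_{xx})^{-1}$, which is invertible. Then $\beta^\top C \neq 0$, so there is a direction $w\in\mathbb{R}^p$ with $\beta^\top Cw \neq 0$. I would take $v_x = tw$ (deterministic) and $v_y \equiv 0$, which satisfies $\mathbb{E}[v_y^2]=0\le c$. Then $\mathbb{E}[((b^\star-b)^\top X^v)^2]$ grows like $t^2(\beta^\top Cw)^2$ while the cross term grows only linearly in $t$, so sending $t\to\infty$ yields $\sup_{v:\mathbb{E}[v_y^2]\le c}\mathbb{E}[(Y^v-b^\top X^v)^2]=+\infty$. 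Combining the two cases identifies $b^\star$ as the unique minimax solution.

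The argument is essentially direct once the structural assumption is used to eliminate $v_y$ from $X^v$; the only point to handle with mild care is the cross term $2\mathbb{E}[(b^\star-b)^\top X^v(\varepsilon_y+v_y)]$, which I expand using the independence $\varepsilon\indpt v$ to $2\beta^\top C(\mathbb{E}[\varepsilon_x\varepsilon_y]+\mathbb{E}[v_xv_y])$ and bound by the $O(t)$ contribution so that it is dominated by the $O(t^2)$ quadratic term. No genuine obstacle arises; the key conceptual step is realizing that the assumption ``$Y$ is not an ancestor of any $X$'' is precisely what makes $X^v$ insensitive to $v_y$ and lets the residual $Y^v - b^{\star\top} X^v$ depend only on $\varepsilon_y + v_y$, whose second moment is controlled by $c$.
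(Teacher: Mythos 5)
Your proof is correct and follows essentially the same route as the paper's: both reduce the worst-case risk to a quadratic form in the perturbation, observe that the unconstrained $v_x$-part drives the supremum to infinity unless the coefficient multiplying $X^v$ (the paper's $w_x$, your $\beta^\top C$) vanishes — which happens exactly at $b=b^\star$ — and that at $b^\star$ the residual is $\varepsilon_y+v_y$, giving the finite value $\bbE[\varepsilon_y^2]+c$. The only difference is cosmetic: you invoke the assumption that $Y$ is not an ancestor of any covariate to eliminate $v_y$ from $X^v$, whereas the paper computes directly with the reduced-form vector $w=(w_x,w_y)$ and does not need that structural assumption for this step.
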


\subsubsection{Interventions on the latent variables with dense latent effects}
\label{sec:interventions_latent}
When there are interventions on the latent variables or on the response variable that is the parent of some covariates, the assumptions in Section~\ref{sec:case1} are not satisfied, and thus identifiability cannot be guaranteed. Nonetheless, we will demonstrate in this section that under some assumptions on the strength of perturbations on the covariates, and structural assumptions on the latent variables, we can guarantee that the DRIG estimator with $\gamma \to \infty$ can approximately identify the causal parameter $b^\star$. To formally state assumptions needed for approximate identifiability, we model the effects of those latent variables that vary explicitly:
$$\begin{pmatrix}X^e \\ Y^e\end{pmatrix} = B^\star\begin{pmatrix}{X}^e \\Y^e \end{pmatrix} + \Gamma^\star{H}^e+\varepsilon+\delta^e \quad;\quad H^e = H + \eta^e,$$
where $H \in \mathbb{R}^h$ represents the unperturbed latent variables and $\eta^e$ represents interventions on these latent variables. The matrix $\Gamma^\star \in \mathbb{R}^{p \times h}$ encodes the effect of the latent variables on the observed variables. As the latent effects and their perturbations are fully captured by the term $\Gamma^\star{H}^e$, the quantity $\delta^e$ represents the perturbations on only the observed variables, and is independent of $H^e$. Finally, $\varepsilon$ is an independent noise term that is independent of both $\delta^e$ and $H^e$. For simplicity, we assume that $e = 0$ is an observational setting with $\delta^0 \equiv 0$ and $\eta^0 \equiv 0$. \\

Before describing the assumptions needed for our theoretical guarantees, we present some notations. Specifically, we denote $\sigma_{\mathrm{max}}(\cdot)$ and $\sigma_{\mathrm{min}}(\cdot)$ as the maximum and minimum singular value of an input matrix. 

\begin{assumption} Our analysis is based on the setting where the number of covariates $p$ is tending to infinity, and makes the following assumptions:
\begin{enumerate}[label=\textit{A\arabic*}]
\setlength{\itemsep}{2pt}
\setlength{\parskip}{2pt}
        \item The sub-graph among the observed variables is a DAG.\label{assumptiondag} 
        \item The latent variables $H$ are ancestors of the observed variables. \label{assumption:latent_exogenous}
        \item The number of latent variables {$h$} is much smaller than the number of observed variables: $h = o(p)$.\label{assumption:dense_latent_number}
	\item The latent effects are dense, that is:
$\max_{i \in [p]}\|\Proj_{\text{col-space}(\Gamma^\star)}e_i\|_2^2 = \mathcal{O}(h/p)$. \label{assumption:dense_latent}
	\item The latent effects are bounded, i.e., $\|\Gamma^\star\|_2^2 = \mathcal{O}(h)$.\label{assumption: bounded_latent}
	\item The interventions on the covariates $X$ are sufficiently strong: $\sigma_{\mathrm{min}}(L_x^\star) > \frac{4\|L_{xy}^\star\|_2 \sigma_\mathrm{max}(I-B^\star)^2}{\sigma_{\mathrm{min}}(I-B^\star)}$. \label{assumption: sufficient_strong}
	\item The causal coefficients are not too large, i.e., $d\max_{i,j}|B^\star_{ij}| < 1/2$, where $d$ is the largest number of incoming and outgoing edges among the nodes in the subgraph among observed variables. \label{assumption:bound_causal_coeffs}
\end{enumerate}
\end{assumption}
Assumption~\ref{assumptiondag} requires that there are no cycles in the graph among the observed variables. Assumption~\ref{assumption:latent_exogenous} assumes that the latent variables $H$ act exogenously on the observed variables. Assumption~\ref{assumption:dense_latent_number} requires that the number of latent variables is much smaller than number of observed variables. Assumption~\ref{assumption:dense_latent} can be interpreted as the effects of the latent variables spread across all the observed variables. The quantity $\max_{i \in [p]}\|\Proj_{\text{col-space}(\Gamma^\star)}e_i\|_2$ in this condition is an incoherence parameter \citep{Chandrasekaran2011RankSparsityIF} measuring the ``diffuseness" of the latent effects, where $\Proj_{\text{col-space}(\Gamma^\star)}$ is the projection onto the column-space of $\Gamma^\star$ and $e_i$ is a standard coordinate basis. The smaller the value of $\max_{i \in [p]}\|\Proj_{\text{col-space}(\Gamma^\star)}e_i\|_2$, the less concentrated the effect of the latent variables on any single observed variable. As $\max_{i \in [p]}\|\Proj_{T}(e_i)\|_2 \in [\sqrt{\text{dim}(T)/{p}},1]$ for any subspace $T \subseteq \mathbb{R}^p$, Assumption~\ref{assumption:dense_latent} ensures that the latent effects are sufficiently diffuse. Assumption~\ref{assumption: bounded_latent} requires that the latent effects are bounded; for example entries of $\Gamma^\star$ being distributed as $\mathcal{N}(0,1/p)$ satisfies this condition. Assumption~\ref{assumption: sufficient_strong} requires sufficiently strong interventions on the covariates $X$. Finally, Assumption~\ref{assumption:bound_causal_coeffs} ensures that the strength of the causal effects among observed variables is not too large. 

\begin{proposition}(approximate identifiability with interventions on the latent variables) Suppose that Assumptions~\ref{assumption:latent_exogenous}-\ref{assumption:bound_causal_coeffs} are satisfied. As the number of covariates $p$ tends to infinity, we have:
\begin{equation*}
\left\|b_\infty^{\mathrm{opt}} - b^\star\right\|_\infty = \mathcal{O}\left(\frac{h^{5/2}\max_e\|\mathrm{Cov}({\eta^e})\|_\infty + \max_e \mathbb{E}[(\delta^e_y)^2]}{\sigma_{\mathrm{min}}(L_x^\star)}\right).
\end{equation*}
\label{corollary:dense}
\end{proposition}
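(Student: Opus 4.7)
The plan is to reduce to Theorem~\ref{thm:solution_infty} via an effective-perturbation rewrite and then bound each factor of the resulting closed-form bias using the structural assumptions. Setting $\tilde{\delta}^e := \Gamma^\star \eta^e + \delta^e$ and $\tilde{\varepsilon} := \varepsilon + \Gamma^\star H$, the latent-variable SCM takes the form of \eqref{eq:scm_train} with $\tilde{\delta}^e$ in place of $\delta^e$, and $\tilde{\delta}^0 \equiv 0$ is preserved. Let $\tilde{L}^\star := \sum_{e\in\mathcal{E}} \omega^e \mathbb{E}[\tilde{\delta}^e {\tilde{\delta}^e}^\top]$, whose blocks $\tilde{L}^\star_x, \tilde{L}^\star_{xy}, \tilde{L}^\star_y$ each split (by independence of $\eta^e$ and $\delta^e$) into a latent piece carrying factors of $\Gamma^\star$ plus an observational piece. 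Reading Assumption~\ref{assumption: sufficient_strong} for $\tilde{L}^\star$, the matrix $[C^\star \tilde{L}^\star {C^\star}^\top]_{1:p,1:p}$ is invertible, and Theorem~\ref{thm:solution_infty} gives
$$b_\infty^{\mathrm{opt}} - b^\star = \bigl([C^\star \tilde{L}^\star {C^\star}^\top]_{1:p,1:p}\bigr)^{-1} \bigl(C^\star_x \tilde{L}^\star_{xy} + \tilde{L}^\star_y C^\star_{xy}\bigr).$$

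The denominator is handled by expanding the $(1{:}p,1{:}p)$ block of $C^\star \tilde{L}^\star {C^\star}^\top$ into its four corner contributions and invoking Weyl's inequality: Assumption~\ref{assumption:bound_causal_coeffs} converts, via a Neumann series, into bounds on $\sigma_{\min,\max}(I - B^\star)$ and hence on the blocks of $C^\star$, while Assumption~\ref{assumption: sufficient_strong} is precisely tuned so that the dominant term $C^\star_x \tilde{L}^\star_x {C^\star_x}^\top$ is not canceled by the cross and corner contributions (whose norms involve $\|\tilde{L}^\star_{xy}\|_2$ and $\tilde{L}^\star_y$). This yields $\sigma_{\min}([C^\star \tilde{L}^\star {C^\star}^\top]_{1:p,1:p}) \geq c\,\sigma_{\min}(L_x^\star)$ for a universal constant $c$. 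For the numerator, decompose $\tilde{L}^\star_{xy} = \Gamma^\star_x M_\eta {\Gamma^\star_y}^\top + \sum_e \omega^e \mathbb{E}[\delta^e_x \delta^e_y]$ and $\tilde{L}^\star_y = \Gamma^\star_y M_\eta {\Gamma^\star_y}^\top + \sum_e \omega^e \mathbb{E}[(\delta^e_y)^2]$, where $M_\eta := \sum_e \omega^e \mathbb{E}[\eta^e {\eta^e}^\top]$; the observational piece contributes the additive $\max_e \mathbb{E}[(\delta^e_y)^2]$ term directly (with the $\delta^e_x \delta^e_y$ cross part absorbed by Cauchy--Schwarz), while the latent piece is controlled via the incoherence bound in Assumption~\ref{assumption:dense_latent} together with $\|\Gamma^\star\|_2 = \mathcal{O}(\sqrt{h})$ from Assumption~\ref{assumption: bounded_latent}. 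Concretely, every row of $\Gamma^\star$ has $\ell_2$ norm $\mathcal{O}(h/\sqrt{p})$, and propagating this diffuseness through $C^\star_x \Gamma^\star_x M_\eta {\Gamma^\star_y}^\top$ and $\Gamma^\star_y M_\eta {\Gamma^\star_y}^\top C^\star_{xy}$ (with $\|M_\eta\|$ bounded via $\max_e \|\mathrm{Cov}(\eta^e)\|_\infty$) yields the $h^{5/2} \max_e \|\mathrm{Cov}(\eta^e)\|_\infty$ scaling.

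The principal obstacle is reconciling the $\ell_\infty$-norm target with the inherently $\ell_2$ tools used to invert the denominator: a naive chain $\|M^{-1}v\|_\infty \leq \sqrt{p}\,\|M^{-1}\|_2 \|v\|_2$ would introduce a spurious factor of $p$. The remedy is that the latent part of the numerator is supported in the $(h+1)$-dimensional subspace spanned by the columns of $C^\star_x \Gamma^\star_x$ together with the direction $C^\star_{xy}$; on this low-dimensional subspace the $\ell_\infty$-to-$\ell_2$ gap costs only factors of $\sqrt{h}$ via the incoherence of $\Gamma^\star$, which combines with the per-row diffuseness $\mathcal{O}(h/\sqrt{p})$ to cancel the $p$ dependence. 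Assumption~\ref{assumption:dense_latent_number} ($h = o(p)$) then ensures the stated bound remains asymptotically informative as $p \to \infty$.
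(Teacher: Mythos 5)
Your proposal is correct and follows essentially the same route as the paper's proof: reduce to the closed-form bias from Theorem~\ref{thm:solution_infty} with the latent effects folded into the effective intervention, lower-bound $\sigma_{\min}([C^\star L^\star {C^\star}^\top]_{1:p,1:p})$ by a constant times $\sigma_{\min}(L^\star_x)$ via Assumptions~\ref{assumption: sufficient_strong} and \ref{assumption:bound_causal_coeffs}, and bound $\|L^\star_{xy}\|_\infty$ and $|L^\star_y|$ entrywise using the incoherence and boundedness of $\Gamma^\star$. The only divergence is cosmetic: where you invoke a low-dimensional-support argument to avoid a norm-conversion factor of $p$, the paper simply uses $\|MC^\star_x L^\star_{xy}\|_\infty \le p\|MC^\star_x\|_2\|L^\star_{xy}\|_\infty$ and lets the $1/p$ from the diffuseness bound on $\|L^\star_{xy}\|_\infty$ cancel it, so the obstacle you describe does not actually arise.
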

We prove Proposition~\ref{corollary:dense} in Supplementary~\ref{proof_corollary_dense}. 
This result states that while identifiability may not be possible in the setting where there are interventions on the latent variables and on the response variable $Y$, the DRIG estimator with $\gamma \to \infty$ can approximate the causal parameter $b^\star$ up to some resolution. Specifically, note that $\mathrm{Cov}({\eta^e})$ is the covariance matrix of the latent perturbations $\eta^e$, $\mathbb{E}[(\delta^e_y)^2]$ encodes the variance of perturbations on the response variable $Y$, and $L^\star_{x}$ encodes perturbation strengths on the covariates $X$. Thus, Proposition~\ref{corollary:dense} claims that the stronger the perturbations on the covariates $X$ (i.e., larger $\sigma_{\mathrm{min}}(L_x^\star)$) relative to perturbations on the latent variables and on the response variable, the better the DRIG estimate approximates the causal parameter $b^\star$.

\subsection{Insufficient interventions on $X$}\label{sec:insuff_interv_x}
So far, we have assumed that there are interventions on all the covariates $X$, so that the set of models $\mathcal{I}$ in Theorem~\ref{thm:solution_infty} that satisfy the invariant gradient condition is a singleton. We next relax this condition, resulting in multiple models that exhibits invariant gradients. 

For simplicity, throughout the following discussion, we assume that there are no interventions on the response variable $Y$ or on the latent variables $H$ so that $L^\star_{xy} = 0$ and $L^\star_y = 0$, and only focus on insufficient interventions on $X$.  We denote $\varepsilon_x$ and $\varepsilon_y$ as the components of $\varepsilon$ corresponding to the covariates and the response variable, respectively. 



\begin{proposition}\label{prop:bias_bound3}
	Suppose $L^\star_{xy} = 0$ and $L^\star_y = 0$. Then, $\cI=\{b^\star+b':\Delta_x b'=0\}$ where $\Delta_x := \sum_{e \in \mathcal{E}}\omega^e(\mathbb{E}[X^e{X^e}^\top]-\mathbb{E}[X^0{X^0}^\top])$. Furthermore, we have 
	\begin{equation}\label{eq:bias3}
        b_\infty^{\mathrm{opt}}=b^\star+D\bbE[X^0\varepsilon_y],
	\end{equation}
	where $D:=\lim_{\gamma\to\infty}\big(\bbE [X^0{X^0}^\top]+\gamma\Delta_x\big)^{-1}$. Finally, 
    \begin{equation}\label{eq:bias_bound3}
		\left\|b_\infty^{\mathrm{opt}}-b^\star\right\|_\infty \leq \|D\|_\infty({\|C^\star_x\bbE[\varepsilon_x\varepsilon_y]\|_\infty+\|C^\star_{xy}\bbE[\varepsilon_y^2]\|_\infty}).
	\end{equation}
\end{proposition}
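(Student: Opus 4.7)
The plan is to apply Theorem~\ref{thm:solution_infty} and then compute the minimizer in closed form. The subspace-inclusion hypothesis of that theorem reads $C^\star_x L^\star_{xy}+L^\star_y C^\star_{xy}\subseteq\mathrm{column\text{-}space}([C^\star L^\star {C^\star}^\top]_{1:p,1:p})$, and under $L^\star_{xy}=0$ and $L^\star_y=0$ its left-hand side is $\{0\}$, so the hypothesis holds trivially; hence $b^\mathrm{opt}_\infty=\argmin_{b\in\mathcal{I}}\mathbb{E}[(Y^0-b^\top X^0)^2]$. The proof then splits into (i) characterizing $\mathcal{I}$ and (ii) obtaining the closed-form limit that gives \eqref{eq:bias3} and \eqref{eq:bias_bound3}.

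For step (i), I compute $\nabla_b\mathbb{E}[(Y^e-b^\top X^e)^2]=-2\mathbb{E}[X^eY^e]+2\mathbb{E}[X^e{X^e}^\top]b$, so Definition~\ref{def:gi} reduces to the single linear equation $\Delta_x b=\Delta_{xy}$ with $\Delta_{xy}:=\sum_{e\in\mathcal{E}}\omega^e(\mathbb{E}[X^eY^e]-\mathbb{E}[X^0Y^0])$. The structural equation $Y^e=b^{\star\top}X^e+\varepsilon_y+\delta^e_y$ gives $\mathbb{E}[X^eY^e]=\mathbb{E}[X^e{X^e}^\top]b^\star+\mathbb{E}[X^e\varepsilon_y]+\mathbb{E}[X^e\delta^e_y]$. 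Using the reduced form $X^e=C^\star_x(\varepsilon_x+\delta^e_x)+C^\star_{xy}(\varepsilon_y+\delta^e_y)$ together with $\varepsilon\perp\delta^e$ shows that $\mathbb{E}[X^e\varepsilon_y]$ does not depend on $e$, and that $\mathbb{E}[X^e\delta^e_y]=C^\star_x\mathbb{E}[\delta^e_x\delta^e_y]+C^\star_{xy}\mathbb{E}[(\delta^e_y)^2]$; the assumptions $L^\star_{xy}=0$ and $L^\star_y=0$ then make the weighted difference $\sum_e\omega^e\mathbb{E}[X^e\delta^e_y]-\mathbb{E}[X^0\delta^0_y]$ vanish, yielding $\Delta_{xy}=\Delta_x b^\star$. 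Consequently $\mathcal{I}=\{b:\Delta_x b=\Delta_x b^\star\}=b^\star+\ker(\Delta_x)$, which is the first claim.

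For step (ii), $\mathcal{L}_\gamma$ is a quadratic in $b$ whose first-order condition is $(\mathbb{E}[X^0{X^0}^\top]+\gamma\Delta_x)b^\mathrm{opt}_\gamma=\mathbb{E}[X^0Y^0]+\gamma\Delta_{xy}$. Writing $\mathbb{E}[X^0Y^0]=\mathbb{E}[X^0{X^0}^\top]b^\star+\mathbb{E}[X^0\varepsilon_y]$ (using $\delta^0_y=0$) and using $\Delta_{xy}=\Delta_x b^\star$ from step (i) gives $b^\mathrm{opt}_\gamma-b^\star=(\mathbb{E}[X^0{X^0}^\top]+\gamma\Delta_x)^{-1}\mathbb{E}[X^0\varepsilon_y]$, and sending $\gamma\to\infty$ yields \eqref{eq:bias3}. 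The main subtlety I anticipate is that $\Delta_x$ is singular in the insufficient-intervention regime, so one must justify that $D=\lim_{\gamma\to\infty}(\mathbb{E}[X^0{X^0}^\top]+\gamma\Delta_x)^{-1}$ exists as a finite matrix; diagonalizing $\Delta_x$ and applying the block-inverse formula in coordinates aligned with $\mathrm{range}(\Delta_x)\oplus\ker(\Delta_x)$ shows the Schur-complement block converges while all other blocks decay at rate $1/\gamma$, giving the well-defined limit $D=U_2(U_2^\top\mathbb{E}[X^0{X^0}^\top]U_2)^{-1}U_2^\top$ where the columns of $U_2$ form an orthonormal basis of $\ker(\Delta_x)$.

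Finally, for \eqref{eq:bias_bound3}, the reduced form $X^0=C^\star_x\varepsilon_x+C^\star_{xy}\varepsilon_y$ in the reference environment together with the independence of the noise components gives $\mathbb{E}[X^0\varepsilon_y]=C^\star_x\mathbb{E}[\varepsilon_x\varepsilon_y]+C^\star_{xy}\mathbb{E}[\varepsilon_y^2]$. Taking $\|\cdot\|_\infty$ in \eqref{eq:bias3}, applying submultiplicativity $\|Dv\|_\infty\le\|D\|_\infty\|v\|_\infty$ of the induced infinity norm and the triangle inequality then yields \eqref{eq:bias_bound3}.
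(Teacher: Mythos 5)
Your proposal is correct and follows essentially the same route as the paper's proof: reduce the gradient-invariance condition to $\Delta_x(b-b^\star)=0$ (you do this by directly verifying $\Delta_{xy}=\Delta_x b^\star$, the paper by writing the penalty as $(b^\star-b)^\top\Delta_x(b^\star-b)$, which are the same first-order condition), then solve the quadratic objective's normal equations for finite $\gamma$ and pass to the limit, and finally expand $\bbE[X^0\varepsilon_y]$ via the reduced form. Your explicit justification that the limit $D$ exists when $\Delta_x$ is singular (via the $\mathrm{range}(\Delta_x)\oplus\ker(\Delta_x)$ block decomposition) is a welcome addition that the paper's proof leaves implicit.
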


We prove Proposition~\ref{prop:bias_bound3} in Supplementary~\ref{proof:bias_bound3}. It first states that when there are not sufficient interventions on $X$ so that $\Delta_x$ is not positive definite, the set $\cI$ is not a singleton but an equivalence class. Then by \eqref{eq:b_insufficient}, DRIG with $\gamma\to\infty$ is searching for the best predictive solution among this equivalence class. Next, formula \eqref{eq:bias3} and bound \eqref{eq:bias_bound3} quantify the closeness of the causal parameter to the DRIG estimator $b^\mathrm{opt}_\gamma$ when $\gamma \to \infty$. The bias in estimating the causal parameter stems from two sources. First, under insufficient interventions on the covariates, the matrix $\Delta_x$ is not positive definite so $D\neq0$. Second, when there are latent confounders or when some covariates are descendants of $Y$, we have $\bbE[\varepsilon_x\varepsilon_y]\neq0$ or $C^\star_{xy}\neq0$, respectively. Nevertheless, we will show next that under some structural assumptions, DRIG can achieve partial identifiability, and produces a smaller bias than both pooled and observational OLS estimators. 

For simplicity, we consider a specialized setting where the covariates are jointly independent and so are the interventions on them, that is, $\bbE[X^0X^{0\top}]$ and $\Delta_x$ are both diagonal matrices. Then, it is straightforward to show that the bias $\|b^\mathrm{opt}_\gamma-b^\star\|_\infty$ is monotonically decreasing with respect to $\gamma\ge0$. Further assume there exists $i\in\{1,\dots,p\}$ such that the $i$th diagonal entry of $\Delta_x$ and the $i$th component of $\bbE[X^0\varepsilon_y]$ are nonzero, i.e., when the intervention happens to a covariate that is confounded with $Y$. Then, the bias $\|b^\mathrm{opt}_\gamma-b^\star\|_\infty$ is strictly decreasing with respect to $\gamma\ge0$, which implies that DRIG with $\gamma>1$ always has a smaller bias than observational and pooled OLS. Moreover, for any coordinate $i$ such that the $i$th diagonal entry of $\Delta_x$ is nonzero or the $i$th component of $\bbE[X^0\varepsilon_y]$ is zero, we have $\lim_{\gamma \to \infty}b_{\gamma,i}^{\mathrm{opt}}=b^\star_i$. In other words, DRIG with $\gamma\to\infty$ identifies the causal parameter associated with the $i$-th covariate (i) if there is no latent confounder between $X_i$ and $Y$, or (ii) if there is an intervention on this covariate. Thus, even under insufficient interventions on $X$, DRIG can still leverage the limited amount of interventions to partially eliminate the bias caused by the latent confounding effects and partially identify the causal effects.

{\section{Nonlinear DRIG}}
\label{sec:nonlinear}
Let $\tilde{X} = (X,Y)\in \mathbb{R}^{p}$. Consider a nonlinear SCM:
$$\tilde{X}_i = f^\star_{i}(\tilde{X}_{\mathrm{pa}(i)};\varepsilon^e)\quad i \in \{1,2,\dots,p+1\}, e\in\mathcal{E},$$
where $\mathrm{pa}(i) \subset \{1,2,\dots,p+1\}\setminus{i}$ denotes the parental set of node $i$ in graph among the observed variables. Then, the nonlinear population DRIG minimizes:
\begin{equation}
\begin{aligned}
    f_\gamma^\mathrm{nl} \in \argmin_{f\in\mathcal{F}}\min_{e\in\mathcal{E}}\mathbb{E}[\ell(X^{e},Y^{e}; f)] + \gamma\sum_{e \in \mathcal{E}}\omega^e \left(\mathbb{E}[\ell(X^{e},Y^e;f)] - \min_{e\in\mathcal{E}}\mathbb{E}[\ell(X^{e},Y^{e};f)]\right).
\end{aligned}
\end{equation}
where $\mathcal{F}$ is a nonlinear function class; for example, splines or neural networks. Optimization can then be implemented via gradient descent algorithms similar to the case for linear models. 

To investigate the robustness property of the nonlinear formulation, we conduct numerical experiments while theoretical justifications would be worthwhile for future research. We note that distribution shifts that involve changes in the support of the covariates (a.k.a., out-of-support covariate shift) is a fundamentally challenging problem for nonparametric regression that requires specific techniques or structural assumptions~\citep{shen2024engression}. To avoid this complication, we consider settings where the covariates follow a linear structural causal model in \eqref{eq:scm_train} up to a nonlinear function. This allows nonlinear causal relationships between the covariates and the response as well as among covariates. Specifically, let $Z$ be some latent features that follows the SCM
\begin{equation*}
    \begin{pmatrix}
    Z^e \\ Y^e
    \end{pmatrix}
    =
    B^\star
    \begin{pmatrix}
    Z^e \\ Y^e
    \end{pmatrix} 
    + \varepsilon^e,
\end{equation*}
for each environment $e\in\cE$. The observed covariates $X$ are nonlinear, invertible transformations of latent features $Z$, i.e., $X=g(Z)$. Hence we can equivalently write
\begin{equation*}
    \begin{pmatrix}
    g^{-1}(X^e) \\ Y^e
    \end{pmatrix}
    =
    B^\star
    \begin{pmatrix}
    g^{-1}(X^e) \\ Y^e
    \end{pmatrix} 
    + \varepsilon^e,
\end{equation*}
where the causal relationships between $X$ and $Y$ as well as among $X$ are in general nonlinear. 

In our numerical setting, we consider two cases of transformation: cube root $g(z)=z^{1/3}$ and softplus $g(z)=\log(1+\exp(Z))$. For simplicity we consider univariate $Z$ and $X$. We implement DRIG and other methods with a polynomial class of degree 3, which leads to correct specification in the cube root case but slight misspecification in the softplus case. The SCMs and intervention schemes for $(Z,Y)$ is the same as in Example~\ref{ex:robust2}, while the predictor we use for the model is $X=g(Z)$. The regularization coefficient for DRIG and anchor regression are fixed to $\gamma=5$. 

Figure~\ref{fig:nonlinear_results} shows the test MSEs for varying perturbation strength (See Example~\ref{ex:robust2}). We see that in both settings, nonlinear DRIG performs the best among all. While in the softplus case, DRIG with linear models (DRIG-lin) performs reasonably well as linear function is a good approximation of the softplus function, DRIG-lin suffer much more in the cube root case due to the lack of nonlinearity.  

Note that in Figure 8, all the methods except DRIG-lin are nonlinear. 

\begin{figure}
\centering
\begin{tabular}{@{}c@{}c@{}}
	\includegraphics[width=0.48\textwidth]{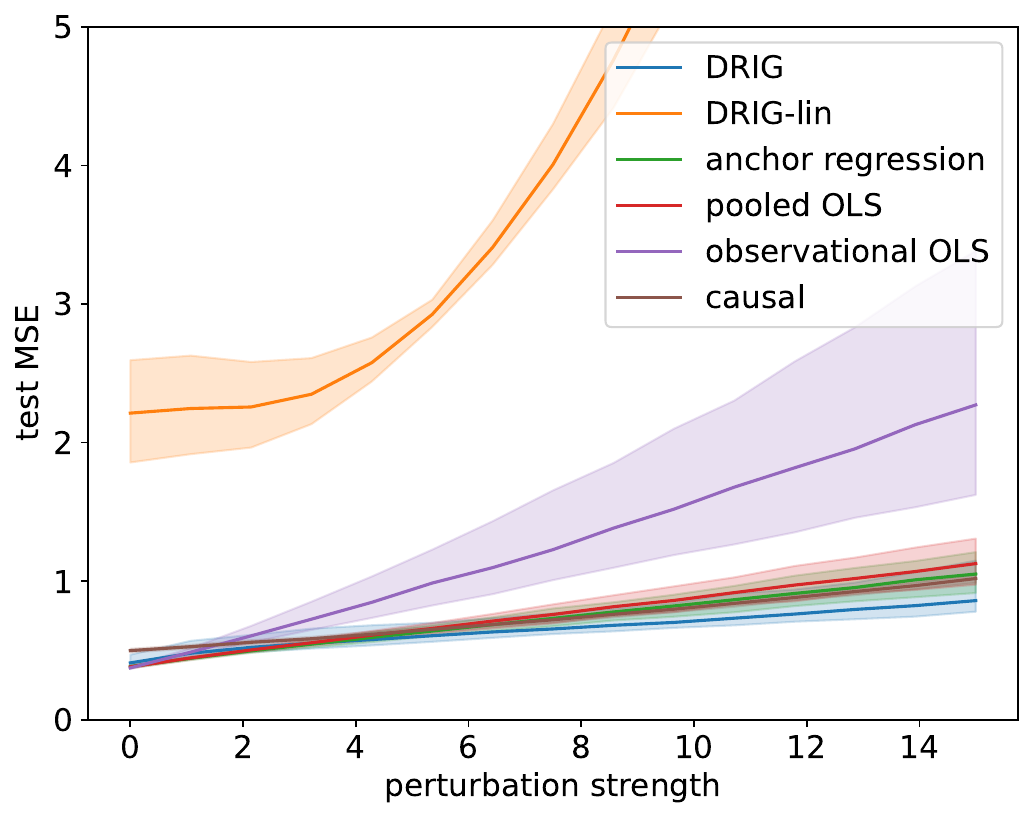} & \includegraphics[width=0.49\textwidth]{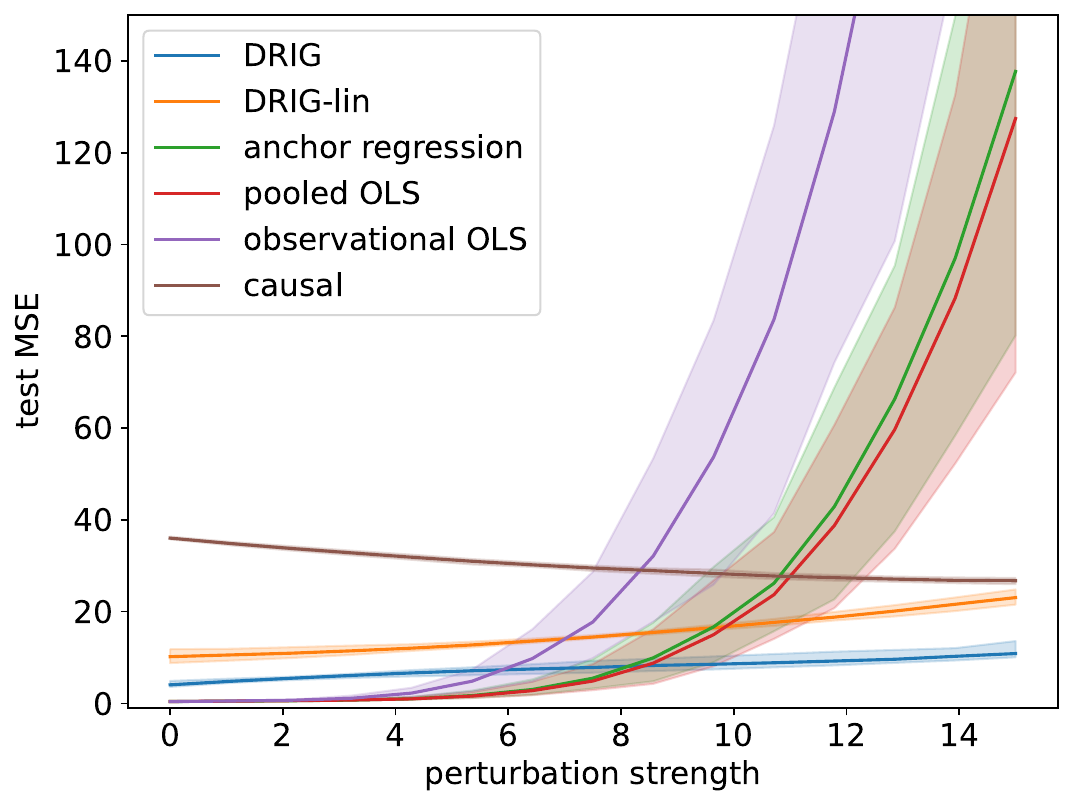}
\end{tabular}
\caption{Results for cube root (left) and softplus (right). {Lines represent the mean and 2.5\% and 97.5\% qunatiles.} }
\label{fig:nonlinear_results}
\end{figure}

\section{Selecting $\Gamma$ in DRIG-A+}
\label{eq:selecting_Gamma}
Note that when we take $\Gamma=\mathrm{diag}(\Gamma_x,\gamma_y)$, the DRIG-A+ estimator has the closed form solution $b^\mathrm{opt}_\Gamma = [\bbE X^0{X^0}^\top +\Gamma_x\Delta_x\Gamma_x]^{-1}[\bbE X^0Y^0+\gamma_y\Gamma_x\Delta_{xy}]$, 
where $\Delta_x:=\sum_{e\in\cE}\omega^e[\bbE X^e{X^e}^\top-\bbE X^0{X^0}^\top]$ and $\Delta_{xy}:=\sum_{e\in\cE}\omega^e[\bbE X^eY^e - \bbE X^0Y^0]$. Thus, compared to the population test OLS, DRIG-A+ replaces ${G^v_x}$ with $\bbE X^0{X^0}^\top +\Gamma_x\Delta_x\Gamma_x$ and $G_{xy}^v$ with $\bbE X^0Y^0+\gamma_y\Gamma_x\Delta_{xy}$. As the gram matrix ${G^v_x}$ can be accurately estimated with a large unlabeled samples, we set $\Gamma_x$ so that $\bbE X^0{X^0}^\top +\Gamma_x\Delta_x\Gamma_x = {G^v_x}$, which yields
${\Gamma}^{\star}_x := \Delta_x^{-1/2}\left[\Delta_x^{1/2}\left(G^v_x - \bbE X^0{X^0}^\top\right)\Delta_x^{1/2}\right]^{1/2}\Delta_x^{-1/2}.$
Given $\Gamma_x = {\Gamma}^\star_x$, we then select $\gamma_y$ to minimize the population test MSE of $b^\mathrm{opt}_\Gamma$, which gives
${\gamma}^\star_y := \frac{({(G^v_x)}^{-1/2}\Gamma^\star_x\Delta_{xy})^\top}{\|{(G^v_x)}^{-1/2}\Gamma^\star_x\Delta_{xy}\|^2}{(G^v_x)}^{-1/2}(G_{xy}^v-\bbE X^0Y^0).$
Then based on the finite test samples, we define $\hat\Gamma_x$ and $\hat\gamma_y$ as the plug-in estimators of ${\Gamma}^{\star}_x$ and ${\gamma}^\star_y$, where we replace $G^v_x$ and $G_{xy}^v$ by $\hat{G}^v_x$ and $\hat{G}_{y}^v$, respectively. We derive the above formulas in Supplementary~\ref{app:Gamma}.

\section{Proofs}
\subsection{Connections to causal Dantzig}\label{app:causal_dantzig}
When $|\cE|=2$, as $\gamma\to\infty$ and $0 \in \mathcal{E}$ is an observational environment with $S^0 \preceq S^1$, DRIG formulation \eqref{eq:obj} becomes 
\begin{equation*}
	\min_b \mathbb{E}[\ell(X^{1},Y^1;b)] - \mathbb{E}[\ell(X^{0},Y^{0};b)]
\end{equation*}
Setting the gradient of the above objective function to 0 yields
\begin{equation*}
	\left[\bbE X^1{X^1}^\top - \bbE X^0{X^0}^\top\right]b=\bbE X^1Y^1-\bbE X^0Y^0
\end{equation*}
which is the population version of the causal Dantzig estimator.

\subsection{Proof of Proposition~\ref{prop:anchor_regression_ours}}
 \label{proof:prop_anchor_regression_ours}
\begin{proof}[Proof of Proposition~\ref{prop:anchor_regression_ours}]

Denote by $(X^e,Y^e)$ the random variables follow the conditional distribution of $(X,Y)$ given $A=a^e$. 
Then we have
\begin{equation*}
\begin{split}
	\cL_{\mathrm{anchor},\gamma}(b) &=\bbE[((I-P_A)(Y-b^\top X))^2] + \gamma\bbE[(P_A(Y-b^\top X))^2]\\ 
	&= \sum_{e\in\cE}\omega^e\bbE[(Y^e-b^\top X^e-\bbE(Y^e-b^\top X^e))^2] + \gamma\sum_{e\in\cE}\omega^e[\bbE(Y^e-b^\top X^e)]^2 \\
	&=\sum_{e\in\cE}\omega^e\bbE[(Y^e-b^\top X^e)^2] + (\gamma-1)\sum_{e\in\cE}\omega^e[\bbE(Y^e-b^\top X^e)]^2.
\end{split}
\end{equation*}
Since $S^0 \preceq S^e$ for all $e\in\mathcal{E}$, the DRIG loss function as
\begin{equation*}
	\cL_\gamma(b) = \sum_{e\in\cE}\omega^e\bbE[(Y^e-b^\top X)^2] + (\gamma-1)\sum_{e\in\cE}\omega^e\left(\bbE[(Y^e-b^\top X^e)^2] - \bbE[(Y^0-b^\top X^0)^2]\right).
\end{equation*}
Note that the difference between the two loss functions lies in the second terms.

For any regression coefficient $b$, define the vector $w$ as
\begin{equation}\label{eq:def_w}
    w := [(I-B^\star)_{p+1,:}^{-1}-b^\top (I-B^\star)_{1:p,:}^{-1}]^\top.
\end{equation}
We note from the SCM \eqref{eq:scm_train} that
\begin{equation*}
	Y^e-b^\top X^e = w^\top\varepsilon^e,
\end{equation*}
and
\begin{equation*}
	\bbE(Y^e-b^\top X^e) = w^\top \mu^e,
\end{equation*}
with $\mu^e=Ma^e$ in this case with deterministic perturbations (here, we have used the fact that $\mathbb{E}[\varepsilon^0]=0$ and $\varepsilon^e = \varepsilon^0 + \mu^e$.). 
Then we have
\begin{equation*}
	[\bbE(Y^e-b^\top X^e)]^2 = w^\top\mu^e{\mu^e}^\top w
\end{equation*}
and
\begin{equation*}
	\bbE[(Y^e-b^\top X^e)^2] - \bbE[(Y^0-b^\top X^0)^2] = w^\top\mu^e{\mu^e}^\top w.
\end{equation*}
Thus, the two loss functions are equal.
\end{proof}

\subsection{Proof of convexity of population and finite-sample DRIG}
\label{proof_convex}
We first prove Proposition~\ref{prop:convex} in the setting where Assumption~\ref{ass:strength_2} is satisfied (a strictly weaker assumption than Assumption~\ref{ass:strength}) and discuss assumptions when finite-sample DRIG is convex.
\begin{proof}
We note from the SCM \eqref{eq:scm_train} that
\begin{equation*}
	Y^e-b^\top X^e = w^\top\varepsilon^e.
\end{equation*}
where $w$ is a linear function of $b$ and is defined in \eqref{eq:def_w}. Thus, 
\begin{equation}
	\mathbb{E}[(Y^e-b^\top X^e)^2] = w^\top\mathbb{E}[\varepsilon^e{\varepsilon^e}^\top]{w} = w^\top{S}^ew.
    \label{eq:simp_loss}
\end{equation}
Thus, the DRIG objective can be equivalently written as:
\begin{eqnarray*}
\begin{aligned}
\mathcal{L}_\gamma(b) &= \gamma{w}^\top\left[\sum_{e\in\mathcal{E}}\omega^e{S}^e\right]w + (1-\gamma)\min_{e'\in\mathcal{E}}w^\top{S}^{e'}{w},\\
&= \gamma{w}^\top\left[\sum_{e\in\mathcal{E}}\omega^eS^e\right]w + (1-\gamma)\min_{e'\in\mathcal{E}_\mathrm{small}}w^\top{S}^{e'}{w},\\
&= \max_{e'\in\mathcal{E}_\mathrm{small}}\gamma{w}^\top\left[\sum_{e\in\mathcal{E}}\omega^eS^e\right]w + (1-\gamma)w^\top{S}^{e'}{w},\\
&= \max_{e'\in\mathcal{E}_\mathrm{small}}{w}^\top\left[\left[\gamma\sum_{e\in\mathcal{E}}\omega^eS^e\right]+(1-\gamma)S^{e'}\right]w.
\end{aligned}
\end{eqnarray*}
Here, the second inequality follows from the fact that $S^{e'} \preceq S^e$ for every $e'\in\mathcal{E}_\mathrm{small}$ and $e\in\mathcal{E}\setminus\mathcal{E}_\mathrm{small}$; the third equality follows from $\gamma \geq 1$. By the assumptions of the proposition, $\left[\gamma\sum_{e\in\mathcal{E}}\omega^eS^e\right]+(1-\gamma)S^{e'} \succeq 0$ for every $e'\in\mathcal{E}_\mathrm{small}$. Thus, since $w$ is a linear function of $b$, then, for every $e \in \mathcal{E}_\mathrm{small}$, ${w}^\top\left[\left[\gamma\sum_{e\in\mathcal{E}}\omega^eS^e\right]+(1-\gamma)S^{e'}\right]w$ is a convex function of $b$. Since point-wise maximum of convex functions are convex, $\mathcal{L}_\gamma(b)$ is convex.
\end{proof}
Let $\hat{G}^e = \frac{1}{n_e}\sum_{i = 1}^{n_e}\begin{pmatrix}X^e_i\\Y^e_i\end{pmatrix}\begin{pmatrix}X^e_i\\Y^e_i\end{pmatrix}^\top$ be the gram matrix. We then have the following statement regarding the convexity of the finite-sample DRIG loss in \eqref{eqn:finite_drig_objective}
\begin{proposition} Suppose there exists a set of environments $\mathcal{E}_\mathrm{small}\subset \mathcal{E}$ such that for every $e'\in\mathcal{E}_\mathrm{small}$ and $e\in\mathcal{E}\setminus\mathcal{E}_\mathrm{small}$, we have $\hat{G}^{e'} \preceq \hat{G}^e$. Furthermore, suppose that for every $e'\in\mathcal{E}_\mathrm{small}$, $\hat{G}^{e'} \preceq \sum_{e\in\mathcal{E}}\omega^e\hat{G}^e$. Then, for $\gamma \geq 1$, the finite-smaple DRIG loss in \eqref{eqn:finite_drig_objective} is convex. 
\end{proposition}
\begin{proof}
It is straightforward to see that:
$$\hat{\mathbb{E}}[\ell(X^e,Y^e;b)] = \tilde{w}^\top\hat{G}^e\tilde{w},$$
where $\tilde{w}^\top =(1,-b)$. Thus, the finite-sample DRIG objective can be equivalently written as: 
\begin{eqnarray*}
\begin{aligned}
\hat{\mathcal{L}}_\gamma(b) &= \gamma\tilde{w}^\top\left[\sum_{e\in\mathcal{E}}\omega^e\hat{G}^e\right]\tilde{w} + (1-\gamma)\min_{e'\in\mathcal{E}}\tilde{w}^\top\hat{G}^{e'}\tilde{w},\\
&= \gamma\tilde{w}^\top\left[\sum_{e\in\mathcal{E}}\omega^e\hat{G}^e\right]\tilde{w} + (1-\gamma)\min_{e'\in\mathcal{E}_\mathrm{small}}\tilde{w}^\top\hat{G}^{e'}\tilde{w},\\
&= \max_{e'\in\mathcal{E}_\mathrm{small}}\gamma\tilde{w}^\top\left[\sum_{e\in\mathcal{E}}\omega^e\hat{G}^e\right]\tilde{w} + (1-\gamma)\tilde{w}^\top\hat{G}^{e'}\tilde{w},\\
&= \max_{e'\in\mathcal{E}_\mathrm{small}}\tilde{w}^\top\left[\left[\gamma\sum_{e\in\mathcal{E}}\omega^e\hat{G}^e\right]+(1-\gamma)\hat{G}^{e'}\right]\tilde{w}.
\end{aligned}
\end{eqnarray*}
Here, the second inequality follows from the fact that $\hat{G}^{e'} \preceq \hat{G}^e$ for every $e'\in\mathcal{E}_\mathrm{small}$ and $e\in\mathcal{E}\setminus\mathcal{E}_\mathrm{small}$; the third equality follows from $\gamma \geq 1$. By the assumptions of the proposition, $\left[\gamma\sum_{e\in\mathcal{E}}\omega^e\hat{G}^e\right]+(1-\gamma)\hat{G}^{e'} \succeq 0$ for every $e'\in\mathcal{E}_\mathrm{small}$. Thus, since $w$ is a linear function of $b$, then, for every $e \in \mathcal{E}_\mathrm{small}$, $\tilde{w}^\top\left[\left[\gamma\sum_{e\in\mathcal{E}}\omega^e\hat{G}^e\right]+(1-\gamma)\hat{G}^{e'}\right]w$ is a convex function of $b$. Since point-wise maximum of convex functions are convex, $\hat{\mathcal{L}}_\gamma(b)$ is convex.
\end{proof}

\subsection{Proof of Theorem~\ref{thm:main}}
\label{proof_thm_main}

\begin{proof}[Proof of Theorem~\ref{thm:main}] We prove Theorem~\ref{thm:main1}, and note that  Assumption~\ref{ass:strength} is strictly stronger than \ref{ass:strength_2}, and that $\bar{e} = 0$ when Assumption~\ref{ass:strength} is satisfied to conclude that Theorem~\ref{thm:main1} implies Theorem~\ref{thm:main} under Assumption~\ref{ass:strength}. For any regression coefficient $b$, define the vector $w$ as in \eqref{eq:def_w}. Note that for the SCM \eqref{eq:scm_test} $Y^v-b^\top X^v=w^\top{v}$, where $w$. Then, we have for any set $\mathcal{C} = \left\{v \in \mathbb{R}^{p+1}~|~\mathbb{E}[vv^\top] \preceq M\right\}$,  
\begin{align*}
    \mathcal{L}_{\mathcal{C}}(b)=\sup_{v \in \mathcal{C}} \bbE[(Y^v-b^\top X^v)^2] =\sup_{v \in \mathcal{C}} w^\top \bbE[vv^\top ]w = w^\top{M}w
\end{align*}
Consider the DRIG objective $\mathcal{L}_\gamma(b)$. Using the relation \eqref{eq:simp_loss}, we have that:
\begin{eqnarray*}
\begin{aligned}
\mathcal{L}_\gamma(b) &= \min_{e\in\mathcal{E}} w^\top{S}^ew+ \gamma\sum_{e\in\mathcal{E}}\omega^e(w^\top{S}^ew - \min_{e\in\mathcal{E}}w^\top{S}^e{w})
\\&= w^\top\left[ \gamma\sum_{e\in\mathcal{E}}\omega^eS^e\right]w +(1-\gamma)\min_{e\in\mathcal{E}}w^\top{S}^e{w}\\
&=  w^\top\left[ \gamma\sum_{e\in\mathcal{E}}\omega^eS^e\right]w +(1-\gamma)\min_{e\in\mathcal{E}_\mathrm{small}}w^\top{S}^e{w}
\end{aligned}
\end{eqnarray*}
Here, the last inequality follows from the data-generating assumption. Thus, for each $b$, there exists $\bar{e}(b) \in \mathcal{E}_\mathrm{small}$ such that:
\begin{eqnarray*}
\begin{aligned}
\mathcal{L}_\gamma(b) 
&= w^\top\left[ \gamma\sum_{e\in\mathcal{E}}\omega^eS^e\right]w +(1-\gamma)w^\top{S}^{\bar{e}(b)}{w},
\end{aligned}
\end{eqnarray*}
where $w$ depends on $b$. Then, 
\begin{eqnarray*}
\begin{aligned}
\min_b \mathcal{L}_\gamma(b) 
&= \min_{b}w^\top\left[ \gamma\sum_{e\in\mathcal{E}}\omega^eS^e\right]w +(1-\gamma)w^\top{S}^{\bar{e}(b)}{w}\\
&\geq \min_{b}\min_{\tilde{e}\in\mathcal{E}} w^\top\left[ \gamma\sum_{e\in\mathcal{E}_\mathrm{small}}\omega^eS^e\right]w +(1-\gamma)w^\top{S}^{\tilde{e}}{w} \\
&= \min_{\tilde{e}\in\mathcal{E}_\mathrm{small}}\min_{b}\mathcal{L}_\gamma^{\tilde{e}}(b) = \mathcal{L}_\gamma^{\bar{e}}(b_\mathrm{opt}^{\bar{e}}) \\
&=  w(b_\mathrm{opt}^{\bar{e}})^\top\left[ \gamma\sum_{e\in\mathcal{E}_\mathrm{small}}\omega^eS^e\right]w(b_\mathrm{opt}^{\bar{e}}) +(1-\gamma)w(b_\mathrm{opt}^{\bar{e}})^\top{S}^{\tilde{e}}{w}(b_\mathrm{opt}^{\bar{e}})
\end{aligned}
\end{eqnarray*}
Now notice that:
\begin{eqnarray*}
\begin{aligned}
\mathcal{L}_\gamma(b_\mathrm{opt}^{\bar{e}}) &= w(b_\mathrm{opt}^{\bar{e}})^\top\left[ \gamma\sum_{e\in\mathcal{E}}\omega^eS^e\right]w(b_\mathrm{opt}^{\bar{e}}) +(1-\gamma)w(b_\mathrm{opt}^{\bar{e}})^\top{S}^{\bar{e}(b)}{w}(b_\mathrm{opt}^{\bar{e}}) \\
&\leq w(b_\mathrm{opt}^{\bar{e}})^\top\left[ \gamma\sum_{e\in\mathcal{E}}\omega^eS^e\right]w(b_\mathrm{opt}^{\bar{e}}) +(1-\gamma)w(b_\mathrm{opt}^{\bar{e}})^\top{S}^{\bar{e}}{w}(b_\mathrm{opt}^{\bar{e}})\\
&=\mathcal{L}_\gamma^{\bar{e}}(b_\mathrm{opt}^{\bar{e}})
\end{aligned}
\end{eqnarray*}
Thus, we have concluded that: 
\begin{eqnarray*}
\min_b \mathcal{L}_\gamma(b) = \min_{b}\mathcal{L}_\gamma^{\bar{e}}(b) = \min_b w^\top\left[S^{\bar{e}}+ \gamma\sum_{e\in\mathcal{E}}\omega^e(S^e-{S}^{\bar{e}})\right]{w},
\end{eqnarray*}
\end{proof}

\subsection{Proof of Theorem~\ref{thm:solution_infty}}
\label{proof_prop_solution_infty}
Since Assumption~\ref{ass:strength} is strictly stronger than Assumption~\ref{ass:strength_2}, the first part of Theorem~\ref{thm:solution_infty} follows from the first part of Theorem~\ref{thm:solution_infty_2}. So we prove Theorem~\ref{thm:solution_infty_2}. 

{\textbf{Proof of Theorem~\ref{thm:solution_infty_2}}} Recall our block notations $B^\star = \begin{pmatrix}B^\star_{x} & b^\star \\ B^{\star\top}_{yx} & 0 \end{pmatrix}$ where $b^\star=b^\star$, and $C^\star = \begin{pmatrix}C^\star_{x} & C^\star_{xy} \\ C^{\star\top}_{yx} & C^\star_{y} \end{pmatrix}$.
Denote by $\cL_{\mathrm{reg}}(b)$ the regularization term in the objective function \eqref{eq:obj}. 

When $\gamma \to \infty$, it is straightforward to check that if $\cL_{\mathrm{reg}}(b)$ has a minimizer, then, DRIG solves the following optimization problem
\begin{eqnarray}
\begin{aligned}
\min_b &\quad \min_e \mathbb{E}[\ell(X^e,Y^e;b)] \\
\text{subject to }&:\quad b \in \argmin_{\tilde{b}} \cL_{\mathrm{reg}}(\tilde{b})
\end{aligned}
\label{eqn:drig_infinite_gamma}
\end{eqnarray}
Notice that for any $b$, there exists $\tilde{e}(b)\in\mathcal{E}_\text{small}$ such that:
\begin{eqnarray*}
\begin{aligned}
\cL_{\mathrm{reg}}(b) = w^\top\bbE\left[\sum_{e\in\mathcal{E}}\omega^e(S^e-S^{\tilde{e}(b)})\right]w \geq \min_{\tilde{e}\in\mathcal{E}_\mathrm{small}}w^\top\left[\sum_{e\in\mathcal{E}}\omega^e(S^e-S^{\tilde{e}})\right]w\geq 0.
\end{aligned}
\end{eqnarray*}
Here, the last inequality follows from $\left[\sum_{e\in\mathcal{E}}\omega^e(S^e-S^{\tilde{e}})\right] \succeq 0$ from the data generating process. Since $\cL_{\mathrm{reg}}(b)$ is bounded above by zero, it must have a global minimizer. Thus, for $\gamma \to \infty$, DRIG minimizes \eqref{eqn:drig_infinite_gamma}. Since
\begin{eqnarray*}
\begin{aligned}
\cL_{\mathrm{reg}}(b) &= w^\top\left(\sum_{e\in\mathcal{E}}\omega^e{S}^e\right)w - \min_{\tilde{e}\in\mathcal{E}_\text{small}} w^\top{S}^{\tilde{e}}w \\
&=\max_{e\in\mathcal{E}_\text{small}} w^\top\left(\sum_{e\in\mathcal{E}}\omega^e({S}^e-S^{\tilde{e}})\right)w. 
\end{aligned}
\end{eqnarray*}
Since $\sum_{e\in\mathcal{E}}\omega^e({S}^e-S^{\tilde{e}}) \succ 0$, we have that $\cL_{\mathrm{reg}}(b)$ is point-wise maximum of convex functions which is a convex function. For a convex function, any local minimizer is a global minimizer, so we establish the first part of the theorem.

We now prove the second part of the theorem. Our goal is to show that $\argmin_b \cL_\mathrm{reg}(b) = b^\star$. Since  $\sum_{e\in\mathcal{E}}\omega^e({S}^e-S^{\tilde{e}}) \succ 0$, we have that $\cL_\mathrm{reg}(b) \geq 0$. Using the notation of the theorem, we have:
\begin{eqnarray*}
\begin{aligned}
\cL_{\mathrm{reg}}(b) &=\max_{e\in\mathcal{E}_\text{small}} w^\top{L}^{\star,\bar{e}}w,
\end{aligned}
\end{eqnarray*}
where $w$ is of the form \eqref{eq:def_w}. Let $\alpha=1-B^{\star\top}_{xy}(I_p-B^\star_x)^{-1}B^\star_{yx}$. We have 
\begin{equation*}
	\begin{split}
		C^\star_x&=(I_p-B^\star_x-B^\star_{yx}b^{\star\top})^{-1}~~~;~~~~
		C^\star_{xy}=(I_p-B^\star_x)^{-1}B^\star_{yx}/\alpha\\
		C^\star_{yx}&=C^{\star\top}_xb^\star~~~~;~~~~
		C^\star_{y}=1/\alpha
	\end{split}
\end{equation*}
Then we have the following equivalent definition of $w$.
\begin{equation*}
	w = 
	\begin{pmatrix}
		C^{\star\top}_x (b^\star-b)\\
		1/\alpha - C^{\star\top}_{xy} b
	\end{pmatrix} 
	=:
	\begin{pmatrix}
		w_x \\ w_y
	\end{pmatrix},
\end{equation*}
where $w_y\in\bbR$ is the last component of $w$. Thus,
\begin{eqnarray*}
\begin{aligned}
\cL_{\mathrm{reg}}(b) &=\max_{e\in\mathcal{E}_\text{small}} (b-b^\star)^T({C^\star}{L}^{\star,\bar{e}}{C^\star}^T)_{1:p,1:p}(b-b^\star),
\end{aligned}
\end{eqnarray*}
Since $\mathrm{rank}(({C^\star}{L}^{\star,\bar{e}}{C^\star}^T)_{1:p,1:p})=p$ for every $\bar{e}\in\mathcal{E}_\text{small}$, we have that:
\begin{eqnarray*}
\begin{aligned}
\argmin_{b}\cL_{\mathrm{reg}}(b) = b^\star.
\end{aligned}
\end{eqnarray*}

\begin{proof}[Proof of Theorem~\ref{thm:solution_infty}] Since we have an `observational' environment according to Assumption~\ref{ass:strength}, 
\begin{eqnarray*}
\begin{aligned}
\cL_{\mathrm{reg}}(b) &=\max_{e\in\mathcal{E}_\text{small}} w^\top{L}^{\star,\bar{e}}w = w^\top{L}^{\star,0}w 
\end{aligned}
\end{eqnarray*}
Here, the notation of $L^{\star,e}$ is defined in Theorem~\ref{thm:solution_infty_2}. For simplicity, let $L^\star:= L^{\star,0}$. Notice that:
\begin{equation*}
	\cL_{\mathrm{reg}}(b) = w_x^\top L^\star_xw_x + 2w_x^\top L^\star_{xy} w_y + w_y^2L^\star_{y}.
\end{equation*}
Taking the gradient of $\cL_{\mathrm{reg}}(b)$ with respect to $b$ and setting it to zero, we have 
\begin{align*}
	C^\star_xL^\star_{xy}+L^\star_{y}C^\star_{xy} &=  \big(C^\star_xL^\star_x C^{\star\top}_x + C^\star_xL^\star_{xy}C^{\star\top}_{xy} + C^\star_{xy}L^{\star\top}_{xy} C^{\star\top}_x + L^\star_{y}C^\star_{xy}C^{\star\top}_{xy}\big)(b^{\mathrm{opt}}_\infty-b^\star)\\
	&= [C^\star{L^\star}{C^\star}^\top]_{1:p,1:p}(b^{\mathrm{opt}}_\infty-b^\star)
\end{align*}
which leads to the desired result.
\end{proof}

\subsection{Proof of Theorem~\ref{thm:robust_adap}}\label{proof_thm_robust_adap}
\begin{proof}[Proof of Theorem~\ref{thm:robust_adap}]
We have
\begin{align*}
	\sup_{v\in \mathcal{C}_\mathrm{DRIG\text{-}A+}^\Gamma}\bbE_v[Y-b^\top X]^2 &= \sup_{v\in \mathcal{C}_\mathrm{DRIG\text{-}A+}^\Gamma}\begin{pmatrix}-b^\top &1\end{pmatrix}(I-B^\star)^{-1}\bbE[vv^\top](I-B^\star)^{-\top}\begin{pmatrix}-b\\1\end{pmatrix}\\
	&=\begin{pmatrix}-b^\top &1\end{pmatrix}\Gamma(I-B^\star)^{-1}\sum_{e\in\cE}\omega^e\Big(\bbE[\delta^e{\delta^e}^\top]-\bbE[\delta^0{\delta^0}^\top]\Big)(I-B^\star)^{-\top}\Gamma\begin{pmatrix}-b\\1\end{pmatrix}\\
&=\tilde{w}^\top\sum_{e\in\cE}\omega^e\Big(\bbE[\delta^e{\delta^e}^\top]-\bbE[\delta^0{\delta^0}^\top]\Big)\tilde{w},
\end{align*}
where $\tilde{w}=(I-B)^{-\top}\Gamma\begin{pmatrix}-b\\1\end{pmatrix}$.

Note that
\begin{equation*}
	\gamma_y Y^e - b^\top\Gamma_x X^e = \begin{pmatrix}-b^\top &1\end{pmatrix}\Gamma(I-B)^{-1}(\varepsilon+\delta^e) = \tilde{w}^\top (\varepsilon+\delta^e).
\end{equation*}
Then for all $e\in\cE$,
\begin{equation*}
	\bbE(\gamma_y Y^e - b^\top\Gamma_x X^e)^2 = \tilde{w}^\top(\bbE\varepsilon\varepsilon^\top + \bbE\delta^e{\delta^e}^\top)\tilde{w}
\end{equation*}
and thus 
\begin{equation*}
	\bbE(\gamma_y Y^e - b^\top\Gamma_x X^e)^2 - \bbE(\gamma_y Y^0 - b^\top\Gamma_x X^0)^2 = \tilde{w}^\top \Big(\bbE[\delta^e{\delta^e}^\top]-\bbE[\delta^0{\delta^0}^\top]\Big)\tilde{w}
\end{equation*}
Also we have $\bbE(Y^0-b^\top X^0)^2=w^\top(\bbE[\varepsilon\varepsilon^\top]+\bbE[\delta^0{\delta^0}^\top]) w$ as above. Thereby, the desired result follows.
\end{proof}

\subsection{Proof of Theorem~\ref{thm:adap_mse}}\label{proof_thm_adap_mse}

\begin{lemma}\label{lem:trace_ineq}
	Given a unit vector $\nu\in\bbR^p$ ($p>1$) and a $p\times p$ positive definite real matrix $K\succ 0$, we have $\tr(K)>\tr(\nu\nu^\top K \nu\nu^\top)$. 
\end{lemma}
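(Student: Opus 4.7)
The plan is to simplify the right-hand side using the cyclic property of the trace and then bound it by the largest eigenvalue of $K$.

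First, I would apply the cyclic property: $\tr(\nu\nu^\top K \nu\nu^\top) = \tr(K \nu\nu^\top \nu\nu^\top)$. Since $\nu^\top\nu = 1$, this collapses to $\tr(K\nu\nu^\top) = \nu^\top K \nu$. Hence the claim reduces to showing $\tr(K) > \nu^\top K \nu$.

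Next, let $\lambda_1,\dots,\lambda_p$ denote the eigenvalues of $K$, all strictly positive because $K\succ 0$. Writing $\nu$ in an orthonormal eigenbasis of $K$ as $\nu = \sum_i c_i u_i$ with $\sum_i c_i^2 = 1$, we obtain $\nu^\top K\nu = \sum_i c_i^2 \lambda_i \le \lambda_{\max}(K)$. On the other hand, $\tr(K) = \sum_i \lambda_i = \lambda_{\max}(K) + \sum_{i\,:\,\lambda_i \neq \lambda_{\max}(K)} \lambda_i$, where the second sum contains at least one positive term because $p>1$ and all eigenvalues are strictly positive. Therefore $\tr(K) > \lambda_{\max}(K) \ge \nu^\top K \nu$, which yields the desired strict inequality.

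There is no real obstacle here; the only point to be careful about is invoking both $p>1$ and $K\succ 0$ at the last step to guarantee that at least one eigenvalue beyond $\lambda_{\max}(K)$ contributes strictly positively to the trace.
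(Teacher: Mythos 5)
Your proof is correct and takes essentially the same route as the paper's: both collapse $\tr(\nu\nu^\top K \nu\nu^\top)$ to $\nu^\top K \nu$ via $\nu^\top\nu=1$ and then compare with $\tr(K)$ in an eigenbasis of $K$, the only difference being that the paper extracts strictness from the equality case of $\sum_i\lambda_i\tilde\nu_i^2\le\sum_i\lambda_i$ while you use $\tr(K)>\lambda_{\max}(K)\ge\nu^\top K\nu$. One minor repair in your last step: if $\lambda_{\max}(K)$ has multiplicity greater than one, the sum over $\{i:\lambda_i\neq\lambda_{\max}(K)\}$ may be empty, so instead split off a single index attaining the maximum and note that the remaining $p-1$ eigenvalues are all strictly positive, which still gives $\tr(K)>\lambda_{\max}(K)$.
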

\begin{proof}
	Let $K=Q\Lambda Q^\top$ be the eigendecomposition of $K$ where $\Lambda$ is a diagonal matrix of eigenvalues $\lambda_i>0,i=1,\dots,p$ and $Q$ is orthogonal. Let $\tilde{\nu}=Q^\top v$, so $\|\tilde{\nu}\|^2=1$. We have
	\begin{align*}
		\tr(\nu\nu^\top K \nu\nu^\top) = \nu^\top K \nu \nu^\top \nu = \nu^\top Q\Lambda Q^\top \nu = \tilde{\nu}^\top \Lambda\tilde{\nu} = \sum_{i=1}^p \lambda_i\tilde{\nu}_i^2
	\end{align*}
	Note from $\|\tilde{\nu}\|^2=1$ that $\sum_{i=1}^p \lambda_i\tilde{\nu}_i^2\leq\sum_{i=1}^p \lambda_i$.
	Now, claim $\sum_{i=1}^p \lambda_i\tilde{\nu}_i^2<\sum_{i=1}^p \lambda_i$. Otherwise, we must have for all $i$ that $\lambda_i\tilde{\nu}_i^2=\lambda_i$ and then $\tilde{\nu}_i^2=1$. This means $\|\tilde{\nu}\|^2=p>1$. Contradiction.
	
	Thus,
	\begin{equation*}
		\tr(vv^\top K vv^\top) < \sum_{i=1}^p \lambda_i = \tr(K),
	\end{equation*}
	which concludes the proof.
\end{proof}

\begin{proof}[Proof of Theorem~\ref{thm:adap_mse}]
Let 
\begin{align*}
	\tilde\gamma_y &= \frac{({(\Sigma^v_x)}^{-1/2}\Gamma^\star_x\Delta_{xy})^\top}{\|{(\Sigma^v_x)}^{-1/2}\Gamma^\star_x\Delta_{xy}\|^2}{(\Sigma^v_x)}^{-1/2}(\hat\Sigma^v_{xy}-\bbE X^0Y^0)
\end{align*}
Let $b^{\mathrm{opt}}_{\tilde\Gamma}$ be the DRIG-A solution with $\tilde\Gamma=
\begin{pmatrix}
	\Gamma^\star_x & 0\\0 &\tilde\gamma_y
\end{pmatrix}$
and $\tilde{b}_{\mathrm{tOLS}}={\Sigma^v_x}^{-1}\hat\Sigma^v_{xy}$ which are obtained based on the finite labeled sample and infinite unlabeled sample $P_{\mathrm{test}}^x$. Note that
\begin{equation*}
	b^{\mathrm{opt}}_{\tilde\Gamma}={(\Sigma_x^v)}^{-1}[\bbE X^0Y^0+\tilde\gamma_y\Gamma_x\Delta_{xy}] =: {(\Sigma_x^v)}^{-1}\hat\Sigma_{xy}^{(2)}.
\end{equation*}
For notational simplicity, below we omit the superscript $v$ in $\Sigma_x^v,\Sigma_{xy}^v,\hat\Sigma_x^v,\hat\Sigma_{xy}^v$ without introducing ambiguity. The remainder of the proof proceeds in two steps.

\medbreak
\noindent{\textit{Step I. }} 
We first compare the test MSEs of $b^{\mathrm{opt}}_{\tilde\Gamma}$ and $\tilde{b}_{\mathrm{tOLS}}$, given by
\begin{align*}
	\cL_{\mathrm{test}}(\tilde{b}_{\mathrm{tOLS}}) &= \hat\Sigma_{xy}^\top\Sigma_x^{-1}\hat\Sigma_{xy} - 2\Sigma_{xy}\Sigma_x^{-1}\hat\Sigma_{xy} + \bbE[Y^v]^2\\
	\cL_{\mathrm{test}}(b^{\mathrm{opt}}_{\tilde\Gamma}) &= \hat\Sigma_{xy}^{(2)\top}\Sigma_x^{-1}\hat\Sigma_{xy}^{(2)} - 2\Sigma_{xy}\Sigma_x^{-1}\hat\Sigma_{xy}^{(2)} + \bbE[Y^v]^2.
\end{align*}
The expected differences from the minimal test MSE are
\begin{align*}
	\bbE[\cL_{\mathrm{test}}(\tilde{b}_{\mathrm{tOLS}})] - \min_b \cL_{\mathrm{test}}(b) &= \tr\Big[\Sigma_x^{-1}\Big(\bbE\hat\Sigma_{xy}\hat\Sigma_{xy}^\top-\Sigma_{xy}\Sigma_{xy}^\top\Big)\Big] - 2\Sigma_{xy}^\top\Sigma_x^{-1}(\bbE\hat\Sigma_{xy}-\Sigma_{xy})\\
	&= \tr\Big[\cov\Big( \Sigma_x^{-1/2}\hat\Sigma_{xy}\Big)\Big]
\end{align*}
 \begin{align*}
	&\quad\ \bbE[\cL_{\mathrm{test}}(b^{\mathrm{opt}}_{\tilde\Gamma})] - \min_b \cL_{\mathrm{test}}(b) \\
	&= \tr\Big[\Sigma_x^{-1}\Big(\bbE\hat\Sigma_{xy}^{(2)}\hat\Sigma_{xy}^{(2)\top}-\bbE\hat\Sigma_{xy}^{(2)}\bbE\hat\Sigma_{xy}^{(2)\top}\Big)\Big] + \tr\Big[\Sigma_x^{-1}\big(\bbE\hat\Sigma_{xy}^{(2)}-\Sigma_{xy}\big)\big(\bbE\hat\Sigma_{xy}^{(2)}-\Sigma_{xy}\big)^\top\Big]\\
	&= \tr\Big[\cov\Big( \Sigma_x^{-1/2}\hat\Sigma_{xy}^{(2)}\Big)\Big] + \tr\Big[\big(\Sigma_x^{-1/2}\bbE\hat\Sigma_{xy}^{(2)}-\Sigma_x^{-1/2}\Sigma_{xy}\big)\big(\Sigma_x^{-1/2}\bbE\hat\Sigma_{xy}^{(2)}-\Sigma_x^{-1/2}\Sigma_{xy}\big)^\top\Big].
\end{align*}
Then
\begin{align*}
	&\bbE[\cL_{\mathrm{test}}(\tilde{b}_{\mathrm{tOLS}})]-\bbE[\cL_{\mathrm{test}}(b^{\mathrm{opt}}_{\tilde\Gamma})] \\
	=& \tr\Big[\cov\Big(\Sigma_x^{-1/2}\hat\Sigma_{xy}\Big)-\cov\Big( \Sigma_x^{-1/2}\hat\Sigma_{xy}^{(2)}\Big)\Big] - \tr\Big[\big(\Sigma_x^{-1/2}\bbE\hat\Sigma_{xy}^{(2)}-\Sigma_x^{-1/2}\Sigma_{xy}\big)\big(\Sigma_x^{-1/2}\bbE\hat\Sigma_{xy}^{(2)}-\Sigma_x^{-1/2}\Sigma_{xy}\big)^\top\Big].
\end{align*}

Let $\tilde\Sigma=\Sigma_x^{-1/2}\cov(X^vY^v)\Sigma_x^{-1/2}$. Then $\cov(\Sigma_x^{-1/2}\hat\Sigma_{xy})=\tilde\Sigma/n_l$.
By definition,
\begin{equation*}
	 \Sigma_x^{-1/2}\hat\Sigma_{xy}^{(2)} = \Sigma_x^{-1/2}\Gamma_x\Delta_{xy}\frac{(\Sigma_x^{-1/2}\Gamma_x\Delta_{xy})^\top}{\|\Sigma_x^{-1/2}\Gamma_x\Delta_{xy}\|_2^2}\Sigma_x^{-1/2}(\hat\Sigma_{xy}-\bbE X^0Y^0) + \Sigma_x^{-1/2}\bbE X^0Y^0.
\end{equation*}
Let
\begin{equation*}
	\xi = \Sigma_x^{-1/2}\Gamma_x\Delta_{xy}/\|\Sigma_x^{-1/2}\Gamma_x\Delta_{xy}\|_2.
\end{equation*}
We know $\|\xi\|_2=1$ and  
\begin{equation*}
	\Sigma_x^{-1/2}\hat\Sigma_{xy}^{(2)} = \xi\xi^\top\Sigma_x^{-1/2}(\hat\Sigma_{xy}-\bbE X^0Y^0) + \Sigma_x^{-1/2}\bbE X^0Y^0.
\end{equation*}
Then
\begin{equation*}
	\cov(\Sigma_x^{-1/2}\hat\Sigma_{xy}^{(2)}) = \cov\big(\xi\xi^\top\Sigma_x^{-1/2}\hat\Sigma_{xy}\big) = \xi\xi^\top\cov\big(\Sigma_x^{-1/2}\hat\Sigma_{xy}\big) \xi\xi^\top
\end{equation*}
and
\begin{align*}
	\Sigma_x^{-1/2}\bbE\hat\Sigma_{xy}^{(2)}-\Sigma_x^{-1/2}\Sigma_{xy} &= \xi\xi^\top\Sigma_x^{-1/2}(\Sigma_{xy}-\bbE X^0Y^0) + \Sigma_x^{-1/2}\bbE X^0Y^0 - \Sigma_x^{-1/2}\Sigma_{xy} \\
	&= (\xi\xi^\top-I)\Sigma_x^{-1/2}(\Sigma_{xy}-\bbE X^0Y^0).
\end{align*}
Let $\eta=\Sigma_x^{-1/2}(\Sigma_{xy}-\bbE[X^0Y^0])$.

Then 
\begin{equation*}
	\bbE[\cL_{\mathrm{test}}(\tilde{b}_{\mathrm{tOLS}})]-\bbE[\cL_{\mathrm{test}}(b^{\mathrm{opt}}_{\tilde\Gamma})] = \frac{1}{m}\tr\Big(\tilde\Sigma-\xi\xi^\top\tilde\Sigma\xi\xi^\top\Big) - \tr\Big[(\xi\xi^\top-I)\eta\eta^\top(\xi\xi^\top-I)^\top\Big].
\end{equation*}
Since $\|\xi\|_2=1$, we have
\begin{align*}
	\tr\big[(\xi\xi^\top-I)\eta\eta^\top(\xi\xi^\top-I)^\top\big] &= \eta^\top(\xi\xi^\top-I)(\xi\xi^\top-I)^\top \eta = \eta^\top(I-\xi\xi^\top) \eta \\
	&= \eta^\top\eta - \xi^\top\eta\eta^\top\xi = \tr(\eta\eta^\top - \xi\xi^\top\eta\eta^\top\xi\xi^\top).
\end{align*}
Then
\begin{equation*}
	\bbE[\cL_{\mathrm{test}}(\tilde{b}_{\mathrm{tOLS}})]-\bbE[\cL_{\mathrm{test}}(b^{\mathrm{opt}}_{\tilde\Gamma})] = \frac{1}{n_l}\tr\big(\tilde\Sigma-\xi\xi^\top\tilde\Sigma\xi\xi^\top\big) - \tr\big(\eta\eta^\top - \xi\xi^\top\eta\eta^\top\xi\xi^\top\big)
\end{equation*}

Thus, it suffices to show 
\begin{eqnarray*}
	&&\tr(\tilde\Sigma-\xi\xi^\top\tilde\Sigma\xi\xi^\top) > \tr(\eta\eta^\top - \xi\xi^\top\eta\eta^\top\xi\xi^\top)\\
	&\Leftrightarrow & \tr(\tilde\Sigma-\eta\eta^\top)>\tr[\xi\xi^\top(\Sigma-\eta\eta^\top)\xi\xi^\top]\\
	&\Leftrightarrow & \tilde\Sigma-\eta\eta^\top\succ0\\
	&\Leftrightarrow & \Sigma_x^{-1/2}\cov(X^vY^v)\Sigma_x^{-1/2}\succ\Sigma_x^{-1/2}(\Sigma_{xy}-\bbE[X^0Y^0])(\Sigma_{xy}-\bbE[X^0Y^0])^\top\Sigma_x^{-1/2}\\
	&\Leftrightarrow & \cov(X^vY^v)\succ(\Sigma_{xy}-\bbE[X^0Y^0])(\Sigma_{xy}-\bbE[X^0Y^0])^\top,
\end{eqnarray*}
where the second equivalence follows by applying Lemma~\ref{lem:trace_ineq} with $A=\tilde\Sigma-\eta\eta^\top$ and $\nu=\xi$, and the third equivalence comes from the notations. Then by taking $N_l$ as the largest integer that is smaller than $\tr(\tilde\Sigma-\xi\xi^\top\tilde\Sigma\xi\xi^\top) / \tr(\eta\eta^\top - \xi\xi^\top\eta\eta^\top\xi\xi^\top)>1$ as already shown, we have 
\begin{equation*}
	\bbE[\cL_{\mathrm{test}}(\tilde{b}_{\mathrm{tOLS}})]>\bbE[\cL_{\mathrm{test}}(b^{\mathrm{opt}}_{\tilde\Gamma})].
\end{equation*}

\medbreak
\noindent{\textit{Step II. }} By the weak law of large numbers, we have $\|\hat\Sigma_x-\Sigma_x\|\to0$ as $n_u\to\infty$. Then by Slutsky's theorem, we have $\cL_{\mathrm{test}}(\hat{b}_{\mathrm{tOLS}})\pto\cL_{\mathrm{test}}(\tilde{b}_{\mathrm{tOLS}})$ and $\cL_{\mathrm{test}}(b^{\mathrm{opt}}_{\tilde\Gamma})\pto \cL_{\mathrm{test}}(b^{\mathrm{opt}}_{\Gamma^\star})$ as $n_u\to\infty$.

{Due to the boundedness assumption,} this implies $\bbE[\cL_{\mathrm{test}}(\hat{b}_{\mathrm{tOLS}})]\to\bbE[\cL_{\mathrm{test}}(\tilde{b}_{\mathrm{tOLS}})]$ and $\bbE[\cL_{\mathrm{test}}(b^{\mathrm{opt}}_{\tilde\Gamma})]\to\bbE[\cL_{\mathrm{test}}(b^{\mathrm{opt}}_{\Gamma^\star})]$ as $n_u\to\infty$. Thus, there exists $N_u$ such that for all $n_u>N_u$, it holds that 
\begin{align*}
	|\bbE[\cL_{\mathrm{test}}(\hat{b}_{\mathrm{tOLS}})]-\bbE[\cL_{\mathrm{test}}(\tilde{b}_{\mathrm{tOLS}})]| &< (\bbE[\cL_{\mathrm{test}}(\tilde{b}_{\mathrm{tOLS}})] - \bbE[\cL_{\mathrm{test}}(b^{\mathrm{opt}}_{\tilde\Gamma})])/2\\
	|\bbE[\cL_{\mathrm{test}}(b^{\mathrm{opt}}_{\tilde\Gamma})] - \bbE[\cL_{\mathrm{test}}(b^{\mathrm{opt}}_{\Gamma^\star})]| &< (\bbE[\cL_{\mathrm{test}}(\tilde{b}_{\mathrm{tOLS}})] - \bbE[\cL_{\mathrm{test}}(b^{\mathrm{opt}}_{\tilde\Gamma})])/2
\end{align*}
Thus, we have $\bbE[\cL_{\mathrm{test}}(\hat{b}_{\mathrm{tOLS}})]>\bbE \cL_{\mathrm{test}}(b^{\mathrm{opt}}_{\hat\Gamma})$, which concludes the proof.
\end{proof}

\subsection{Proof of the robustness results for anchor regression}\label{proof_robust_anchor}

\begin{proof}[Proof for anchor regression]
For any regression coefficient $b$, define the vector $w$ as in \eqref{eq:def_w}. 
We note from the SCM \eqref{eq:scm_train} that
\begin{equation*}
	Y^e-b^\top X^e = w^\top\varepsilon^e,
\end{equation*}
\begin{equation*}
	\bbE(Y^e-b^\top X^e) = w^\top \mu^e.
\end{equation*}

Denote by $(X^e,Y^e)$ the random variables follow the conditional distribution of $(X,Y)$ given $A=a^e$. 
Then we have
\begin{equation*}
	\bbE[(P_A(Y-b^\top X))^2] = \sum_{e\in\cE}\omega^e[\bbE(Y^e-b^\top X^e)]^2 = w^\top\left[\sum_{e\in\cE}\omega^e\mu^e{\mu^e}^\top\right]w
\end{equation*} and
\begin{align*}
	\bbE[((I-P_A)(Y-b^\top X))^2] &= \bbE[(Y-b^\top X-\bbE(Y-b^\top X|A))^2]\\
	&= \sum_{e\in\cE}\omega^e\bbE[(Y^e-b^\top X^e-\bbE(Y^e-b^\top X^e))^2]\\
	&= \sum_{e\in\mathcal{E}}\omega^e\bbE[(Y^e-\bbE Y^e-b^\top (X^e-\bbE X^e))^2]\\
	&= \sum_{e\in\cE}\omega^e\bbE[(w^\top(\varepsilon^e-\mu^e))^2]\\
	&= w^\top\left[\sum_{e\in\cE}\omega^e\bbE(\varepsilon^e-\mu^e)(\varepsilon^e-\mu^e)^\top\right]w,
\end{align*}
where the second term on the RHS is equal to 0 when $S^e=S^0$ for all $e$. Thus,
\begin{align*}
	\cL_{\mathrm{anchor},\gamma}(b) = w^\top\left[\sum_{e\in\cE}\omega^e\bbE(\varepsilon^e-\mu^e)(\delta^e-\mu^e)^\top\right]w + \gamma w^\top\left[\sum_{e\in\cE}\omega^e\mu^e{\mu^e}^\top\right]w.
\end{align*}
Then by analyzing the worst-case risk similarly to the proof of Theorem~\ref{thm:main}, we have 
	\begin{equation*}
		\cL_{\mathrm{anchor},\gamma}(b) = \sup_{v\in \C_\mathrm{anchor}^\gamma}\bbE[(Y^v-b^\top X^v)^2].
	\end{equation*}
\end{proof}

\subsection{Proof of the robustness results for group DRO}\label{proof_robust_gdro}

The objective function of group DRO is
\begin{align*}
	\max_{e\in\cE}\bbE[(Y^e-b^\top X^e)^2] &= \max_{e\in\cE} w^\top\bbE[\varepsilon^e{\varepsilon^e}^\top]w\\
	&= w^\top\bbE[\varepsilon^m{\varepsilon^m}^\top]w\\
	&= \sup_{v\in\mathcal{C}_{\mathrm{gDRO}}}\bbE[(Y^v-b^\top X^v)^2],
\end{align*}
which concludes the proof.

\subsection{Proof of the robustness results for the causal parameter}\label{proof_robust_causal}

\begin{proof}[Proof for the causal parameter]
Let $v_x$ and $w_x$ denote the first $p$ components of $v$ and $w$, respectively. Let $M = \mathbb{E}[vv^\top]-\mathbb{E}\left[\begin{pmatrix}v_x\\0 \end{pmatrix}\begin{pmatrix}v_x\\0 \end{pmatrix}^\top\right]$.
From the proof of Theorem~\ref{thm:main}, we have for a fixed $b$:
\begin{equation*}
    \sup_{v \in \mathcal{C}_\mathrm{causal}^\gamma} \bbE[(Y^v-b^\top X^v)^2] = \sup_{v \in \mathcal{C}_\mathrm{causal}^\gamma} w^\top \bbE[vv^\top ]w =\sup_{v \in \mathcal{C}_\mathrm{causal}^\gamma} w_x^\top\mathbb{E}[v_xv_x^\top]w_x + w^\top{M}w.
\end{equation*}
Notice that for any $v \in \mathcal{C}_\mathrm{DRIG}^\gamma$, the entries of $M$ are bounded. On the other hand, 
\begin{equation*}
\sup_{v\in\mathcal{C}_{\mathrm{causal}}^\gamma}w_x^\top\bbE[v_xv_x^\top]w_x = \sup_{v_x\in\mathbb{R}^{p}}w_x^\top\bbE[v_xv_x^\top]w_x=
	\begin{cases}
		0& \text{if }w_x=0,\\
		\infty&\text{otherwise}.
	\end{cases}
\end{equation*}
Note that $w_x=0$ if and only if $b=b^\star$. Thus
\begin{equation*}
	\argmin_b\sup_{v\in\mathcal{C}_{\mathrm{causal}}}\bbE[(Y^v-b^\top X^v)^2]= b^\star.
\end{equation*}
\end{proof}

\subsection{Deriving $\Gamma^\star$}\label{app:Gamma}
\begin{lemma}\label{lem:matrix_eq}
	Let $A$ and $B$ be $p\times p$ positive definite matrices. The solution to the equation $XBX=A$ is uniquely given by $X=B^{-1/2}(B^{1/2}AB^{1/2})^{1/2}B^{-1/2}$.
\end{lemma}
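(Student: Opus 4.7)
The plan is to reduce this quadratic matrix equation to a standard matrix square-root problem via a symmetric change of variables. Specifically, I would introduce $Y := B^{1/2} X B^{1/2}$. Since $B \succ 0$, the map $X \mapsto Y$ is a bijection on symmetric matrices, with inverse $X = B^{-1/2} Y B^{-1/2}$. A direct computation gives
\begin{equation*}
Y^2 \;=\; B^{1/2} X B^{1/2} \cdot B^{1/2} X B^{1/2} \;=\; B^{1/2} (XBX) B^{1/2},
\end{equation*}
so the equation $XBX = A$ is equivalent to $Y^2 = B^{1/2} A B^{1/2}$.

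Next I would observe that because $A \succ 0$ and $B^{1/2} \succ 0$, the conjugated matrix $B^{1/2} A B^{1/2}$ is positive definite. It therefore admits a unique positive definite square root, namely $Y = (B^{1/2} A B^{1/2})^{1/2}$. Undoing the substitution yields
\begin{equation*}
X \;=\; B^{-1/2} (B^{1/2} A B^{1/2})^{1/2} B^{-1/2},
\end{equation*}
which is the claimed formula and is itself positive definite (as a congruence of a positive definite matrix).

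For uniqueness, the subtle point is that a general square matrix equation $Y^2 = M$ has many solutions (e.g.\ symmetric indefinite ones) unless one restricts the class of admissible $Y$. The implicit convention here, consistent with the way the lemma is applied to determine $\Gamma^\star_x$, is that the sought $X$ is positive definite; under the substitution this forces $Y$ to be positive definite as well, and the spectral theorem then gives uniqueness of $Y$ and hence of $X$. The main obstacle is really just articulating this uniqueness caveat cleanly; the algebraic content of the proof is the one-line identity $Y^2 = B^{1/2}(XBX)B^{1/2}$, after which everything follows from the standard uniqueness of positive definite square roots.
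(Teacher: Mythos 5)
Your proof is correct and follows essentially the same route as the paper's: both reduce $XBX=A$ to extracting the square root of $Z=B^{1/2}AB^{1/2}$, the paper via the similarity $(XB)^2=AB=B^{-1/2}ZB^{1/2}$ and you via the symmetric substitution $Y=B^{1/2}XB^{1/2}$, which is just the conjugate $B^{1/2}(XB)B^{-1/2}$. Your version is slightly more careful in making explicit that uniqueness holds only within the class of positive definite solutions, a caveat the paper's one-line argument glosses over.
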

\begin{proof}
	The equation $XBX=A$ is equivalent to $(XB)^2=AB=B^{-1/2}ZB^{1/2}$ with $Z=B^{1/2}AB^{1/2}$. Then we have $XB=B^{-1/2}Z^{1/2}B^{1/2}$, leading to the desired result.
\end{proof}

\begin{proof}[Proof of deriving $\Gamma^\star$]
By Lemma~\ref{lem:matrix_eq}, we know the solution to $\bbE X^0{X^0}^\top +\Gamma_x\Delta_x\Gamma_x = {\Sigma^v_x}$ is uniquely given by
$${\Gamma}^{\star}_x := \Delta_x^{-1/2}\left[\Delta_x^{1/2}\left(\Sigma^v_x - \bbE X^0{X^0}^\top\right)\Delta_x^{1/2}\right]^{1/2}\Delta_x^{-1/2}.$$
On the other hand, $\gamma^\star_y$ is defined as the solution to minimizing the test MSE of the DRIG-A solution $b^\mathrm{opt}_\Gamma$. That is,
\begin{align*}
	\gamma^\star_y &= \argmin_{\gamma_y} \bbE\big[\big(Y^v - [{\Sigma^v_x}^{-1}(\bbE X^0Y^0+\gamma_y\Gamma^\star_x\Delta_{xy})]^\top X^v\big)^2\big]\\
	&= \argmin_{\gamma_y} \big[\gamma_y^2\Delta_{xy}^\top\Gamma^\star_x{\Sigma^v_x}^{-1}\Gamma^\star_x\Delta_{xy} + 2\gamma_y\Delta_{xy}^\top\Gamma^\star_x{\Sigma^v_x}^{-1}(\bbE X^0Y^0-\bbE X^v Y^v)\big]\\
	&= \frac{({(\Sigma^v_x)}^{-1/2}\Gamma^\star_x\Delta_{xy})^\top}{\|{(\Sigma^v_x)}^{-1/2}\Gamma^\star_x\Delta_{xy}\|^2}{(\Sigma^v_x)}^{-1/2}(\Sigma^v_{xy}-\bbE X^0Y^0),
\end{align*}
which concludes the proof.
\end{proof}

\subsection{Infinite robustness of DRIG-A}\label{app:causal_driga}
\begin{proposition}\label{prop:driga_infty}
	If $\mathrm{rank}([C^\star{L^\star}{C^\star}^\top]_{1:p,1:p}) = p$, the DRIG-A solution as $\|\Gamma\|_2\to\infty$ is uniquely given by
	\begin{equation*}
		\gamma_y\Gamma_x^{-1}\left[b^\star + \left([C^\star{L^\star}{C^\star}^\top]_{1:p,1:p}\right)^{-1}(C^\star_xL^\star_{xy}+L^\star_{y}C_{xy}^\star)\right],
	\end{equation*}
	which is not equal to the causal parameter $b^\star$ when $\Gamma_x/\gamma_y\neq I_p$, even in the identifiable case of Corollary~\ref{cor:identifiable}.
\end{proposition}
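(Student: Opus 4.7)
The quadratic DRIG-A objective in \eqref{eqn:drig_A} has a closed-form first-order optimality condition, which was already derived in Section~\ref{sec:driga}:
\begin{equation*}
    [\bbE X^0 X^{0\top} + \Gamma_x \Delta_x \Gamma_x]\, b^{\mathrm{opt}}_\Gamma = \bbE X^0 Y^0 + \gamma_y \Gamma_x \Delta_{xy},
\end{equation*}
with $\Delta_x = \sum_{e\in\cE}\omega^e[\bbE X^e X^{e\top}-\bbE X^0 X^{0\top}]$ and $\Delta_{xy} = \sum_{e\in\cE}\omega^e[\bbE X^e Y^e-\bbE X^0 Y^0]$. The plan is to take the limit $\|\Gamma\|_2 \to \infty$ along a fixed direction, i.e., replace $\Gamma$ by $t\Gamma$ and let $t\to\infty$ (the target formula $\gamma_y\Gamma_x^{-1}[\cdots]$ is invariant under this rescaling, so this is the natural reading of the limit).

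\textbf{Step 1: limiting first-order equation.} Substituting $(\Gamma_x,\gamma_y)\mapsto(t\Gamma_x,t\gamma_y)$ and dividing the optimality condition by $t^2$ gives
\begin{equation*}
    [t^{-2}\bbE X^0 X^{0\top} + \Gamma_x \Delta_x \Gamma_x]\, b^{\mathrm{opt}}_{t\Gamma} = t^{-2}\bbE X^0 Y^0 + \gamma_y \Gamma_x \Delta_{xy}.
\end{equation*}
Letting $t\to\infty$, the $t^{-2}$ terms vanish, and after multiplying on the left by $\Gamma_x^{-1}$ (legitimate since $\Gamma_x\succ 0$), the limiting equation becomes $\Delta_x \tilde b = \Delta_{xy}$, where $\tilde b := \gamma_y^{-1}\Gamma_x\, b^{\mathrm{opt}}_{t\Gamma}\big|_{t\to\infty}$. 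This change of variables $b\mapsto \tilde b = \gamma_y^{-1}\Gamma_x b$ is a bijection on $\bbR^p$.

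\textbf{Step 2: identifying the limit with the DRIG solution.} From the SCM \eqref{eq:scm_train}, $(X^e,Y^e) = C^\star(\varepsilon+\delta^e)$, so a direct computation yields
\begin{equation*}
    \Delta_x = [C^\star L^\star C^{\star\top}]_{1:p,1:p},\qquad \Delta_{xy} = \Delta_x b^\star + (C^\star_x L^\star_{xy} + L^\star_y C^\star_{xy}),
\end{equation*}
where the second identity follows by expanding the $(1:p, p+1)$ block of $C^\star L^\star C^{\star\top}$ and using $C^\star_{yx} = C^{\star\top}_x b^\star$ (verified in the proof of Theorem~\ref{thm:solution_infty}). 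The rank-$p$ hypothesis makes $\Delta_x$ invertible, so $\tilde b = b^\star + \Delta_x^{-1}(C^\star_x L^\star_{xy} + L^\star_y C^\star_{xy})$, which is exactly the DRIG limit $b^{\mathrm{opt}}_\infty$ in \eqref{eq:bias}. Converting back via $b = \gamma_y\Gamma_x^{-1}\tilde b$ yields the stated formula. As a sanity check, the scalar DRIG case $\Gamma_x=\gamma_y I_p$ recovers Theorem~\ref{thm:solution_infty}.

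\textbf{Step 3: the non-identifiability claim.} Under the assumptions of Corollary~\ref{cor:identifiable}, we have $L^\star_{xy}=0$ and $L^\star_y=0$, so the bracket collapses to $b^\star$ and the limit becomes $\gamma_y\Gamma_x^{-1}b^\star$. This equals $b^\star$ if and only if $(\gamma_y\Gamma_x^{-1} - I_p)b^\star = 0$; for generic $b^\star$ (more precisely, unless $b^\star$ lies in the eigenspace of $\Gamma_x$ corresponding to the eigenvalue $\gamma_y$), this forces $\Gamma_x = \gamma_y I_p$, i.e., $\Gamma_x/\gamma_y = I_p$. Hence whenever $\Gamma_x/\gamma_y\ne I_p$, the DRIG-A limit fails to recover $b^\star$ even in the identifiable setting.

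\textbf{Main obstacle.} The only subtle point is specifying what the limit $\|\Gamma\|_2\to\infty$ means; the proof requires a fixed direction, and this is natural because the target expression is scale-invariant. Once this is fixed, all remaining work is algebraic manipulation and invocation of Theorem~\ref{thm:solution_infty}.
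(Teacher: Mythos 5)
Your proof is correct and follows essentially the same route as the paper: both reduce the $\|\Gamma\|_2\to\infty$ limit to the normal equation $\Gamma_x\Delta_x\Gamma_x b=\gamma_y\Gamma_x\Delta_{xy}$, apply the change of variables $\tilde b=\gamma_y^{-1}\Gamma_x b$ to recover the DRIG $\gamma\to\infty$ solution from Theorem~\ref{thm:solution_infty}, and then specialize to $L^\star_{xy}=0$, $L^\star_y=0$ for the non-identifiability claim. Your explicit passage through the closed-form finite-$\Gamma$ solution and the rescaling $t\Gamma$ is, if anything, slightly more careful about what the limit means than the paper's terse "equivalent to minimizing the regularization term" argument, but it is the same underlying computation.
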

\begin{proof}[Proof of Proposition~\ref{prop:driga_infty}]
	Under the assumption in Proposition~\ref{prop:driga_infty}, DRIG-A with $\|\Gamma\|_2\to\infty$ is equivalent to 
\begin{equation*}
	\min_b \sum_{e\in\cE}\omega^e\big[\bbE(\gamma_y Y^e - b^\top\Gamma_x X^e)^2 - \bbE(\gamma_y Y^0 - b^\top\Gamma_x X^0)^2\big].
\end{equation*}
Then similar to the proof of Theorem~\ref{thm:solution_infty}, we obtain the minimum solution. 

Then under the conditions of Corollary~\ref{cor:identifiable}, it is straightforward to see the solution becomes $\gamma_y\Gamma_x^{-1}b^\star\neq b^\star$ unless $\Gamma_x/\gamma_y=I_p$. 
\end{proof}

\section{Proofs for results in supplementary materials}

\subsection{Proof of Theorem~\ref{thm:theorem_finite}}\label{proof_finite_sample}
We first introduce some notations. Let $\hat{G}^e$ be the sample gram matrix of the data $(X^e,Y^e)$ and  $G^{e}$ be the population gram matrix. We will let $\hat{G}^e_{X}$ and $\hat{G}^e_{XY}$ be the sub-blocks of $\hat{G}^e$; we will use the same notation for the population analog. Finally, we will let $\hat{F} = \sum_{e\in\mathcal{E}}\omega^e\left(\gamma\hat{G}^e_X-(\gamma-1)\hat{G}^0_X\right)$ and $\hat{g} = \sum_{e\in\mathcal{E}}\omega^e\left(\gamma\hat{G}^e_{XY}-(\gamma-1)\hat{G}^0_{XY}\right)$; we will let $F^\star$ and $g^\star$ be the population analogue.

Our analysis will rely on the following well-known concentration results for the sample covariance matrix of Gaussian random variables.
\begin{lemma}[Lemma 3.9 in \cite{Chand2012}]Let $\Sigma^\star \in \mathbb{R}^{d \times d}$ be the population covariance of a Gaussian random vector and $\hat{\Sigma}$ be the sample covariance from $n$ iid observations. Let $\psi = \|\Sigma^\star\|_2$. Given any $\delta > 0$ and $\delta \leq 8\psi$, let the number of samples $n$ be such that $n \geq \frac{64p\psi^2}{\delta}$. Then, we have that:
$$\text{Pr }[\|\hat{\Sigma}-\Sigma^\star\|_2 \geq \delta] \leq 2\exp\left\{\frac{-n\delta^2}{128\psi^2}\right\}.$$
\label{lemma:concentration_res}
\end{lemma}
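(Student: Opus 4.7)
The plan is to prove this as a standard operator-norm concentration bound for a Wishart-type matrix, which is in fact quoted verbatim from Chandrasekaran et al. (2012). The natural route is (i) whiten to reduce to the isotropic case, (ii) use an $\varepsilon$-net on the unit sphere, (iii) apply a one-dimensional $\chi^2$ Bernstein bound at each net point, and (iv) take a union bound, choosing constants to match the stated $64$ and $128$.

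First I would write $X_i = {\Sigma^\star}^{1/2} Z_i$ with $Z_i \stackrel{\text{iid}}{\sim} \cN(0, I_d)$, so that
\begin{equation*}
\hat\Sigma - \Sigma^\star \;=\; {\Sigma^\star}^{1/2}\!\Bigl(\tfrac{1}{n}\textstyle\sum_{i=1}^n Z_i Z_i^\top - I_d\Bigr){\Sigma^\star}^{1/2} \;=:\; {\Sigma^\star}^{1/2}\, W\, {\Sigma^\star}^{1/2},
\end{equation*}
and use submultiplicativity of the spectral norm to get $\|\hat\Sigma - \Sigma^\star\|_2 \leq \psi\|W\|_2$. The problem thus reduces to showing $\bbP[\|W\|_2 \geq \delta/\psi] \leq 2\exp(-n\delta^2/(128\psi^2))$ whenever $n \geq 64 d\psi^2/\delta$ and $\delta/\psi \leq 8$.

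Next I would fix a $1/4$-net $\mathcal{N}$ of the unit sphere in $\bbR^d$, which by a standard volume argument has $|\mathcal{N}| \leq 9^d$, and use the inequality $\|W\|_2 \leq 2\sup_{v \in \mathcal{N}} |v^\top W v|$. For any fixed unit $v$, $v^\top W v = \tfrac{1}{n}\sum_i(g_i^2 - 1)$ with $g_i := v^\top Z_i \sim \cN(0,1)$ iid, so $v^\top W v$ is the average of iid centered sub-exponential variables with sub-exponential norm bounded by an absolute constant. A Bernstein tail bound then yields
\begin{equation*}
\bbP\!\left[|v^\top W v| \geq s\right] \leq 2\exp\!\Bigl(-c\, n\min(s^2, s)\Bigr)
\end{equation*}
for an absolute constant $c > 0$. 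Setting $s = \delta/(2\psi)$ and noting that $\delta \leq 8\psi$ places us in the sub-gaussian regime where $\min(s^2,s)$ is comparable to $s^2$, a union bound over $\mathcal{N}$ gives $\bbP[\|W\|_2 \geq \delta/\psi] \leq 2\cdot 9^d \exp(-c' n \delta^2/\psi^2)$ for a (re-labelled) absolute constant $c'$. The exponent becomes $-n\delta^2/(128\psi^2)$ once the $d\log 9$ term is absorbed, which is exactly what the hypothesis $n \geq 64 d\psi^2/\delta$ (equivalently $n\delta^2/\psi^2 \geq 64 d\delta/\psi$, and $\delta/\psi$ is bounded below by a constant in the non-trivial regime) delivers.

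The main obstacle is not conceptual but bookkeeping: matching the explicit constants $64$ and $128$ requires a careful choice of the net radius, of the specific Bernstein constant, and of how aggressively one splits the exponent between ``absorbing the net cardinality'' and ``leaving a tail''. Since this Lemma is lifted directly from Chandrasekaran et al.\ (2012, Lemma 3.9), the cleanest route in the paper is to invoke it by citation rather than re-derive constants; an alternative self-contained path is to quote the Davidson--Szarek bound on the singular values of a Gaussian matrix, or Theorem 4.7.1 of Vershynin (2018), either of which yields the same $\sqrt{d/n}$ scaling up to absolute constants.
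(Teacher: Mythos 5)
The paper does not prove this statement at all: it is imported verbatim, by citation, as Lemma 3.9 of Chandrasekaran et al.\ (2012), and is only ever \emph{used} (with the choice $\delta = 8\psi_e\sqrt{p/n_e}$) in the finite-sample analysis. You correctly identify this, and your sketch of a self-contained proof is the standard and correct route: whiten so that $\|\hat\Sigma-\Sigma^\star\|_2 \le \psi\|W\|_2$ with $W$ an isotropic Wishart deviation, discretize the sphere with a $1/4$-net of cardinality $9^d$, apply a sub-exponential Bernstein bound to $\tfrac1n\sum_i(g_i^2-1)$ at each net point, and union-bound. Each of these steps is sound, and the hypothesis $\delta \le 8\psi$ is indeed what keeps you in the quadratic branch of $\min(s^2,s)$.

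The one place your argument is genuinely soft is the final absorption step. To fold $9^d$ into the exponential and land exactly on $\exp(-n\delta^2/(128\psi^2))$ you need $(c'-\tfrac{1}{128})\,n\delta^2/\psi^2 \ge d\log 9$, and your justification --- that $n \ge 64d\psi^2/\delta$ gives $n\delta^2/\psi^2 \ge 64d\,(\delta/\psi)$ with ``$\delta/\psi$ bounded below by a constant in the non-trivial regime'' --- is not actually supplied by the hypotheses: nothing bounds $\delta/\psi$ from below. The arithmetic closes cleanly only if the sample-size condition reads $n \ge 64p\psi^2/\delta^2$, which gives $n\delta^2/\psi^2 \ge 64p$ unconditionally; this is also the only version consistent with how the paper invokes the lemma (with $\delta = 8\psi\sqrt{p/n}$ the stated condition is met with equality and the tail is $2\exp(-p/2)$). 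So the $\delta$ in the denominator of the displayed sample-size requirement is almost certainly a transcription slip in the paper, and your proof sketch should be run against the $\delta^2$ version; with that correction your route goes through, and deferring the remaining constant bookkeeping to the cited source is exactly what the paper itself does.
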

A straightforward corollary is that under the setting of the lemma, letting $G^\star$ be the population Gram matrix of the Gaussian random vector and $\hat{G}$ be the estimate, 
$$\text{Pr }[\|\hat{G}-G^\star\|_2 \geq \delta] \leq 2\exp\left\{\frac{-n\delta^2}{128\psi^2}\right\}.$$
Combining the result above, and given the assumptions of Theorem~\ref{thm:theorem_finite}, we have that with probability exceeding $1-|\mathcal{E}|\mathrm{exp}(-p/2)$, $\hat{G}^0 \preceq \hat{G}^e$ for every $e \in \mathcal{E}$. Thus, with a high probability, 
$$\hat{\mathcal{L}}_\gamma(b) = \hat{\mathbb{E}}[\ell(X^0,Y^0;b)] + \gamma\sum_{e\in\mathcal{E}}\omega^e(\hat{\mathbb{E}}[\ell(X^e,Y^e;b)]-\hat{\mathbb{E}}[\ell(X^0,Y^0;b)]).$$

\textbf{$\hat{b}_\gamma$ convergence} We will begin with proving the convergence result for an estimate $\hat{b}_\gamma$. From optimality conditions, we have that with a high probability, $\hat{b}_\gamma$ satisfies $\hat{F}\hat{b}_\gamma = \hat{g}$. Note that:
\begin{equation*}
\begin{aligned}
\text{minimum eigenvalue}(\hat{F}) &\geq \tau_{\mathrm{min}} - \|\hat{F} - F^\star\|_2 \\
&\geq \tau_{\mathrm{min}} - \left[\sum_{e\in\mathcal{E}} \omega^e\left(\gamma\|\hat{\Sigma}^e -{\Sigma}^{e,\star}\|_2 + |\gamma-1|\|\hat{\Sigma}^0 - \Sigma^{0,\star}\|_2\right)\right] 
\end{aligned}
\end{equation*}
For any environment $e \in \mathcal{E}$, we let $\delta = 8\psi_e\sqrt{\frac{p}{n_e}}$. Appealing to Lemma~\ref{lemma:concentration_res} and the lower-bound on $n_e$ for every $e$, we have that with probability $1-2|\mathcal{E}|\exp(-p/2)$, $\text{minimum eigenvalue}(\hat{F}) \geq \tau_{\mathrm{min}}/2 > 0$. Thus, $\hat{b}_\gamma$ is a unique solution to finite-sample DRIG estimator. Note that the optimality condition $\hat{F}\hat{b}_\gamma = \hat{g}$ can be equivalently written as:
\begin{equation*}
    \begin{aligned}
        \hat{F}(\hat{b}_\gamma-b^\text{opt}_\gamma) + (\hat{F}-F^\star)b^\text{opt}_\gamma + (\hat{g}-g^\star) + F^\star b_\gamma^\text{opt}+g^\star = 0.
    \end{aligned}
\end{equation*}
From the optimality condition of the population DRIG estimator \eqref{eqn:DRIG}, we have that $F^\star b_\gamma^\text{opt}+g^\star = 0$. Thus, 
\begin{equation*}
\hat{b}_\gamma-b^\text{opt}_\gamma = \hat{F}^{-1}\left[(\hat{F}-F^\star)b^\text{opt}_\gamma + (\hat{g}-g^\star)\right]
\end{equation*}
Thus, we can arrive at the following euclidean norm bound for the difference $\hat{b}_\gamma-b^\text{opt}_\gamma$:
\begin{equation*}
\begin{aligned}
    \|\hat{b}_\gamma-b^\text{opt}_\gamma\|_2 &\leq \frac{1}{\text{mininum eigenvalue}(\hat{F})}\left[\|\hat{F}-F^\star\|_2\|b^\text{opt}_\gamma\|_2 + \|\hat{g}-g^\star\|_2\right] \\
    &\leq \frac{2}{\tau_{\mathrm{min}}}\left[\|\hat{F}-F^\star\|_2\|b^\text{opt}_\gamma\|_2 + \|\hat{g}-g^\star\|_2\right].
    \end{aligned}
\end{equation*}
Note that:
$$\max\{\|\hat{F}-F^\star\|_2,\|\hat{g}-g^\star\|_2\} \leq \left[\sum_{e\in\mathcal{E}}\omega^e\left(\gamma\|\hat{\Sigma}^e -{\Sigma}^{e,\star}\|_2 + |\gamma-1|\|\hat{\Sigma}^0 - \Sigma^{e,\star}\|_2\right)\right].$$
Letting $\delta = 8\psi_e\sqrt{\frac{p}{n_e}}$ for every $e$ and appealing to Lemma~\ref{lemma:concentration_res} and the lower-bound on $n_e$ for every $e$, we arrive at the bound for $\|\hat{b}_\gamma-b^\text{opt}_\gamma\|_2$ in the theorem statement.\\

\textbf{$\hat{L}_\gamma(\hat{b}_\gamma)$ convergence} Note for every $e$, some simple calculations yield:
\begin{equation*}
\begin{aligned}
    &\quad\ \mathbb{E}[(Y^e-({b}^\text{opt}_\gamma)^TX^e)^2] - \hat{\mathbb{E}}[(Y^e-(\hat{b}_\gamma)^TX^e)^2] \\
    &= \hat{\Sigma}_Y^e-\Sigma^{e,\star}_Y - 2[(\hat{b}_\gamma-b^\text{opt}_\gamma)^T\hat{\Sigma}_{XY}^e + (b^\text{opt}_\gamma)^T(\hat{\Sigma}_{XY}^e-{\Sigma}_{XY}^{\star,e})]\\&\quad +(\hat{b}_\gamma-b^\text{opt}_\gamma)^T\hat{\Sigma}^{e}_X\hat{b}_\gamma + (b^\text{opt}_\gamma)^T\hat{\Sigma}_X^e(\hat{b}_\gamma-b^\text{opt}_\gamma) + (b^\text{opt}_\gamma)^T(\hat{\Sigma}_X^e - \Sigma^{e,\star}_X)b^\text{opt}_\gamma 
\end{aligned}
\end{equation*}
For notational ease, let $\xi_e = 8\psi_e\sqrt{\frac{p}{n_e}}$ and $\theta$ be the bound for $\hat{b}_\gamma-b^\text{opt}_\gamma$. Then, appealing to Lemma~\ref{lemma:concentration_res} and the lower-bound on the sample size $n_e$, we have with probability $1-2|\mathcal{E}|\exp(-p/2)$, $\|\hat{\Sigma}^e - \Sigma^{e,\star}\|_2\leq \xi_e$. Thus, some manipulations yield:
\begin{equation*}
\begin{aligned}
    &|\mathbb{E}[(Y^e-({b}^\text{opt}_\gamma)^TX^e)^2] - \hat{\mathbb{E}}[(Y^e-(\hat{b}_\gamma)^TX^e)^2]| \\
    \leq\ &\xi_e + 2(\xi_e+\psi_e)\theta + \xi\|b^\text{opt}_\gamma\|_2 + 2\theta(\xi_e+\psi_e)(\theta+\|b^\star\|_2)+\|b^\star\|_2^2\xi_e. 
\end{aligned}
\end{equation*}
By the lower-bound on the sample size $n_e$, we have that $\xi_e \leq \theta$, $\xi_e \leq \psi_e$, and ${\theta} \leq 4(1+\|b^\text{opt}_\gamma\|_2)$. Putting everything together, we can conclude that:
\begin{equation*}
\begin{aligned}
    |\mathbb{E}[(Y^e-(b^\text{opt}_\gamma)^TX^e)^2] - \hat{\mathbb{E}}[(Y^e-(b^\text{opt}_\gamma)^TX^e)^2]| \leq 15\theta(1+\psi_e)(1+\|b^\text{opt}_\gamma\|_2)^2
\end{aligned}
\end{equation*}
We can then conclude that:
\begin{equation*}
\begin{aligned}
|\hat{L}_\gamma(\hat{b}_\gamma)-{L}(b_\gamma^\text{opt})| &\leq \max\{\gamma,|1-\gamma|\}\max_{e}|\mathbb{E}[(Y^e-(b^\text{opt}_\gamma)^TX^e)^2] - \hat{\mathbb{E}}[(Y^e-(b^\text{opt}_\gamma)^TX^e)^2]| \\&\leq  15\max\{\gamma,|1-\gamma|\}\theta(1+\max_e\psi_e)(1+\|b^\text{opt}_\gamma\|_2)^2
\end{aligned}
\end{equation*}
Plugging in the value of $\theta$, we have desired result.

\subsection{Proof of Theorem~\ref{thm:continous_discrete}}
\label{proof_thm_discrete_continuous}
\begin{proof}
We have
\begin{equation*}
    \bbE\left[\begin{pmatrix}
    X^e \\ Y^e \\H^e
    \end{pmatrix}|A^e\right]
    =(I-\tilde{B}^\star)^{-1}(MA^e+\bbE[\varepsilon^e|A^e]).
\end{equation*}
For any regression coefficient $b$, define the vector $w$ as in \eqref{eq:def_w}. 
Then $\tilde{Y}^e-b^\top\tilde{X}^e=w^\top(\varepsilon^e-\bbE[\varepsilon^e|A^e])$ and 
\begin{equation*}
	\bbE[(\tilde{Y}^e-b^\top\tilde{X}^e)^2] =w^\top\tilde{S}^ew.
\end{equation*}
Thus
\begin{equation*}
	\tilde\cL_\gamma(b) =  w^\top \tilde{S}^0 w + \gamma w^\top\sum_{e\in\cE}\omega^e(\tilde{S}^e-\tilde{S}^0)w.
\end{equation*}
Also note that
\begin{align*}
	\bbE(Y^e-b^\top X^e|A^e) &= w^\top(\bbE[\varepsilon^e|A^e]+MA^e), \\
	 \bbE[\bbE(Y^e-b^\top X^e|A^e)]^2&=w^\top(\bbE[\varepsilon^e|A^e]\bbE[\varepsilon^e|A^e]^\top+M\bbE[A^e{A^e}^\top]M^\top)w,
\end{align*}
which leads to
\begin{equation*}
	\lambda\sum_{e\in\cE}\omega^e\bbE[\bbE(Y^e-b^\top X^e|A^e)]^2 = \lambda w^\top\left[\sum_{e\in\cE}\omega^e\bbE[\varepsilon^e|A^e]\bbE[\varepsilon^e|A^e]^\top+ M\bbE[A^e{A^e}^\top]M^\top\right]w.
\end{equation*}
Thereby, we conclude the proof.
\end{proof}

\subsection{Proof of Theorem~\ref{thm:main2}}
\begin{proof}
Consider the DRIG objective $\mathcal{L}_\gamma(b)$. Using similar reasoning as above, we can conclude that:
\begin{eqnarray*}
\begin{aligned}
\mathcal{L}_\gamma(b) &= \min_{e\in\mathcal{E}} w^\top{S}^ew+ \gamma\sum_{e\in\mathcal{E}}\omega^e(w^\top{S}^ew - \min_{e\in\mathcal{E}}w^\top{S}^e{w})
\\&= w^\top\left[ \gamma\sum_{e\in\mathcal{E}}\omega^eS^e\right]w +(1-\gamma)\min_{e\in\mathcal{E}}w^\top{S}^e{w}
\end{aligned}
\end{eqnarray*}

Since $S^e \succeq 0$, we have that $\mathcal{L}_\gamma(b) \geq w^\top\bbE[\varepsilon\varepsilon^\top]w$. Since $K^\star_2 \preceq S^e$ for every $e\in\mathcal{E}$, then, for $\gamma \geq 1$, $(1-\gamma)\min_{e\in\mathcal{E}}w^\top{S}^e{w} \leq (1-\gamma)\min_{e\in\mathcal{E}}w^\top{K}_2^\star{w}$. Thus, 
\begin{eqnarray*}
\mathcal{L}_\gamma(b) \leq w^\top\left[K_2^\star + \gamma\sum_{e\in\mathcal{E}}\omega^e(S^e-K^\star_2)\right]w = \mathcal{L}_{\mathcal{C}_{2,\gamma}}(b).
\end{eqnarray*}
By definition, $S^e \preceq K^\star_1$ for some $e\in\mathcal{E}$. Then, for $\gamma \geq 1$, $(1-\gamma)\min_{e\in\mathcal{E}}w^\top{K}_1^\star{w} \leq (1-\gamma)\min_{e\in\mathcal{E}}w^\top{S}^e{w}$. Thus, 
\begin{eqnarray*}
\mathcal{L}_\gamma(b) \geq  w^\top\left[K_1^\star + \gamma\sum_{e\in\mathcal{E}}\omega^e(S^e-K^\star_1)\right]w. 
\end{eqnarray*}
Since $\mathcal{L}_\gamma(b)$ is also greater than $w^\top\bbE[\varepsilon\varepsilon^\top]w$, we conclude that $\mathcal{L}_\gamma(b) \geq \mathcal{L}_{\mathcal{C}_{1,\gamma}}(b)$. 

To prove the second component, recall our block notations $B^\star = \begin{pmatrix}B^\star_{x} & b^\star \\ B^{\star\top}_{yx} & 0 \end{pmatrix}$ where $C^\star = \begin{pmatrix}C^\star_{x} & C^\star_{xy} \\ C^{\star\top}_{yx} & C^\star_{y} \end{pmatrix}$. Consider $\mathcal{L}_{\mathcal{C}}^\mathrm{robust}(b)$ defined in \eqref{eqn:worst-case} where $\mathcal{C} = \{v\in\mathbb{R}^{p+1}~|~\mathbb{E}[vv^\top] \preceq \tilde{M}\}$ for some positive definite matrix $\tilde{M}$. According to model \eqref{eq:scm_train}, we have
\begin{equation*}
    \mathcal{L}_{\mathcal{C}}^\mathrm{robust}(b)=w^\top M w,
\end{equation*}
where $w$ depends on $b$, as defined in \eqref{eq:def_w}. Let $M = \tilde{M}$. Let $\alpha=1-B^{\star\top}_{xy}(I_p-B^\star_x)^{-1}B^\star_{yx}$. We have 
\begin{equation*}
	\begin{split}
		C^\star_x&=(I_p-B^\star_x-B^\star_{yx}b^{\star\top})^{-1}~~~;~~~~
		C^\star_{xy}=(I_p-B^\star_x)^{-1}B^\star_{yx}/\alpha\\
		C^\star_{yx}&=C^{\star\top}_xb^\star~~~~;~~~~
		C^\star_{y}=1/\alpha
	\end{split}
\end{equation*}

Then we have
\begin{equation*}
	w = 
	\begin{pmatrix}
		C^{\star\top}_x (b^\star-b)\\
		1/\alpha - C^{\star\top}_{xy} b
	\end{pmatrix} 
	=:
	\begin{pmatrix}
		w_x \\ w_y
	\end{pmatrix},
\end{equation*}
where $w_y\in\bbR$ is the last component of $w$. 
Then
\begin{equation*}
	\cL_{\mathcal{C}}(b) = w_x^\top M_xw_x + 2w_x^\top M_{xy} w_y + w_y^2M_{y}.
\end{equation*}
Since $M$ is positive definite, $b^\mathrm{opt} := \argmin_{b\in\mathbb{R}^p} \cL_{\mathcal{C}}(b)$ as unique minimizer. To find this minimizer, we take a gradient of $\cL_{\mathcal{C}}(b)$ with respect to $b$ and set it to zero. Some algebra gives:
\begin{eqnarray*}
b^\mathrm{opt} = b^\star + \left([C^\star{M}{C^\star}^\top]_{1:p,1:p}\right)^{-1}[C^\star_x{M}_{xy}+M_{y}C^\star_{xy}].
\end{eqnarray*}
Let $M_1 = K_1^\star+\gamma\sum_{e\in\mathcal{E}}\omega^e(S^e-K^\star_1)$ and $M_2 = K_2^\star+\gamma\sum_{e\in\mathcal{E}}\omega^e(S^e-K^\star_2)$. Note that,  $\mathcal{L}_{\mathcal{C}_1}^\mathrm{robust}(b) = w(b)^\top{M}_1w(b)$ and $\mathcal{L}_{\mathcal{C}_2}^\mathrm{robust}(b) = w(b)^\top{M}_2w(b)$, where the dependence of $w$ on $b$ is made explicit. Following the analysis above, we have that: 
\begin{eqnarray}
\begin{aligned}
b^\mathrm{opt}_{\gamma,1} &:= \argmin_{b\in\mathbb{R}^p}\mathcal{L}_{\mathcal{C}_1}^\mathrm{robust}(b) =  b^\star + \left([C^\star{M}_1{C^\star}^\top]_{1:p,1:p}\right)^{-1}[C^\star_x[{M}_1]_{xy}+[M_1]_{y}C^\star_{xy}],\\
b^\mathrm{opt}_{\gamma,2} &:= \argmin_{b\in\mathbb{R}^p}\mathcal{L}_{\mathcal{C}_2}^\mathrm{robust}(b) =b^\star + \left([C^\star{M}_2{C^\star}^\top]_{1:p,1:p}\right)^{-1}[C^\star_x[{M}_2]_{xy}+[M_2]_{y}C^\star_{xy}].
\end{aligned}
\label{eqn:b_eqs}
\end{eqnarray}
Then:
\begin{eqnarray*}
\begin{aligned}
\min_b\mathcal{L}_{\mathcal{C}_1}(b) - \min_b\mathcal{L}_{\mathcal{C}_2}(b) &= w(b^\mathrm{opt}_{\gamma,1})^\top{M}_1w(b^\mathrm{opt}_{\gamma,1}) - w(b^\mathrm{opt}_{\gamma,2})^\top{M}_2w(b^\mathrm{opt}_{\gamma,2}),\\
&=w(b^\mathrm{opt}_{\gamma,1})^\top(M_1-M_2)w(b^\mathrm{opt}_{\gamma,1}) + (w(b^\mathrm{opt}_{\gamma,1})-w(b^\mathrm{opt}_{\gamma,2}))^\top{M}_2w(b^\mathrm{opt}_{\gamma,1}) \\&+ w(b^\mathrm{opt}_{\gamma,1})^\top M_2(w(b^\mathrm{opt}_{\gamma,1})-w(b^\mathrm{opt}_{\gamma,2})),\\&+(w(b^\mathrm{opt}_{\gamma,1})-w(b^\mathrm{opt}_{\gamma,2}))^\top{M}_{2}(w(b^\mathrm{opt}_{\gamma,1})-w(b^\mathrm{opt}_{\gamma,2})),
\end{aligned}
\end{eqnarray*}
which allows us to obtain the bound:
\begin{eqnarray}
\begin{aligned}
\min_b\mathcal{L}_{\mathcal{C}_2}(b)-\min_b\mathcal{L}_{\mathcal{C}_1}(b)  &\leq \|M_1-M_2\|_2\|w(b^\mathrm{opt}_{\gamma,1})\|_2^2 + 2\|w(b^\mathrm{opt}_{\gamma,1})\|_2\|w(b^\mathrm{opt}_{\gamma,1})-w(b^\mathrm{opt}_{\gamma,2})\|_2\|M_2\|_2\\&+\|w(b^\mathrm{opt}_{\gamma,1})-w(b^\mathrm{opt}_{\gamma,2})\|_2^2\|M_2\|_2
\end{aligned}
\label{eqn:bound_itermediate}
\end{eqnarray}
It is straightforward to show that: 
\begin{eqnarray*}
\begin{aligned}
\|M_1-M_2\| &\leq (\gamma-1)\|K_1^\star-K_2^\star\|_2\\
\|w(b^\mathrm{opt}_{\gamma,1})\|_2 &\leq \frac{4\|C^\star\|_2^2\|M_1\|_2}{\sigma_\text{min}({C^\star}^\top{M}_1{C}^\star)} + \frac{1}{\alpha}+\|C^\star\|_2\|b^\star\|_2:= c_1\\
\|w(b^\mathrm{opt}_{\gamma,1})-w(b^\mathrm{opt}_{\gamma,2})\|_2 &\leq 2\|C^\star\|_2\|b^\mathrm{opt}_{\gamma,1}-b^\mathrm{opt}_{\gamma,2}\|_2
\end{aligned}
\end{eqnarray*}
From \eqref{eqn:b_eqs}, and some algebra, we have:
\begin{eqnarray*}
\begin{aligned}
\|b^\mathrm{opt}_{\gamma,1}-b^\mathrm{opt}_{\gamma,2}\|_2 &\leq 2\|([C^\star{M}_2{C^\star}^\top]_{1:p,1:p})^{-1}\|_2\|C^\star\|_2\|M_1-M_2\|_2\\&+\|([C^\star{M}_2{C^\star}^\top]_{1:p,1:p})^{-1}\|_2^2\|C^\star\|_2^2\|M_1-M_2\|_2\\&+ \frac{\|([C^\star{M}_2{C^\star}^\top]_{1:p,1:p})^{-1}\|_2^3\|C^\star\|_2^2\|M_1-M_2\|_2}{1-\|C^\star\|_2^2\|M_1-M_2\|_2}\\
&\leq \frac{4\max\{1,\|C^\star\|_2\}^2\|M_1-M_2\|_2}{\min\{\sigma_\mathrm{min}(C^\star{M_2}{C^\star}^\top),1\}^3(1-\|C^\star\|_2^2\|M_1-M_2\|_2)}\\
&\leq  \underbrace{\frac{4\max\{1,\|C^\star\|_2\}^2(\gamma-1)\|K^\star_1-K^\star_2\|_2}{\min\{\sigma_\mathrm{min}(C^\star{M_2}{C^\star}^\top),1\}^3(1-\|C^\star\|_2^2}}_{c_2}(\gamma-1)\|K^\star_1-K^\star_2\|_2) = c_2(\gamma-1)\|K^\star_1-K^\star_2\|_2
\end{aligned}
\end{eqnarray*}
Combining these bounds with \eqref{eqn:bound_itermediate}, we have that:
\begin{eqnarray}
\min_b\mathcal{L}_{\mathcal{C}_2}(b) - \min_b\mathcal{L}_{\mathcal{C}_1}(b) \leq c_3(\gamma-1)\|K^\star_1-K^\star_2\|_2,
\label{eqn:bound_difference_loss}
\end{eqnarray}
where $c_3 = (c_1^2+4c_1c_2\|C^\star\|_2\|M_2\|_2+2c_2\|C^\star\|_2\|M\|_2)$. For $b_\gamma^\mathrm{opt}$ denoting an optimal solution of \eqref{eqn:DRIG}, and since $\mathcal{L}_{\mathcal{C}_1}(b) \leq \mathcal{L}_{\gamma}(b) \leq \mathcal{L}_{\mathcal{C}_2}(b)$, 
\begin{eqnarray}
\begin{aligned}
\mathcal{L}_{\mathcal{C}_1}(b_\gamma^\mathrm{opt})-\min_b\mathcal{L}_{\mathcal{C}_1}(b) &\leq \mathcal{L}_{\gamma}(b_\gamma^\mathrm{opt})-\min_b\mathcal{L}_{\mathcal{C}_1}(b)
\leq \min_b \mathcal{L}_{\mathcal{C}_2}(b)-\min_b\mathcal{L}_{\mathcal{C}_1}(b),
\end{aligned}
\label{eqn:bound_difference_loss_2}
\end{eqnarray}
obtaining the desired result. Furthermore, from Taylor series expansion, we have that: 
\begin{eqnarray*}
\mathcal{L}_{\mathcal{C}_1}(b_\gamma^\mathrm{opt}) - \mathcal{L}_{\mathcal{C}_1}(b^\mathrm{opt}_{\gamma,1}) = \underbrace{\nabla_b\mathcal{L}_{\mathcal{C}_1}(b^\mathrm{opt}_{\gamma,1})^\top}_{=0}(b_\gamma^\mathrm{opt}-b^\mathrm{opt}_{\gamma,1}) + \frac{1}{2}(b_\gamma^\mathrm{opt}-b^\mathrm{opt}_{\gamma,1})^\top (C^\star_{1:p,:}M_1{C^\star}^\top_{1:p,:})(b_\gamma^\mathrm{opt}-b^\mathrm{opt}_{\gamma,1})
\end{eqnarray*}
Combining the above with \eqref{eqn:bound_difference_loss} and \eqref{eqn:bound_difference_loss_2}, we obtain $\|b_\gamma^\mathrm{opt}-b^\mathrm{opt}_{\gamma,1}\|_2 \leq \frac{2c_3(\gamma-1)\|K^\star_1-K^\star_2\|_2}{\sigma_\text{min}(C^\star_{1:p,:}M_1{C^\star}^\top_{1:p,:})}$. Similarly, 
\begin{eqnarray*}
\begin{aligned}
\|b_\gamma^\mathrm{opt}-b^\mathrm{opt}_{\gamma,2}\|_2 &\leq \|b_\gamma^\mathrm{opt}-b^\mathrm{opt}_{\gamma,1}\|_2+\|b_\gamma^\mathrm{opt}-b^\mathrm{opt}_{\gamma,2}\|_2 \leq 2c_2(\gamma-1)\|K_1^\star-K_2^\star\|_2 \\
\mathcal{L}_{\mathcal{C}_2}(b_\gamma^\mathrm{opt}) - \mathcal{L}_{\mathcal{C}_2}(b^\mathrm{opt}_{\gamma,2}) &\leq \frac{2c_3(\gamma-1)\|K^\star_1-K^\star_2\|_2}{\sigma_\text{min}(C^\star_{1:p,:}M_1{C^\star}^\top_{1:p,:})}
\end{aligned}
\end{eqnarray*}
Letting $c = \max\{\frac{2c_3}{\sigma_\text{min}(C^\star_{1:p,:}M_1{C^\star}^\top_{1:p,:})},\frac{2c_3}{\sigma_\text{min}(C^\star_{1:p,:}M_2{C^\star}^\top_{1:p,:})}\}$, and $c' = 2c_2$, we have the desired result. 
\end{proof}

\subsection{Proof of Corollary~\ref{cor:identifiable}}\label{proof_cor_identifiable}
\begin{proof}
As $\delta^e_{p+1}=0$, we have $L^\star_{y}=0$ and $L^\star_{xy}=0$. Thus, by Theorem~\ref{thm:solution_infty}, we immediately know that $b^{\mathrm{opt}}_\infty=b^\star$. To see the second part of the corollary, note that in this case we have $w_x(b^{\mathrm{opt}}_\infty)=0$ and thus $\cL_{\mathrm{reg}}(b^{\mathrm{opt}}_\infty) = w_x(b^{\mathrm{opt}}_\infty)^\top L^\star_xw_x(b^{\mathrm{opt}}_\infty) = 0.$ Also we have $\bbE[(Y^0-{b^\star}^\top X^0)^2] = \bbE[\varepsilon_y^2]$, which concludes the proof.
\end{proof}


\subsection{Proof of Corollary~\ref{corr:independent_Y}}\label{proof_corr_independent_Y}
\begin{proof}
By Theorem~\ref{thm:solution_infty}, in this case we have 
\begin{equation*}
	b^{\mathrm{opt}}_\infty-b^\star = \big(C^\star_xL^\star_x C^{\star\top}_x/L^\star_y + C^\star_{xy}C^{\star\top}_{xy}\big)^{-1}C^\star_{xy}.
\end{equation*}
Thus,
\begin{align*}
	\|b^{\mathrm{opt}}_\infty-b^\star\|_\infty &= \|\big(C^\star_xL^\star_x C^{\star\top}_x/L^\star_y + C^\star_{xy}C^{\star\top}_{xy}\big)^{-1}C^\star_{xy}\|_\infty\\
	&\leq \frac{\|C^\star_{xy}\|_\infty}{\min_{\|u\|_\infty=1}\|(C^\star_xL^\star_x {C^\star_x}^\top/L_{y} + C^\star_{xy}{C^\star_{xy}}^\top)u\|_\infty}.
\end{align*}

When $b^\star=0$, $C^\star_{xy}=0$ and thus the above upper bound vanishes, leading to $b^{\mathrm{opt}}_\infty=b^\star$. Also we have $w_x(b^{\mathrm{opt}}_\infty)=0$ and $w_y(b^{\mathrm{opt}}_\infty)=1$. Thus, we have
\begin{equation*}
	\cL_\gamma(b^{\mathrm{opt}}_\infty) = \bbE[\varepsilon_y^2]+\bbE[{\delta^0_y}^2] + \gamma L_y^\star.
\end{equation*}
which tends to infinity as $\gamma\to\infty$.
\end{proof}

\subsection{Proof of Proposition~\ref{prop:int_y_robust}}\label{proof_prop_int_y_robust}
\begin{proof}
Notice that
\begin{align*}
	\sup_{v\in \bbR^{p+1}:\bbE[v_y^2]\le c}\bbE[(Y^v-b^\top X^v)^2] &= w^\top\bbE[\varepsilon\varepsilon^\top]w+\sup_{v:\bbE[v_y^2]\le c}w^\top\bbE[vv^\top]w \\
	&=w^\top\bbE[\varepsilon\varepsilon^\top]w+\sup_{v:\bbE[v_y^2]\le c}[w_x^\top\bbE(v_xv_x^\top)w_x+w_y^2\bbE(v_y^2)+2w_x^\top\bbE(v_xv_y)w_y]\\
	&=w^\top\bbE[\varepsilon\varepsilon^\top]w+w_x^\top\sup_{v_x}\bbE(v_xv_x^\top)w_x + cw_y^2+2w_x^\top\sup_{\bbE[v_y^2]\le c}\bbE(v_xv_y)w_y \\
	&=
	\begin{cases}
		\bbE\varepsilon_y^2+c, & b=b^\star\\
		\infty, & b\neq b^\star
	\end{cases}
\end{align*}
We thereby conclude the proof.
\end{proof}

\subsection{Proof of Proposition~\ref{corollary:dense}}
\label{proof_corollary_dense}
Recall that:
\begin{equation}
		b^{\mathrm{opt}}_\infty = b^\star + \left([C^\star{L^\star}{C^\star}^\top]_{1:p,1:p}\right)^{-1}(C^\star_xL^\star_{xy}+L^\star_{y}C_{xy}^\star).
\end{equation}
Let $$M = \left([C^\star{L^\star}{C^\star}^\top]_{1:p,1:p}\right)^{-1} = \big(C_x^\star{L}^\star_x {C_x^\star}^\top + C_x^\star{L}^\star_{xy}{C_{xy}^\star}^\top + C_{xy}^\star{{L}_{xy}^\star}^\top {C_{x}^\star}^\top + L_{y}^\star{C_{xy}^\star}{C_{xy}^\star}^\top\big)^{-1}.$$
Since the graph underlying the observed variables is a DAG according to Assumption~\ref{assumptiondag}, we have that $[I-B^\star]_{1:p,1:p}$ is an invertible matrix. Since the matrix $I-B^\star$ is also invertible, by Schur's complement, we have that the matrix $C^\star_x$ is an invertible matrix. Furthermore, we have the inequalities:
\begin{eqnarray}
\begin{aligned}
    \sigma_\text{min}(C^\star_x) &\geq \sigma_\text{min}(C^\star) \geq 1/\sigma_\text{max}(I-B^\star)\geq 1/(1+d\|B^\star\|_\infty) \geq 2/3,\\
    \sigma_\text{max}(C^\star_x) &\leq  \sigma_\text{max}(C^\star) \leq 1/\sigma_\text{min}(I-B^\star) \leq 1/(1-d\|B^\star\|_\infty) \leq 2,\\
    \sigma_\text{max}(C^\star_{xy}) &\leq  \sigma_\text{max}(C^\star) \leq 1/\sigma_\text{min}(I-B^\star)\leq 1/(1-d\|B^\star\|_\infty) \leq 2,
\end{aligned}
\label{eqn:intermed_dense}
\end{eqnarray}
where the last inequalities in each equation follow from the Assumption~\ref{assumption:bound_causal_coeffs} and the bound that for any matrix $N$, $\|N\|_2 \leq \|N\|_\infty{s}$, where $s$ is the maximum number of zeros in any column or row of $N$. We thus conclude that:
$$\sigma_\text{min}([C^\star{L^\star}{C^\star}^\top]_{1:p,1:p}) \geq \frac{\sigma_\text{min}(L^\star_x)}{\sigma_\text{max}(I-B^\star)^2} - \frac{2\sigma_\text{max}(L^\star_{xy})+L^\star_y}{\sigma_\text{min}(I-B^\star)^2} > \frac{\sigma_\text{min}(L^\star_x)}{2\sigma_\text{max}(I-B^\star)^2} = \mathcal{O}(\sigma_\text{min}(L^\star_x)),$$
where the second inequality follows from Assumption~\ref{assumption: sufficient_strong}. The equality follows from \eqref{eqn:intermed_dense}. By the definition of the matrix $M$, we have that $\|M\|_2 = \frac{1}{\mathcal{O}(\sigma_\text{min}(L^\star_x))}$. Furthermore, by Assumption~\ref{assumption:latent_exogenous}, notice that $$ L^\star_y = \sum_{e\in\mathcal{E}}\omega^e\left(E[(\delta_y^e)^2] + [\Gamma^\star\Sigma_{\eta^e}\Gamma^\star]_{p+1,p+1}\right),$$
where $\Sigma_{\eta^e}$ is the covariance of the perturbations on the latent variables. Therefore,
\begin{align*}
	|L^\star_y| &= \max_{e\in\mathcal{E}}E[(\delta_y^e)^2] + \|\Gamma^\star\|_2^2 \max_{i}\|\mathcal{P}_{\text{col-space}(\Gamma^\star)}e_i\|_2^2{h}^{3/2}\max_e\|\Sigma_{\eta^e}\|_\infty \\
	&= \max_{e\in\mathcal{E}}E[(\delta_y^e)^2] + \mathcal{O}\left(\frac{h^{5/2}\max_e\|\Sigma_{\eta^e}\|_\infty}{p}\right),
\end{align*}
where the last inequality follows from Assumptions~\ref{assumption:dense_latent} and \ref{assumption: bounded_latent}. Note that: 
$$L_{xy}^\star = \sum_{e\in\mathcal{E}}\omega^e\begin{pmatrix}\mathcal{I}_{p} & 0 \end{pmatrix}\Gamma^\star{\Sigma_{\eta^e}}{\Gamma^\star}^Te_{p+1}.$$
Similar as $L_{y}^\star$, we conclude:
$$\|L_{xy}^\star\|_\infty \leq \|\Gamma^\star\|_2^2 \max_{i}\|\mathcal{P}_{\text{col-space}(\Gamma^\star)}e_i\|_2^2{h}^{3/2}\max_e\|\Sigma_{\eta^e}\|_\infty = \mathcal{O}\left(\frac{h^{5/2}\max_e\|\Sigma_{\eta^e}\|_\infty}{p}\right)$$
We further have that:
\begin{eqnarray*}
\begin{aligned}
\|MC^\star_xL^\star_{xy}\|_\infty &\leq p\|MC^\star_x\|_2\|L^\star_{xy}\|_\infty =\mathcal{O}\left(\frac{p\|L^\star_{xy}\|_\infty}{\sigma_{\mathrm{min}}(L^\star_x)}\right),\\
\|M{C}^\star_{xy}L^\star_y\|_\infty &\leq \|M\|_2|L^\star_y|\|C^\star_{xy}\|_{2} =\mathcal{O}\left(\frac{|L_y^\star|}{\sigma_{\mathrm{min}}(L^\star_x)}\right).
\end{aligned}
\end{eqnarray*}
Putting everything together, we have the desired bound.


\subsection{Proof of Proposition~\ref{prop:bias_bound3}}
\label{proof:bias_bound3}
\begin{proof}
As $L^\star_{xy}=0$ and $L^\star_y=0$, we have 
\begin{equation*}
    \cL_{\mathrm{reg}}(b) = w_x^\top L^\star_xw_x = (b^\star-b)^\top \Delta_x (b^\star-b)
\end{equation*}
which is minimized whenever $\Delta_x (b^\star-b)=0$. This immediately leads to $\cI=\{b^\star+b':\Delta_x b'=0\}$.

When $L^\star_{xy}=0$ and $L^\star_y=0$, the original objective function given any $\gamma$ becomes
\begin{equation*}
	\cL(b) = \bbE[(Y^0-b^\top X^0)^2] + \gamma (b^\star-b)^\top \Delta_x (b^\star-b),
\end{equation*}
where the first term is equal to $\bbE[(\varepsilon_y-(b-b^\star)^\top X^0)^2]$. Minimizing $\cL(b)$ leads to
\begin{equation*}
	b^{\mathrm{opt}}_\gamma-b^\star = [\bbE X^0{X^0}^\top+\gamma\Delta_x]^{-1}\bbE[X^0\varepsilon_y].
\end{equation*}
Letting $\gamma\to\infty$ leads to 
\begin{equation*}
		b^{\mathrm{opt}}_\infty=b^\star+D\bbE[X^0\varepsilon_Y],
	\end{equation*}
	where $D=\lim_{\gamma\to\infty}\big[\bbE X^0{X^0}^\top+\gamma\Delta_x\big]^{-1}$.

Also notice that $\bbE[X^0\varepsilon_y]=\bbE[(C^\star_x\varepsilon_x+C^\star_{xy}\varepsilon_y)\varepsilon_y]=C^\star_x\bbE[\varepsilon_x\varepsilon_y]+C^\star_{xy}\bbE\varepsilon^2_y$. Then we have
\begin{align*}
	\|b^{\mathrm{opt}}_\infty-b^\star\|_\infty &= \|D\bbE[X^0\varepsilon_Y]\|_\infty\\
	&\leq \|D\|_\infty\|C^\star_x\bbE[\varepsilon_x\varepsilon_y]+C^\star_{xy}\bbE\varepsilon^2_y\|_\infty,
\end{align*}
which concludes the proof. 
\end{proof}

\subsection{Proof of the results in the specialized setting}\label{proof_prop_insuff_interv_ours_ols}
Let $\bbE [X^0{X^0}^\top] = \mathrm{diag}(\sigma^2_1,\dots,\sigma^2_p)$ and $\Delta_x = \mathrm{diag}(\Delta_1,\dots,\Delta_p)$. 
From the proof of Proposition~\ref{prop:bias_bound3} we know
\begin{equation*}
	b^\text{opt}_\gamma-b^\star = [\bbE X^0{X^0}^\top+\gamma\Delta_x]^{-1}\bbE[X^0\varepsilon_y] 
\end{equation*}
whose $i$th component is $\bbE[X^0_i\varepsilon_y]/(\sigma^2_i+\gamma\Delta_i)$, where $X^0_i$ is the $i$th component of $X^0$. The OLS estimator on the observational environment satisfies
\begin{equation*}
	b^0_{\mathrm{OLS}}-b^\star=	[\bbE X^0{X^0}^\top]^{-1}\bbE[X^0\varepsilon_y] 
\end{equation*}
whose $i$th component is $\bbE[X^0_i\varepsilon_y]/\sigma^2_i$.
Since $\gamma\Delta_i\geq0$, we immediately know that $\|b^\text{opt}_\gamma-b^\star\|_2\leq \|b^0_{\mathrm{OLS}}-b^\star\|$. When $\bbE[X^0_i\varepsilon_y]>0$ and $\Delta_i>0$, we have $|\bbE[X^0_i\varepsilon_y]|/(\sigma^2_i+\gamma\Delta_i)<|\bbE[X^0_i\varepsilon_y]|/\sigma^2_i$ and thus the inequality is strict.

\section{Additional empirical results}\label{app:emp_results}

\subsection{Illustrative examples}

In Section~\ref{sec:exam}, we present two illustrative examples to demonstrate the advantages of DRIG in robust prediction. Here, we provide an additional example, where the training data contains a limited amount of heterogeneity in the mean. Specifically, in Example~\ref{ex:robust1} (a covariate-intervened setting), we now set $\bbE[\delta_x^1]=0.1$, that is, there are limited mean shifts in $X$; in Example~\ref{ex:robust2} (an all-intervened setting), we now set 
$$
	\begin{pmatrix}
		\delta_x^1 \\ \delta_y^1
	\end{pmatrix}
	\sim\cN\left(
	\begin{pmatrix}
		0.1\\0.1
	\end{pmatrix},
	\begin{pmatrix}
		1 & 0.1\\
		0.1 & 0.05\\
	\end{pmatrix}
	\right)
$$
where we only change $\bbE[\delta_x^1]$ from 0.5 to 0.1 so the amount of mean shifts is again limited. As shown in Figure~\ref{fig:illus_small_meanshifts}, anchor regression that can only exploit mean shifts performs very close to the pooled OLS. In contrast, DRIG maintains competitive robustness performance. 

\begin{figure}
\centering
\begin{tabular}{cc}
    \includegraphics[width=0.4\textwidth]{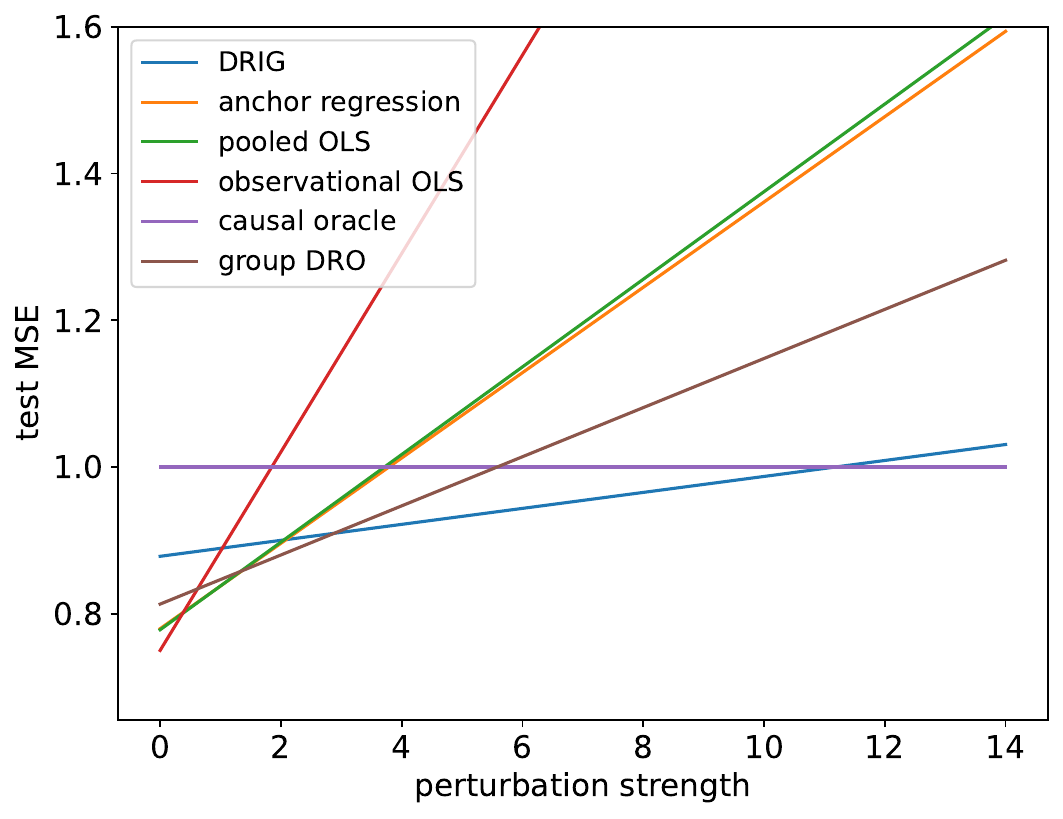} &
    \includegraphics[width=0.4\textwidth]{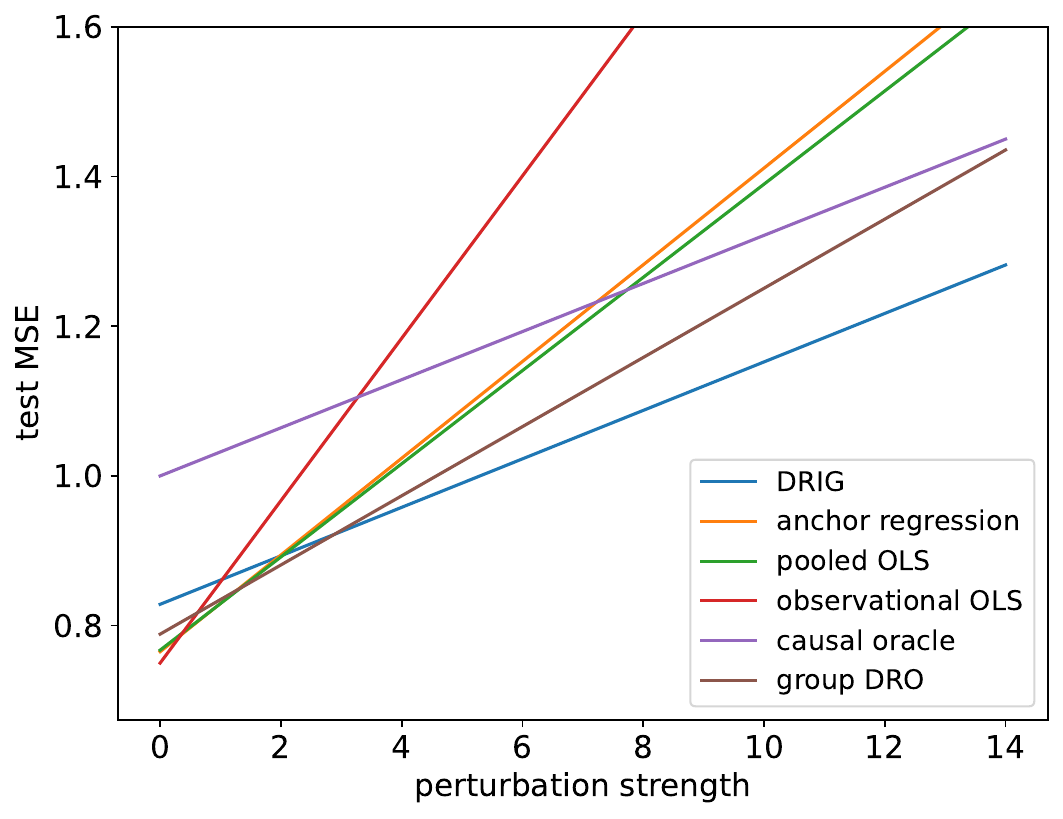}\\
    (a) covariate-intervened & (b) all-intervened
\end{tabular}
\caption{Illustrative examples with small mean shifts.}\label{fig:illus_small_meanshifts}
\end{figure}

\subsection{Synthetic simulations}\label{sec:simulations}
We next provide additional synthetic simulations to compare the robustness performance of DRIG and DRIG-A+ with competing methods. We consider a setting with $p=10$ covariates and a response variable and simulate $10^4$ observational data according to the linear SCM in \eqref{eq:scm_train}, where $B^\star$ is a randomly generated Erdos-Renyi directed acyclic graph and $\varepsilon\sim\cN(0,S^0)$ with $S^0$ being a randomly sampled positive definite matrix. Details of the sampling scheme are given in Supplementary~\ref{app:exp_detail}. We also simulate $10^4$ interventional data each from three environments, governed by SCMs \eqref{eq:scm_train}, where $\delta^e\sim\cN(\mu^e,S^e)$, $e=1,2,3$. Finally, we generate 20 test environments according to SCM \eqref{eq:scm_test}, where $B^\star$ is the same as the training SCM, while the intervention variables in the test environment are generated according to $v_j\sim\cN(\sqrt{\alpha}\mu_j,\alpha{S}_j),j=1\dots,20$ where the scalar $\alpha >0$ controls the perturbation strengths in the test environment. We consider the following two scenarios within the setting described above:
\begin{enumerate}
\item covariate-intervened case with interventions on $X$ but no intervention on $Y$ or $H$: here, we set the last entry of $(\mu^e,\mu_j), e = 1,2,3, j = 1,2, \dots, 20$ and the last row and column of $(S^e,S_j), e = 1,2,3, j = 1,2,\dots,20$ to zero, and choose the remaining components at random.
\item all-intervened case with interventions on all of $X$, $Y$, and $H$: the vectors $(\mu^e,\mu_j)$ and the matrices $(S^e,S_j)$ are chosen at random for every $e = 1,2,3$ and $j = 1,2,\dots,20$.
\end{enumerate}
Given a training data distribution, we repeat the process of drawing training samples, as described above, for 50 times and report the average performance.

We apply our proposed methods as well as existing approaches on the training data to obtain linear prediction models. We then compute the population MSE of each estimated model on each of the $20$ test environments, and report the worst-case error among all of the environments. For DRIG and anchor regression, we consider three schemes for choosing the regularization strength $\gamma$: a fixed $\gamma=10$, an oracle choice of $\gamma$ for each test environment that gives the smallest MSE on that environment, and our proposed DRIG-A+ that chooses a matrix $\Gamma$ for each test environment by exploiting a small test sample of size $50$ from that environment. For DRIG and DRIG-A+, we assign uniform weights to each environment, i.e., $\omega^e=1/4$.

Figures~\ref{fig:simu_no_interv_y}-\ref{fig:simu_interv_y} present the worst-case test MSEs for varying perturbation strengths $\alpha$ in the test distributions, where we plot the mean of the worst-case errors over the 50 random repetitions with the 95\% bootstrapped confidence intervals. Overall, DRIG estimators tend to be the most competitive method. With either a fixed or the oracle choice of $\gamma$, DRIG exhibits better performance than anchor regression with the same scheme of choosing $\gamma$. Anchor regression, while better than the OLS estimators, offers limited advantages compared to DRIG. This suggests that DRIG achieves better distributional robustness, potentially due to its ability to exploit heterogeneity in the variances. 

Regarding the selection of hyperparameter $\gamma$, DRIG with a fixed $\gamma>1$ can already yield satisfying robust performance compared to baseline approaches, especially in the causal-identifiable case, while the oracle choice further enhance the advantage. As shown in panel (b) in both figures, the oracle $\gamma$ monotonically increases with respect to the perturbation strength, which aligns with the earlier message that a larger $\gamma$ enhances robustness against stronger perturbations. More interestingly, our DRIG-A+ that leverages additional test information consistently stands out as the best-performing method due to its more flexible and adaptive regularization scheme. These observations suggest that in practice a fixed $\gamma>1$ could already lead to reasonably well robustness compared to OLS; when a small number of samples from the test distribution is available, we further improve the robustness performance by DRIG-A+.

The causal parameter exhibits invariant performance regardless of the perturbation strength in the covariate-intervened case, but performs significantly worse than the other methods when all variables are intervened on. 



\begin{figure}
\centering
\begin{tabular}{@{}c@{}c@{}}
	\includegraphics[width=0.5\textwidth]{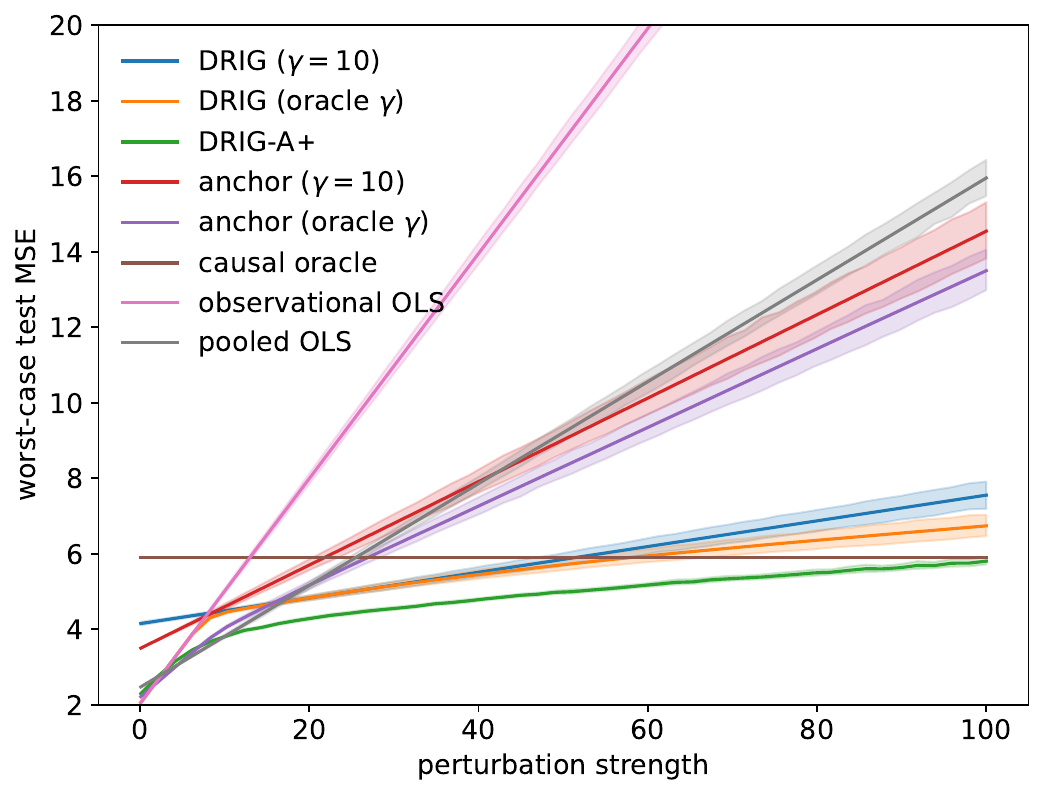} & \includegraphics[width=0.38\textwidth]{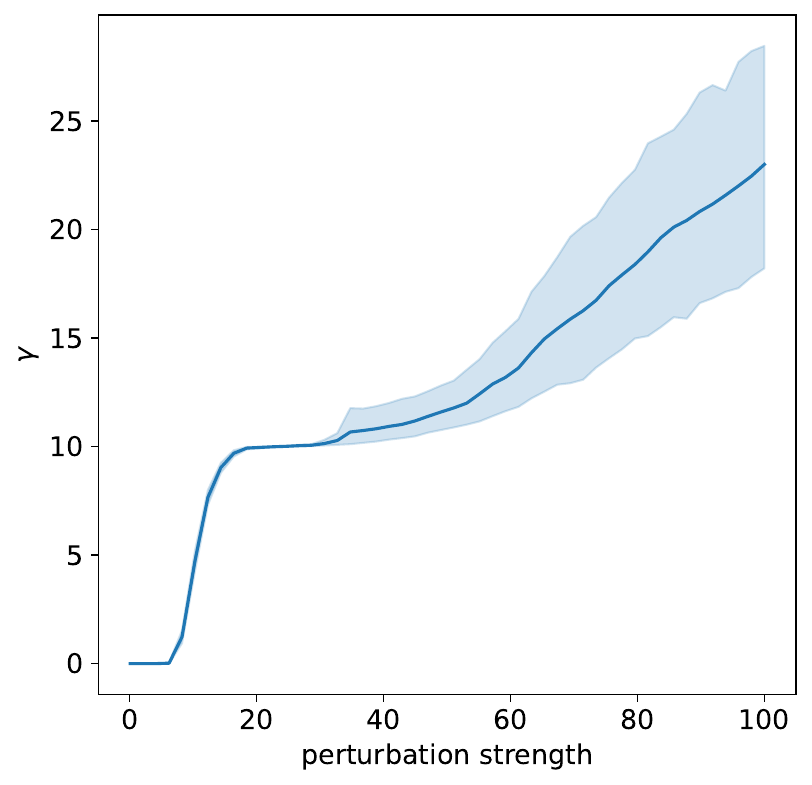}\\
	\small{(a) worst-case test MSE} & \small{(b) oracle $\gamma$ for DRIG}
\end{tabular}
\caption{(a) the worst-case test MSEs for varying perturbation strengths in the covariate-intervened case; (b) the oracle $\gamma$ for DRIG for different perturbation strengths. {Lines represent the means and 2.5\% and 97.5\% qunatiles.}}
\label{fig:simu_no_interv_y}
\end{figure}

\begin{figure}
\centering
\begin{tabular}{@{}c@{}c@{}}
	\includegraphics[width=0.5\textwidth]{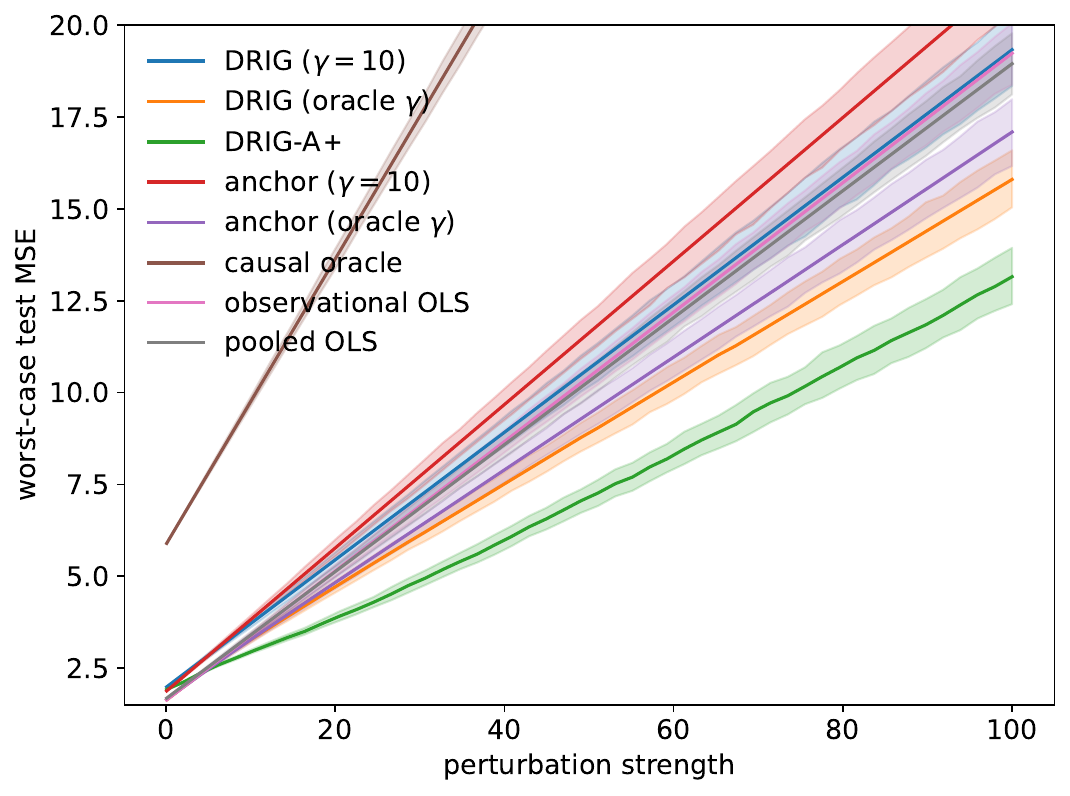} & \includegraphics[width=0.38\textwidth]{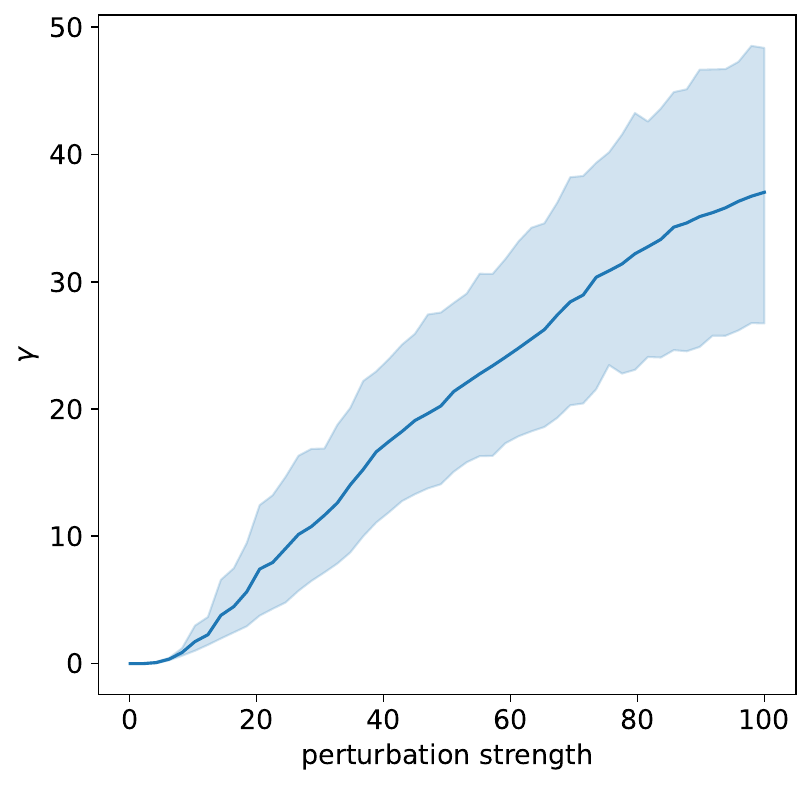}\\
	\small{(a) worst-case test MSE} & \small{(b) oracle $\gamma$ for DRIG}
\end{tabular}
\caption{Same plots {as in Figure~\ref{fig:simu_no_interv_y}} for the all-intervened case. }
\label{fig:simu_interv_y}
\end{figure}


\subsection{Illustrations for DRIG-A+}
\label{sec:illustration_DRIG_A+}

\begin{example}\label{ex:adap}
We set $p=20$ and two training environments $e=0,1$ with a randomly sampled mean vector $\mu^1$ and covariance matrices $S^0$ and $S^1$, where the last rows, or columns are zeros, indicating no interventions on $Y$. Details of the sampling scheme are given in Appendix~\ref{app:exp_detail}. Consider a test distribution following  SCM \eqref{eq:scm_test} with $\bbE[vv^\top] = \alpha G^v$, where $G^v$ is a randomly sampled positive definite matrix whose last row and column are zeros. We assume a small labeled test sample of size $n_l = 50$ and population of $X^v$ (i.e., $n_u\to\infty$). 
\end{example}

In Figure~\ref{fig:adap}(b), we plot the test MSEs of various methods including the methods that make use of the test samples (DRIG-A+, test OLS and the population versions of them), the baseline approaches that only use the training data, and the oracle causal parameter. Compared to the test OLS estimator, DRIG-A+ consistently yields much smaller test MSEs, which is aligned with Theorem~\ref{thm:adap_mse}. Furthermore, compared to the other methods that do not leverage the test data, DRIG-A+ has better predictive performance; we show in Appendix~\ref{sec:simulations} that DRIG-A+ remains superior even if an oracle choice of $\gamma$ that minimizes test MSE is used in anchor regression and DRIG. Finally, the causal parameter, while invariant across all test perturbations, is overly conservative under moderate and weak perturbations. 

\begin{figure}[h!]
\centering
\includegraphics[width=0.4\textwidth]{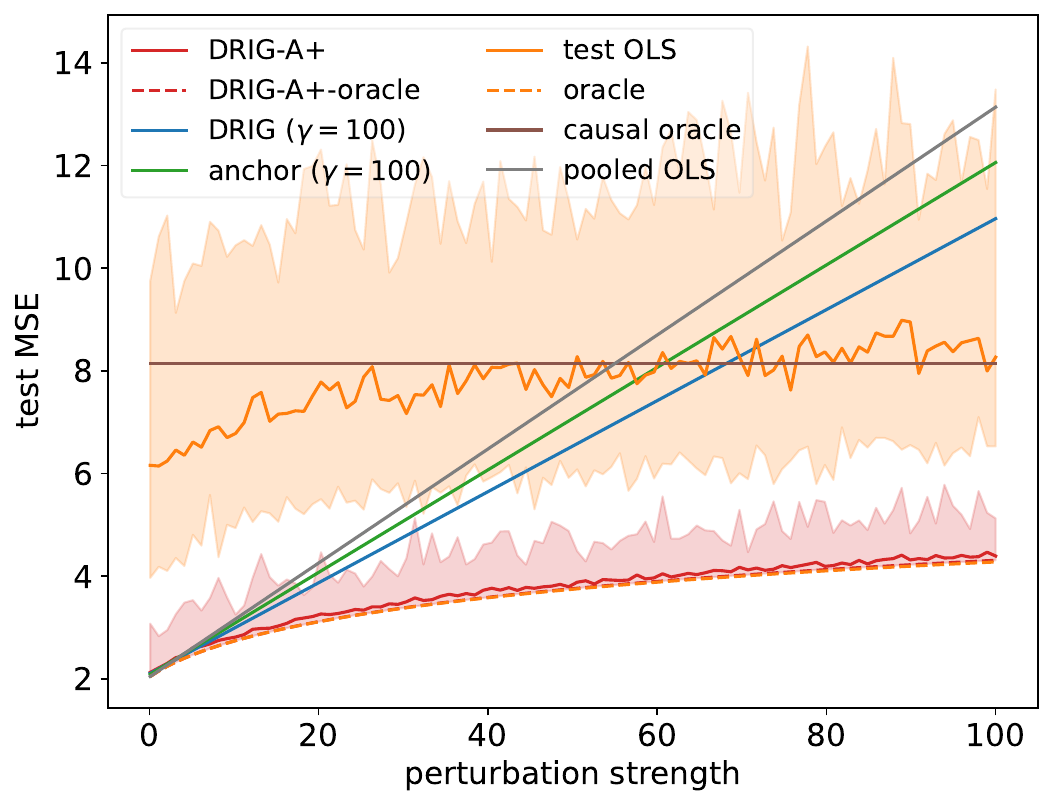} \vspace{-0.2in}
\caption{\small Test MSE for varying perturbation strengths $\alpha$ of various methods.``DRIG-A-oracle" and ``oracle" refer to ${b}_{\hat{\Gamma}}^\mathrm{opt}$ and $\hat{b}_{\mathrm{tOLS}}$ as $n_u,n_l\to\infty$, respectively. 
For the DRIG-A+ and test-OLS estimators, we randomly draw a labeled sample size of $n_l = 50$ from the test distribution. The DRIG-A+ estimator is obtained based on the labeled test sample and training population, and test-OLS is obtained from the labeled test sample. We repeat this procedure for 50 times and show the median test MSE along with the 2.5\% and 97.5\% quantiles.}\label{fig:adap}
\end{figure}

\subsection{Single-cell data}\label{app:singlecell}
Figure~\ref{fig:variances} shows the variances of all observed variables in each environment, shedding light on the heterogeneity of gene expression across different interventions. We observe that the last variable is the only one that consistently exhibits a higher variance in interventional environments than in the observational environment. Also, when intervening on the last variable, we barely see increases in the variances of the other variables. This observation roughly suggests that interventions on the last gene have limited impact on the variability of the other genes, supporting the conjecture that the last gene may act as a leaf node in the causal graph among the 10 observed genes. Based on this reasoning, we select the last gene as our response variable and consider the remaining {9} genes as covariates. 

Next, we investigate how the methods perform differently on test environments generated by some specific interventions. In Figure~\ref{fig:sc_specific_test_env}, we show the MSEs on several test environments with different patterns of interventions. In the first row of the figure, we observe that the MSE decreases as $\gamma$ increases, which, according to our theory, suggests that these interventions are relatively strong. In the middle row, the MSE initially decreases and then increases with increasing $\gamma$, indicating a moderate perturbation strength. In the bottom row,  we observe that the MSE consistently grows with $\gamma>0$, which suggests that these environments are likely to be close to the observational environment.

\begin{figure}
\centering
\includegraphics[width=0.4\textwidth]{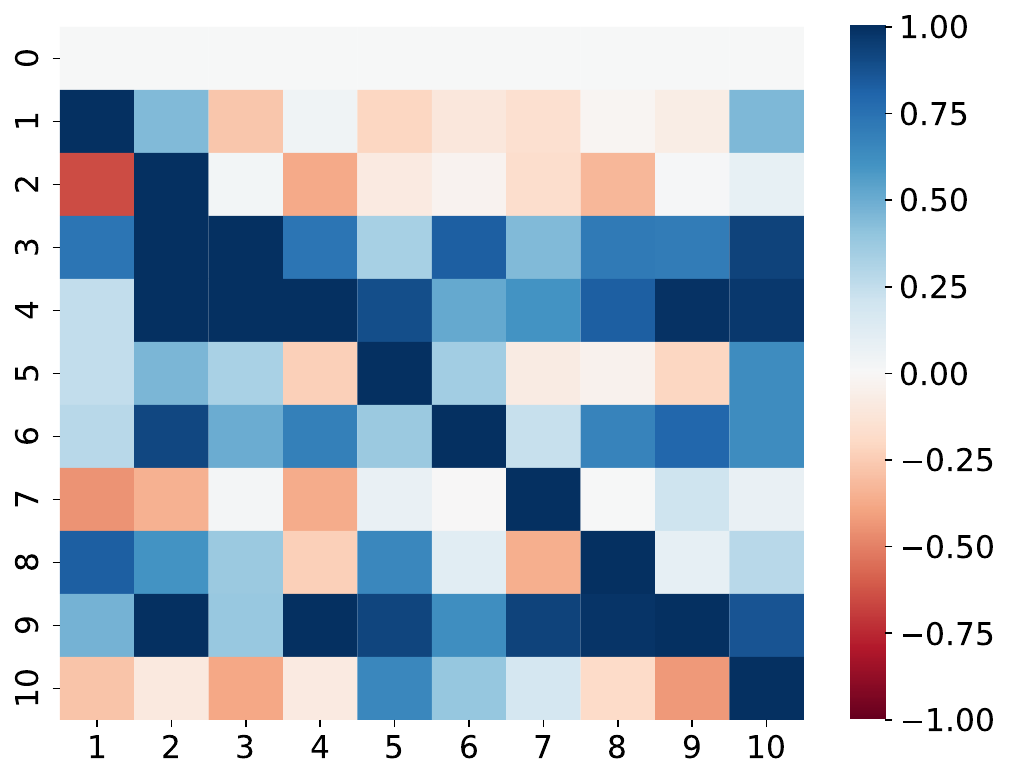}
\caption{Variances of each observed variable in each environment. For the purpose of illustration, we take a transformation $h(v_i)=\mathrm{tanh}(2(v_i/v_0-1))$, where $v$ is the variance of a variable on the $i$-th environment and $v_0$ is its variance on the observational environment.}\label{fig:variances}
\end{figure}

\begin{figure}
\centering
\includegraphics[width=0.8\textwidth]{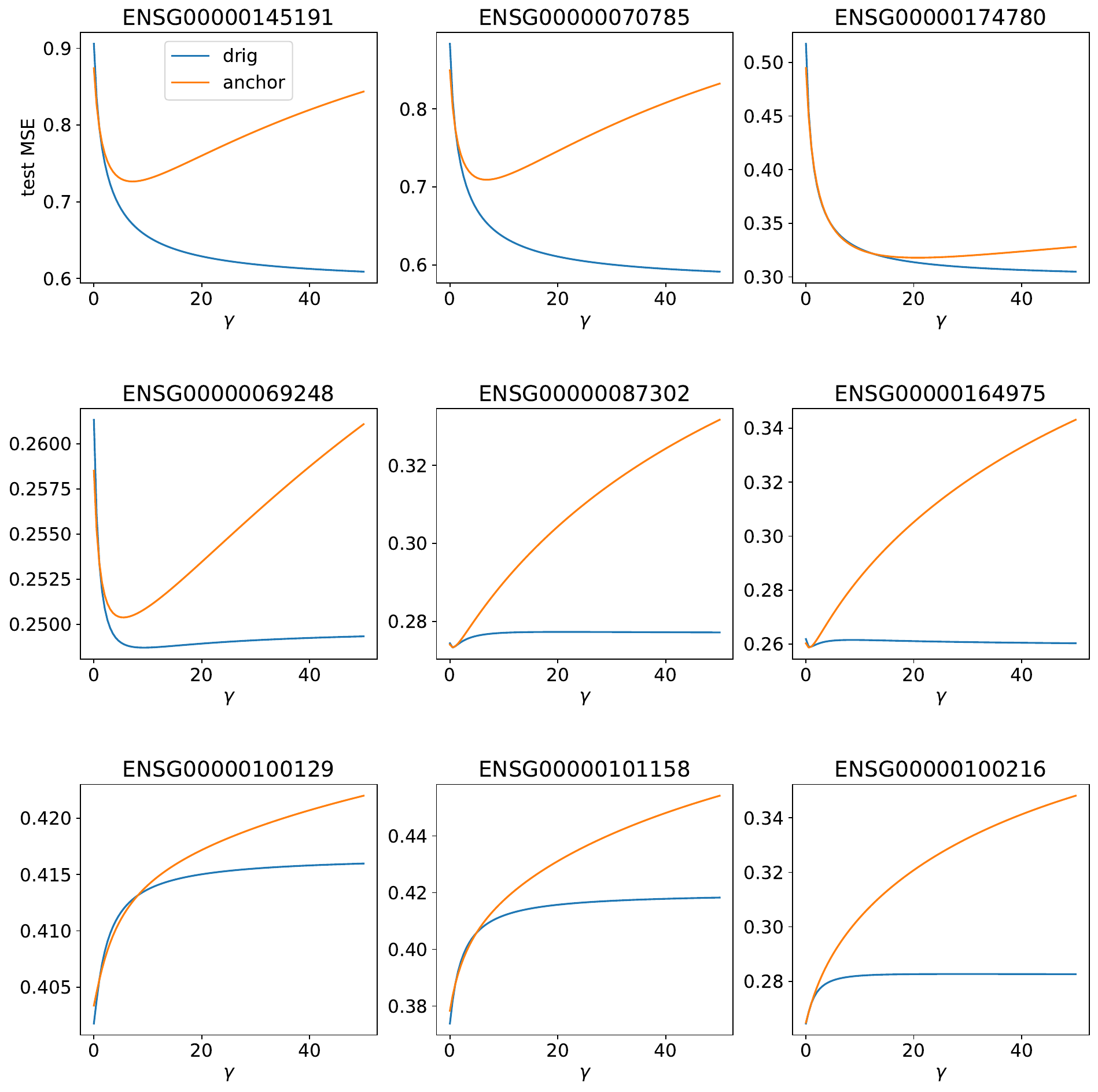}
\caption{Performance {of DRIG} on several specific test environments with different patterns of interventions.}\label{fig:sc_specific_test_env}
\end{figure}

\subsection{Optimization}
\label{sec:optimization}
In all our numerical experiments, we use the Adam optimizer with a learning rate of $10^{-3}$ and train each model for 10k iterations. 
We show some numerical examples for optimization. In the settings with an observational environment, DRIG has a closed form solution. We hence check the convergence of the gradient descent algorithm to the analytical optimal solution. In Figure~\ref{fig:opt}, we plot the convergence curve of the loss $\cL(b)$ in \eqref{eqn:DRIG} and the bias $\|b-b^{\rm opt}\|$ between $b$ at each iteration and the global optimizer $b^{\rm opt}$ using the closed form solution. 
\begin{figure}
	\centering
	\begin{tabular}{@{}c@{}c@{}}
		\includegraphics[width=0.5\textwidth]{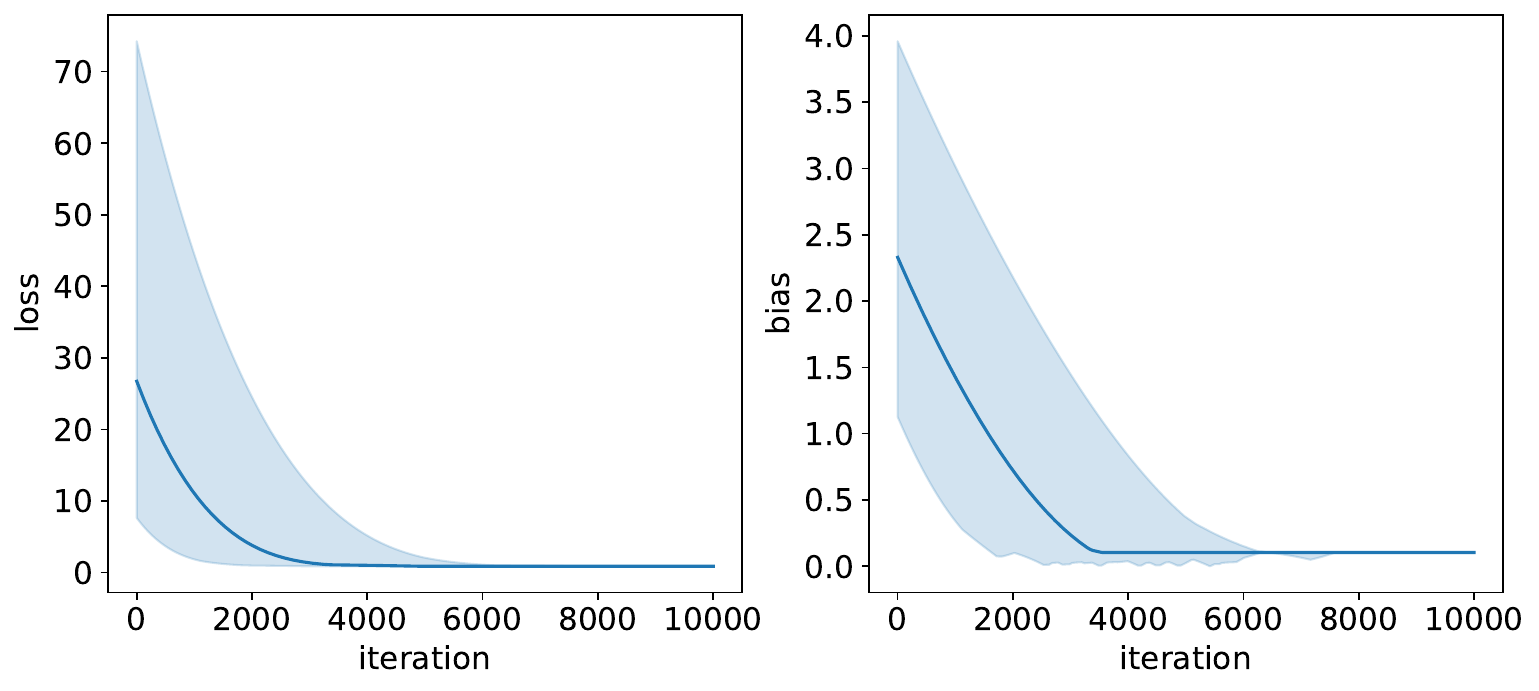} & 
		\includegraphics[width=0.5\textwidth]{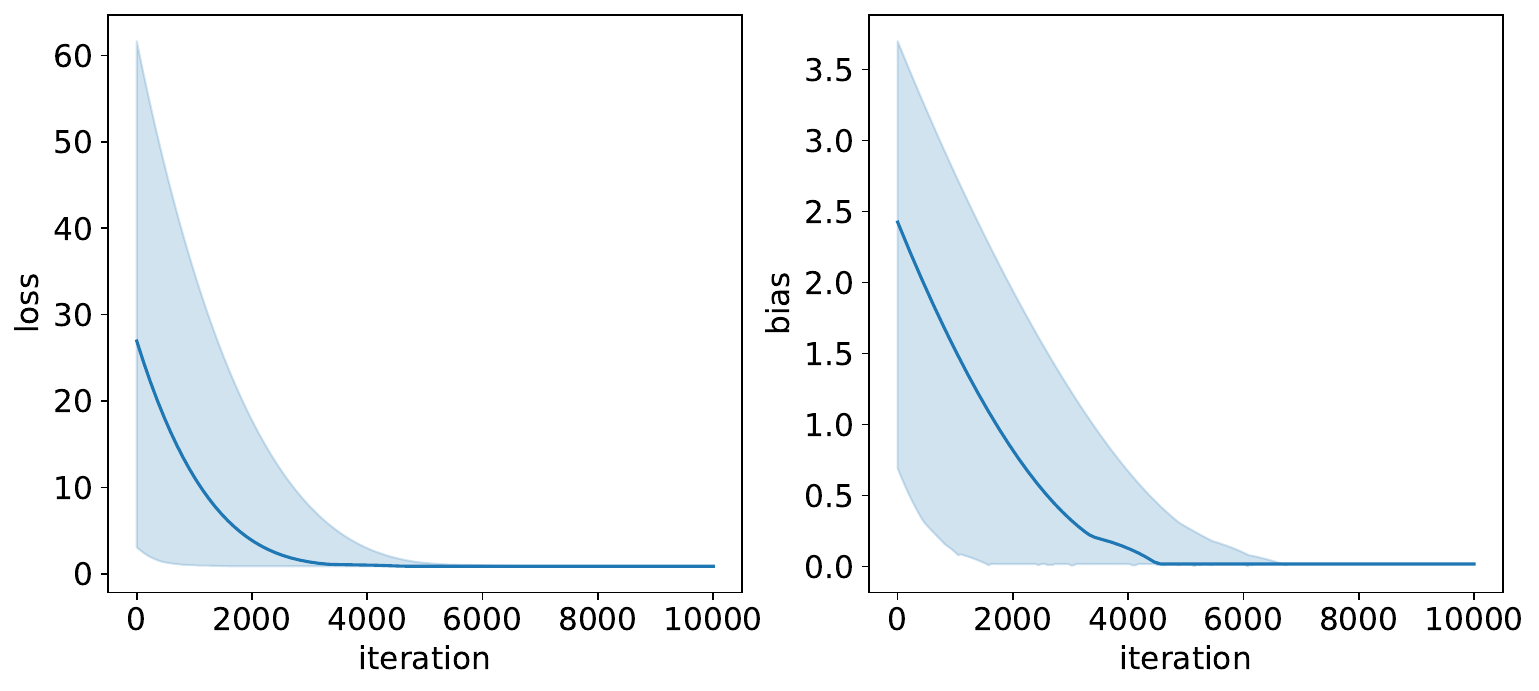}\vspace{-0.2in}\\
		\small{Example~\ref{ex:robust1}} & \small{Example~\ref{ex:robust2}}
	\end{tabular}
	\caption{Convergence curve of the loss and absolute bias with respect to training iterations from 20 random initializations for each example.}\label{fig:opt}
\end{figure}

\section{Experimental details}\label{app:exp_detail}
\subsection{Simulations}
We describe how we sample the mean vectors and covariance matrices for the noise $\varepsilon$ and the intervention variables $\delta^e$ in Examples \ref{ex:adap} and simulations in Section~\ref{sec:simulations}. We sample the components of the mean vectors independently form $\mathrm{Unif}[0,1]$. For the covariance matrices, we first sample a random matrix $\tilde{S}$ whose components are independently drawn form $\mathrm{Unif}[0,1]$ and then get the covariance matrix by $\tilde{S}\tilde{S}^\top$. To explicitly control the perturbation strength, we normalize the means and covariances of the interventions variables to always have vector or matrix 2-norm 1.  If $Y$ and $H$ are assumed not to be intervened on, we set the last component of the mean vectors and the last row and column of the covariance matrices to zero. For simulations in Section~\ref{sec:simulations}, we sample the mean vectors and covariance matrices of all interventions variables $\delta^e,e=1,2,3$ in training environments as well as $v$ in test environments. To ensure there is sufficient amount of heterogeneity among training environments, we multiply the mean vectors of $\delta^e$ by a factor of $\sqrt{10}$ and multiply the covariance matrices by a factor of $10$. Note that during test, we vary the perturbation strength from 1 to 100, as shown in Figures~\ref{fig:simu_no_interv_y}-\ref{fig:simu_interv_y}. Thus, the perturbation strength during test exceeds substantially that during training, resulting in a challenging task for robustness. 

\subsection{ICU data}
We select covariates with less than 10\% observations missing, which leads to 17 variables: blood urea nitrogen (bun), calcium (ca), chloride (cl), creatinine (crea), glucose (glu), hemoglobin (hgb), heart rate (hr), potassium (k), mean arterial pressure (map), sodium (na),
       oxygen saturation (o2sat), respiratory rate  (resp), white blood cell count (wbc), age, sex, height, and weight. For the 14 variables among them with missing data, we impute the missing entries them with a constant (zero) and add a binary indicator for the missingness. Then we use all 31 variables as covariates to predict the outcome. 
eICU dataset consists of four regions: midwest, south, west, and northeast, which are used as four training environments. 

\end{document}